\newcommand{\tr}{\text{Tr}}
\newcommand{\ad}{\text{ad}}
\newcommand{\Q}{\mathcal{Q}}
\newtheorem{thm}{Theorem}[section]
\newtheorem{lem}{Lemma}[section]
\newtheorem{prop}{Proposition}[section]
\newtheorem{defi}{Definition}[section]
\begin{document}

\title{Explicit construction of quasiconserved local operator of translationally invariant nonintegrable quantum spin chain in prethermalization}

\author{Cheng-Ju Lin}
\author{Olexei I. Motrunich}
\affiliation{Department of Physics and Institute for Quantum Information and Matter, California Institute of Technology, Pasadena, California 91125, USA}

\date{\today}

\begin{abstract}
We numerically construct translationally invariant quasi-conserved operators with maximum range $M$ which best-commute with a non-integrable quantum spin chain Hamiltonian, up to $M = 12$.
In the large coupling limit, we find that the residual norm of the commutator of the quasi-conserved operator decays exponentially with its maximum range $M$ at small $M$, and turns into a slower decay at larger $M$.
This quasi-conserved operator can be understood as a dressed total ``spin-z'' operator, by comparing with the perturbative Schrieffer-Wolff construction developed to high order reaching essentially the same maximum range.
We also examine the operator inverse participation ratio of the operator, which suggests its localization in the operator Hilbert space.
The operator also shows an almost exponentially decaying profile at short distance, while the long-distance behavior is not clear due to limitations of our numerical calculation.
Further dynamical simulation confirms that the prethermalization-equilibrated values are described by a generalized Gibbs ensemble that includes such quasi-conserved operator.
\end{abstract}

\maketitle


\section{\label{sec:intro} Introduction}
Two decades ago, the eigenstate thermalization hypothesis (ETH) was proposed as a mechanism accounting for the validity of the statistical mechanics in isolated quantum systems.\cite{deutsch_quantum_1991, srednicki_chaos_1994} 
In contrast, many-body localization (MBL) refers to a class of interacting systems that fail to thermalize due to the presence of strong disorder.
Phenomenologically, MBL systems can be viewed as having an extensive number of local integrals of motion,\cite{serbyn_local_2013, huse_phenomenology_2014,PhysRevX.4.011052, chandran_constructing_2015, obrien_explicit_2016} analogous to integrable quantum systems.

Many research works have proposed systems ``in between,'' namely, systems that fail or partially fail to thermalize but are disorder-free.
For example, Ref.~\onlinecite{grover_quantum_2014} proposed a phase of matter called ``quantum disentangled liquid,'' where the system is composed of heavy degrees of freedom and light degrees of freedom, and where after a partial measurement the light degrees of freedom will localize around the heavy degrees of freedom.
Recent numerical and theoretical works provided some support for the existence of such phases of matter.\cite{garrison_partial_2017, veness_quantum-disentangled_2016}
Other studies observed that in some such systems the dynamics shows behavior similar to MBL systems.\cite{schiulaz_dynamics_2015, van_horssen_dynamics_2015, yao_quasi-many-body_2016}

While numerous proposals have tried to realize MBL in translationally invariant systems, it was argued that this cannot happen in the true sense of MBL.
However, some phenomenological aspects of MBL can still be realized in such systems.\cite{de_roeck_asymptotic_2013, de_roeck_asymptotic_2014, de_roeck_scenario_2014, de_roeck_can_2014, bols_asymptotic_2016}
That is, when one performs some dynamical simulation in such a system, the system will appear to be localized at some intermediate time scale, but will delocalize eventually.
Therefore, one may view such ``quasi-localization'' or ``asymptotic localization'' as prethermalization, where the system equilibrates to a state which is described by a Gibbs ensemble controlled by some effective Hamiltonian (instead of the original Hamiltonian) at some intermediate time, and truly thermalizes only at much later time.
Nevertheless, a recent work has proposed another model with translational invariance and has claimed to find true disorder-free localization,\cite{smith_disorder-free_2017} so this question is still open.

Prethermalization has been observed and studied in many different systems.
In particular, various works showed that systems with weak integrability breaking exhibit this phenomenon.\cite{essler_quench_2014, bertini_thermalization_2016, langen_prethermalization_2016}
In addition, prethermalization has been shown rigorously to exist in periodically driven many-body systems under strong driving frequencies using the Floquet-Magnus expansion\cite{kuwahara_floquet-magnus_2016, mori_rigorous_2016} and renormalization technique.\cite{abanin_effective_2017, abanin_rigorous_2017}
The latter also applies to time-independent many-body systems, and in particular can be used to prove rigorously the presence of exponentially long relaxation times of ``particles'' such as doublons in the Hubbard model in the strong coupling limit.\cite{PhysRevLett.115.205301,PhysRevB.82.224302,PhysRevLett.104.080401}
There are also very recent proposals utilizing the prethermalization to protect the edge modes in the topological superconductor.\cite{kemp_long_2017, else_prethermal_2017}

In fact, we can view most of the aforementioned prethermalization systems as having quantities with hierarchically different thermalization time scales or having different rates of dynamics.
Upon time evolution, the fast degrees of freedom relax very quickly, while the slow degrees of freedom evolve slowly during this initial period.
This results in the apparent prethermalization stage, where the slow degrees of freedom appear to be frozen.
These quantities with slow dynamics can be viewed as {\it quasi-conserved}.\cite{fagotti_conservation_2014, mierzejewski_approximate_2015}
Emergence of such a quasi-conserved quantity is what accounts for the prethermalization stage.
If such a quantity could develop an exact conservation law, this would extend the prethermalization to infinitely long time and would correspond to partial breakdown of the ETH, as envisioned, e.g., in Refs.~\onlinecite{grover_quantum_2014, garrison_partial_2017}.

Motivated by this point of view, in this paper we numerically systematically search for such hidden quasi-conserved quantities which cannot be directly identified from the Hamiltonian itself.
Following the ``slowest operator formalism'' introduced in Ref.~\onlinecite{kim_slowest_2015}, we numerically construct the quasi-conserved local operator for the non-integrable spin model
\begin{equation}\label{eqn:hamiltonian}
H = \sum_{j = -\infty}^\infty \big( J Z_j Z_{j+1} + h Z_j + g X_j \big) ~,
\end{equation}
where $X_j$, $Y_j$, and $Z_j$ denote Pauli matrices operating on site $j$ of the one-dimensional chain.
We constrain our slowest operator to be translationally invariant and represented as a sum of local terms.
We find that, in the large $g$ limit, there exists a quasi-conserved operator whose thermalization time scale increases exponentially as one increases its maximum range up to some point.
Furthermore, the operator can be understood as a dressed ``total spin-z operator'' (for appropriately chosen spin axes).
This operator has a very slow dynamics compared to other quantities.
We also simulate the dynamics of the quantum spin chain following a quench and confirm that this quasi-conserved quantity has a non-trivial effect.
Specifically, at intermediate times, the system equilibrates to a state which can be described by a generalized Gibbs ensemble (GGE) that includes such a quantity as an ``integral of motion.''
While our study cannot reach infinite maximum range, we find that the rate of decrease of the slowest operator with the maximum range becomes weaker beyond some point and starts resembling behavior observed in regimes of good thermalization.
A conservative interpretation of this behavior is that our system shows only prethermalization with very long time scale.
Nevertheless, the available data does not rule out a more exotic possibility that the slowest operator converges and becomes exactly conserved in the thermodynamic limit, which would indicate breakdown of the ETH.

The paper is organized as follows.
In Sec.~\ref{sec:method}, we briefly describe the formalism we use to search for the slowest operator in the translationally invariant setting.
In Sec.~\ref{sec:scaling}, we present our numerical results focusing on the scaling of the ``residual norm'' (i.e., norm of the commutator with the Hamiltonian) versus the maximum range of the operator.
In the large coupling limit, we find that the residual norm shows exponential decay at least on short distances and identify the slowest operator as quasi-conserved operator.
As a comparison, in Sec.~\ref{sec:iterative}, we use the Schrieffer-Wolff approach to perturbatively construct a quasi-conserved operator, which can be understood as a dressed total spin-$z$ operator.
We find that in the large coupling limit, the overlap between the perturbative construction and exact numerical construction of the slowest operator is almost 100\%; thus we understand the nature of the slowest operator in this regime, at least up to some value of the maximum range.
In Sec.~\ref{sec:characteristic}, we examine the operator inverse participation ratio and the weight distribution in the slowest operator at different distances, demonstrating its localization in the operator space and real space.
To verify the conjecture that this quasiconserved quantity results in prethermalization, we explicitly simulate a quench dynamics in Sec.~\ref{sec:dynamics} and confirm the importance of the quasiconserved quantity when describing the equilibrated values at intermediate time.
Finally, in Sec.\ref{sec:conclusion}, we summarize and discuss some outstanding questions.
Several appendices all focus on the Schrieffer-Wolff approach:
Appendix~\ref{app:ladderformalism} presents a ladder algebra formalism convenient for analytical calculations at low order.
Appendices~\ref{app:boundHn} and \ref{app:boundadHI} present some analytical bounds on the convergence of the Schrieffer-Wolff procedure, while Appendix~\ref{app:conv_radius} presents better bounds calculated numerically.
Finally, Appendix~\ref{app:Vmnorm} compares these bounds with exact numerical calculations, finding that the former are gross overestimations; we trace possible origins of these overestimations and consider how one might improve upon them and speculate about implications for the Schrieffer-Wolff approach.

\section{\label{sec:method} Method of the slowest operator}
Our motivation is to numerically search for the operator that ``best-commutes'' with the Hamiltonian.
We focus on translationally-invariant Hermitian operators obtained as sums of local terms and adopt the formalism of Ref.~\onlinecite{kim_slowest_2015}.
We restate this approach as a problem in the operator Hilbert space as follows.

We consider traceless, and translationally-invariant operators with maximum range $M$,
\begin{equation}
\Q^{(M)} = \sum_{j = -\infty}^\infty q_j^{(M)} ~,
\end{equation}
where $q_j^{(M)}$ is an operator with support on a region extending from site $j$ to site $j + M - 1$.
We denote the space of traceless translationally-invariant operators with maximum range $M$ as $\mathcal{T}_M$.
The operator space $\mathcal{T}_M$ is a vector space, as one can easily verify.
A natural basis for $q_j^{(M)}$ is provided by ``Pauli string operators,'' i.e., operators of the form $\prod_{k = j}^{j + M - 1} A_k$ where $A_k$ can be $I$, $X$, $Y$, or $Z$ acting on site $k$, and $A_k$ are independent for different $k$.
However, there is a ``gauge degree of freedom'' for the representation of $q_j^{(M)}$.
For instance, we can write $H = \sum_j q_j \in \mathcal{T}_2$ using $q_j = J Z_j Z_{j+1} + h Z_j I_{j+1} + g X_j I_{j+1}$ or $q_j = J Z_j Z_{j+1} + h I_j Z_{j+1} + g I_j X_{j+1}$, etc.
We fix the gauge by requiring the operator $A_k$ on the first site, $k = j$, to be non-identity in every Pauli string basis vector, i.e., $A_j$ can only be $X$, $Y$, or $Z$, while $A_{k > j}$ can be $I$, $X$, $Y$, or $Z$.
This also automatically satisfies the tracelessness condition.
The Hermiticity condition of an operator just corresponds to the condition of real coefficients in this basis.
It is now easy to see that the dimension of $\mathcal{T}_M$ is $\text{dim}(\mathcal{T}_M) = 3 \cdot 4^{M-1}$.

We define the Frobenius inner product (also know as Hilbert-Schmidt inner product) on the operator space $\mathcal{T}_M$ as
\begin{equation}
\langle \Q, \Q' \rangle = \frac{\tr[q_j^\dagger q^\prime_j]}{\tr[I^{\otimes M}]} ~,
\label{innerprod}
\end{equation}
where $q_j, q'_j$ are understood in the above gauge acting on $M$ sites only and $I^{\otimes M}$ is the identity operator also acting on $M$ sites.
One can easily see that the aforementioned Pauli-string operators are advantageous as they form an orthonormal basis under this inner product.
The above inner product defines the norm $\|\Q\|_\text{F} \equiv \sqrt{\langle \Q, \Q \rangle}$, which we can view as an ``intensive Frobenius norm'' (see below).
For example, $\|H\|_\text{F} = \sqrt{J^2 + g^2 + h^2}$.
Note that instead of the conventional definition of the operator inner product, here we only take the local piece $q_j$ in the trace calculation after the gauge fixing.
This definition has the advantage that the norm is ``intensive,'' compared to the conventional definition of Frobenius norm that would increase with the system size.
In fact, if we consider a chain of length $L$ with periodic boundary conditions and operators $\Q^{(M)} = \sum_{j=1}^L q_j$ (assuming $M < L$), we can easily verify that the above inner product is simply appropriately scaled conventional Frobenius inner product: 
\begin{equation}\label{eqn:general_inner}
\langle \Q, \Q' \rangle = \tr[\Q^\dagger Q']/(L\, \tr[I^{\otimes L}])~.
\end{equation}
In other words, Eq.~(\ref{innerprod}) is obtained from Eq.~(\ref{eqn:general_inner}) when applied to this ``gauge-fixing'' writing of the translationally-invariant operators.
If one does not use the gauge-fixing, one should use Eq.~(\ref{eqn:general_inner}) to calculate the inner product.
In what follows, we will always use only the intensive Frobenius norm, often dropping the descriptor ``intensive'' for brevity.

A natural embedding $\mathcal{T}_M \subset \mathcal{T}_N$ for $M < N$ is obtained by the tensor product with the identities, $q_j^{(N)} = q_j^{(M)} \otimes I_{j+M} \otimes \dots \otimes I_{j+N-1}$, where $\sum_{j}q_j^{(M)} \in \mathcal{T}_M$ and $\sum_{j}q_j^{(N)} \in \mathcal{T}_N$.
We will not emphasize the difference between $\sum_j q_j^{(M)}$ and $\sum_j q_j^{(N)}$, since it only depends on what operator space one is considering, while the inner product in Eq.~(\ref{innerprod}) is independent of the embedding.
We can further consider the norm closure $ \overline{\bigcup_{M \in  \mathbb{N}} \mathcal{T}_M}$, which is a mathematically well-defined Hilbert space.

The commutator with a fixed operator can be viewed as a linear map between the operator spaces.
We define the superoperator
\begin{equation}
\ad_A(O) \equiv [A, O] ~.
\end{equation}
Clearly, $\ad_H$ is a linear map from the operator space $\mathcal{T}_M$ to space $\mathcal{T}_{M+1}$, since $H \in \mathcal{T}_2$.
In fact, for any operator $A \in \mathcal{T}_r$ and $O \in \mathcal{T}_s$, we have $\ad_{A}(O) \in \mathcal{T}_{r + s - 1}$.
Using the Pauli string basis, we can write down the matrix representation $\mathbf{B}$ for $\ad_H$, which in general will be a $3 \cdot 4^M \times 3 \cdot 4^{M-1}$ matrix.
We want to find an operator in ${\cal T}_M$ that ``best commutes'' with the Hamiltonian, which we define as minimizing the {\it residual norm} $\|\ad_H(\Q^{(M)})\|_\text{F}$ under the constraint $\|\Q^{(M)}\|_\text{F} = 1$.
This corresponds to finding the smallest singular value $\sigma_0$ of $\mathbf{B}$, or the smallest eigenvalue $\lambda_0$ of $\mathbf{C} \equiv \mathbf{B}^\dagger \mathbf{B}$, where $\lambda_0 = \sigma_0^2$.
The corresponding eigenoperator is the sought-for {\it slowest operator}; we will denote this operator as $\Q_0^{(M)}$ and the corresponding eigenvalue as $\lambda_0(M)$, which will be the squared residual norm of the slowest operator.
To avoid the trivial zero-eigenvalue solution given by the Hamiltonian itself, we add $\lambda_h |H \rangle \langle H|$ to $\mathbf{C}$, with large enough $\lambda_h$ such that the slowest operator is nontrivial.
Thus found operator $\Q_0^{(M)}$ is orthogonal to $H$ in the Frobenius inner product.

Note that in the $I$-$X$-$Y$-$Z$ Pauli-string basis, $\mathbf{C}$ is always a symmetric matrix with real coefficients.
This guarantees the eigenvalues to be real, and the eigenvectors can be chosen with real amplitudes in the $I$-$X$-$Y$-$Z$ Pauli-string basis.
This means that the slowest operator $\Q_0^{(M)}$ can always be chosen to be Hermitian.
In other words, we fix the overall phase of the eigenoperator by requiring the Hermicity of the operator, up to a minus sign.

We can argue that this defines a procedure to find a translationally invariant (quasi)-local conserved quantity in the thermodynamic limit.
Indeed, consider the limit $\lambda_0(\infty) = \lim_{M \to \infty} \lambda_0(M)$.
Since $\lambda_0(M)$ is a decreasing function of $M$ bounded from below by $0$, $\lambda_0(\infty)$ exists.
If $\lambda_0(\infty) = 0$ and $\lim_{M \to \infty} \frac{\Q_0^{(M)}}{\|\Q_0^{(M)}\|_\text{F}}$ exists, then we have a normalizable operator [hence quasilocal or local if $\lambda_0(M) = 0$ for some finite $M$ already] which commutes with the Hamiltonian.
If such (quasi)local conserved quantity does exist, a suitable thermal equilibrium description should include this quantity in the GGE.
On the other hand, even though an arbitrary linear combination of eigenstate projectors $\hat{A} = \sum_E a_E |E \rangle \langle E|$ commutes with the Hamiltonian, $\hat{A}$ can be non-normalizable under our definition of the Frobenius norm.
It is therefore not guaranteed that $\lambda_0(\infty) = 0$.
Furthermore, even if $\lambda_0(\infty) = 0$, we cannot guarantee that the limit $\lim_{M \to \infty} \frac{\Q_0^{(M)}}{\|\Q_0^{(M)}\|_\text{F}}$ exists.
In practice, one can only find $\Q_0^{M}$ with $M$ finite, but we can try to explore these questions by studying behaviors for increasing $M$.

\subsection{Simplifications due to symmetries}
The size of the matrix $\mathbf{C}$ can be further reduced by using time-reversal and parity symmetries.
The time-reversal operation $U_T$ corresponds to the complex conjugation in the $Z$ basis; this maps $Y_j \to U_T^{-1} Y_j U_T = -Y_j$, while leaving the other Pauli operators unchanged.
Therefore, the time-reversal-even (-odd) sector corresponds to even (odd) number of Pauli $Y$ operators in the Pauli string basis respectively.

The matrix $\mathbf{C}$ can be further simplified by utilizing the parity (i.e., mirror) symmetry with respect to the origin.
To illustrate how the parity operation $U_P$ acts on the $I$-$X$-$Y$-$Z$ Pauli-string basis, we consider an example of $S = \sum_j X_j Y_{j\!+\!1} Z_{j\!+\!2} I_{j\!+\!3} \in \mathcal{T}_4$.
Upon parity operation, $S' = U_P^{-1} S U_P = \sum_j X_{-\!j} Y_{-\!j\!-\!1} Z_{-\!j\!-\!2} = \sum_j Z_j Y_{j\!+\!1} X_{j\!+\!2}$, where in the last equality we gauge-fixed the writing of $S'$.
We see that the parity operation $U_P$ acts on the operators in $\mathcal{T}_M$ by reversing the order of operators in each of the Pauli-string basis vector and gauge-fixing the expression.
More specifically, if $S = \sum_j \sigma^{\mu_1}_j \cdots \sigma^{\mu_{r_0}}_{j+r_0-1} I_{j+r_0} \cdots I_{j+r-1} \in \mathcal{T}_r$, where $\sigma^{\mu_1}_j$ and $\sigma^{\mu_{r_0}}_{j+r_0-1}$ can only be $X$, $Y$, or $Z$, then $U_P^{-1} S U_P = \sum_j \sigma^{\mu_{r_0}}_j \sigma^{\mu_{r_0-1}}_{j+1} \cdots \sigma^{\mu_{1}}_{j+r_0-1}I_{j+r_0} \cdots I_{r-1} \in \mathcal{T}_r$.
We can therefore easily form the parity-even and -odd subspaces by forming $O \pm  U_P^{-1} O U_P$ basis vectors.

\subsection{Algorithm}
For small maximum range $M \leq 8$, we exactly diagonalize the matrix $\mathbf{C}$ to find the lowest eigenvalue and the slowest operator.
For larger maximum range $M \geq 9$, iterative methods are preferred since one can construct $\mathbf{C}$ as a sparse matrix.
While Lanczos method is one of the standard iterative algorithms to find the lowest eigenpair, the smallness of the relevant eigenvalues in the large $g$ regime makes the convergence extremely slow.
Fortunately, the positive-definite character of the matrix $\mathbf{C}$ enables us to adapt a conjugate-gradient-based algorithm.
Here, we use the ``locally optimal block preconditioned conjugate gradient method'' from Ref.~\onlinecite{knyazev_toward_2001} to find the lowest eigenpair.

\section{\label{sec:scaling} Scaling of the squared residual norm}

\begin{figure}
\includegraphics[width=1.0\columnwidth]{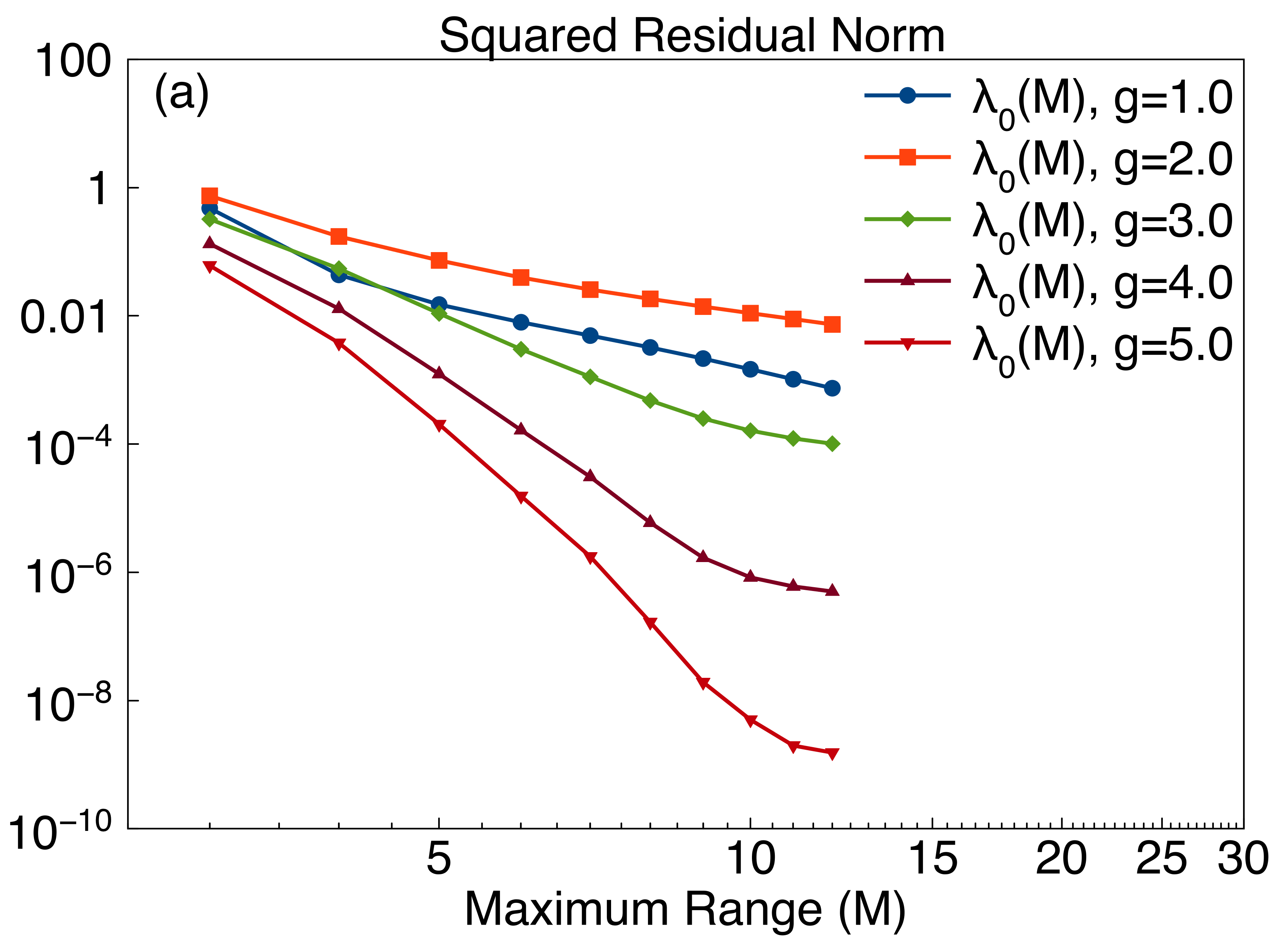}
\includegraphics[width=1.0\columnwidth]{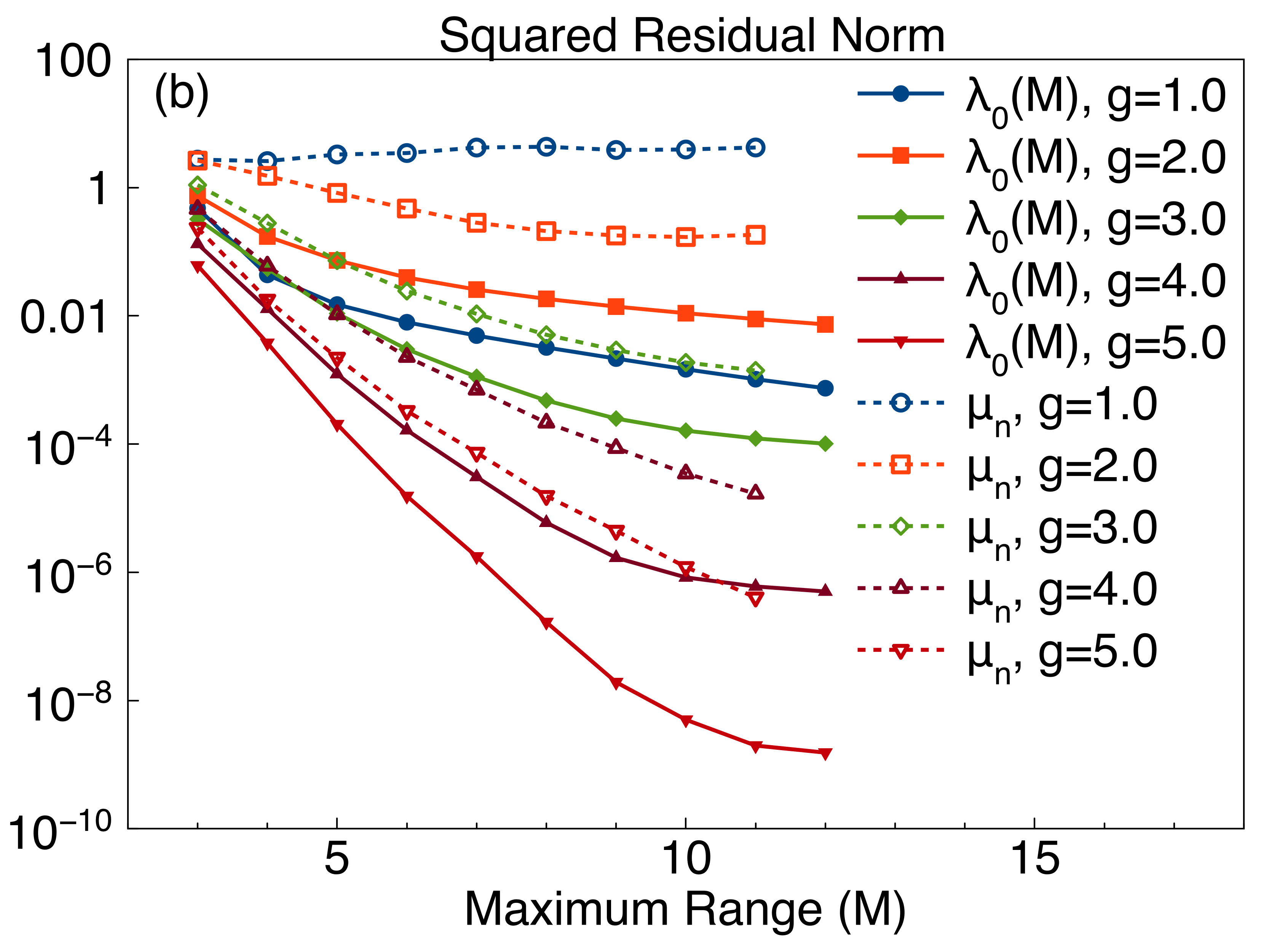}
\caption{\label{fig:scaling}
(color online)
Behavior of the squared residual norm $\lambda_0(M)$ (in units of $J^2$) vs maximum range $M$ on (a) log-log plot and (b) semi-log plot, for model parameters $J = 1.0$, $h = 1.5$, and varying $g$.
For small $g$, $\lambda_0(M)$ decays as power-law in $M$.
For large $g$, it first decays exponentially as one increases $M$, and then turns into a slower trend at larger $M$.
Panel (b) shows additional data from the Schrieffer-Wolff construction of quasi-conserved quantity (see Sec.~\ref{sec:iterative} for details), which can be viewed as a variational bound.
The residual norm $\mu_n$ from the SW construction of order $n$, which corresponds to $M \!=\! n \!+\! 1$ maximum range, shows a classic asymptotic expansion behavior for the smaller $g$ values, where it starts to increase at large order.
While this behavior is not manifest yet for the larger $g$ values, from the observed trends we suspect that $\mu_n$ will also start to increase eventually beyond some order.
}
\end{figure}

Figure~\ref{fig:scaling} shows the $M$-dependence of the squared residual norm $\lambda_0(M)$ on a log-log plot and a semilogrithmic plot.
For small $g$, the dependence is roughly a power law, which is consistent with the result in Ref.~\onlinecite{kim_slowest_2015} in the regime where the system has good ergodic behavior.
On the other hand, for large $g$, $\lambda_0(M)$ first decays exponentially with $M$ but then turns into a slower decay at larger $M$.
The exponential decay was also observed in the case of such ``slowest operator'' construction in the MBL phase.\cite{obrien_explicit_2016}
This exponential behavior differentiates the speed of the dynamics of this operator compared to other quantities.
As one increases the maximum range, one can optimize the residual norm exponentially better, which also indicates longer thermalization time scale, since the residual norm is related to the speed of the dynamics of the operator (see Sec.~\ref{subsec:Opnorm} below).
We therefore expect this quantity to be quasi-conserved, which can affect the thermalization of the system.

Interestingly, the exponential decay of $\lambda_0(M)$ for the slowest operator does not continue to larger $M$.
Instead, the decay trend seems to turn into a power law at larger $M$.
As discussed in the previous section, even though the scaling trend turns into a slower decay at large $M$, one always gets an equal or smaller residual norm as one increases $M$.
If the residual norm goes to zero as $M \to \infty$ and $\lim_{M \to \infty} \frac{\Q_0^{(M)}}{\|\Q_0^{(M)}\|_\text{F}}$ exists, then we would indeed obtain a conserved quasilocal operator.
However, due to limits on our numerical calculations, we cannot reach larger maximum range and cannot be conclusive about the behavior of $\lambda_0(M)$ at large $M$.
The eventual turn to a slower decay (similar to behavior in the good ergodic regime $g \leq 2$) may be signaling that beyond some time the operator will thermalize.
Hence, it may well be that the observed behavior corresponds to a prethermalization phenomenon on some intermediate time scales, where the time scale can be parametrically large.

\subsection{Next-slowest operators}

\begin{figure}
\includegraphics[width=1.0\columnwidth]{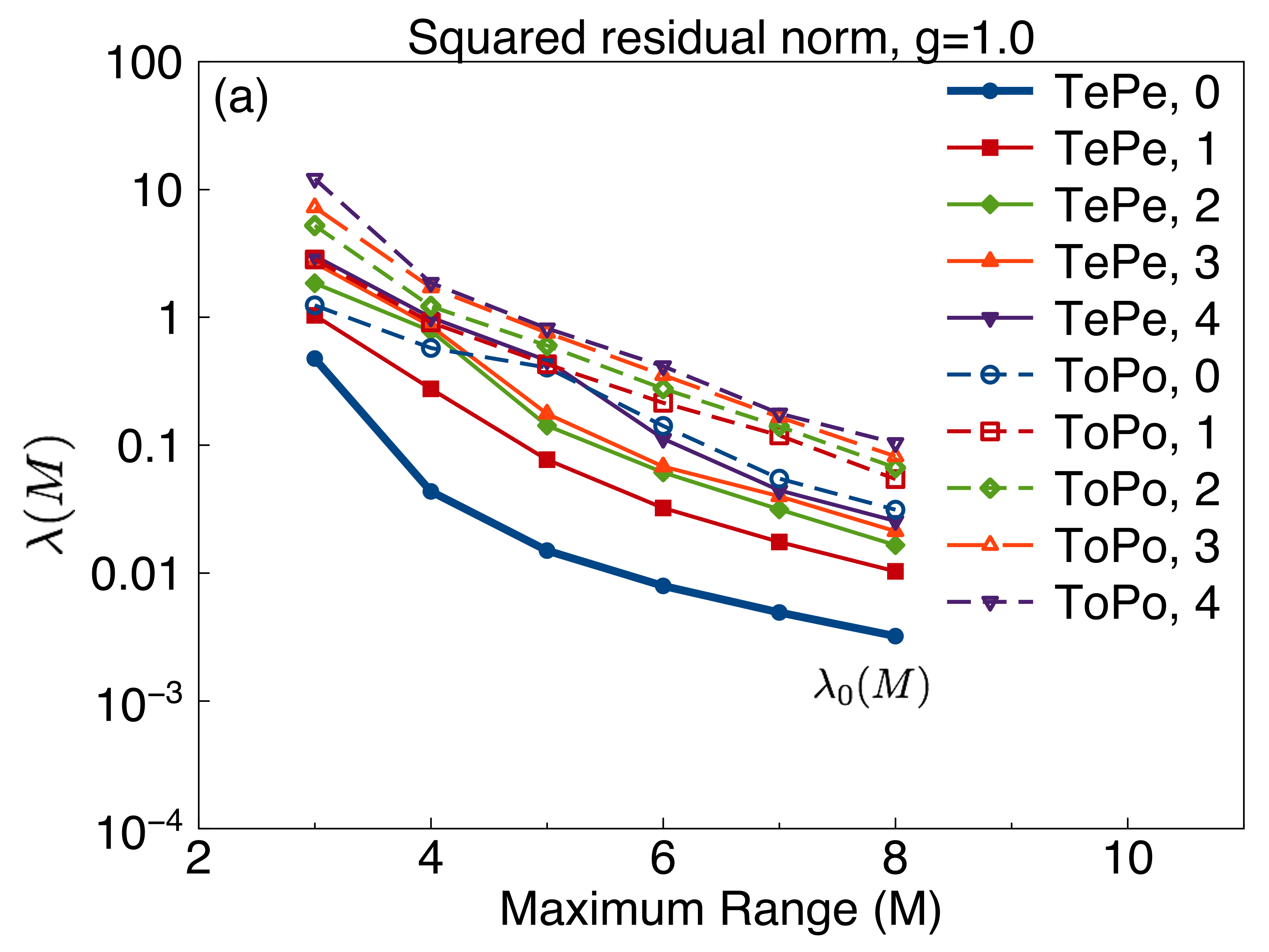}
\includegraphics[width=1.0\columnwidth]{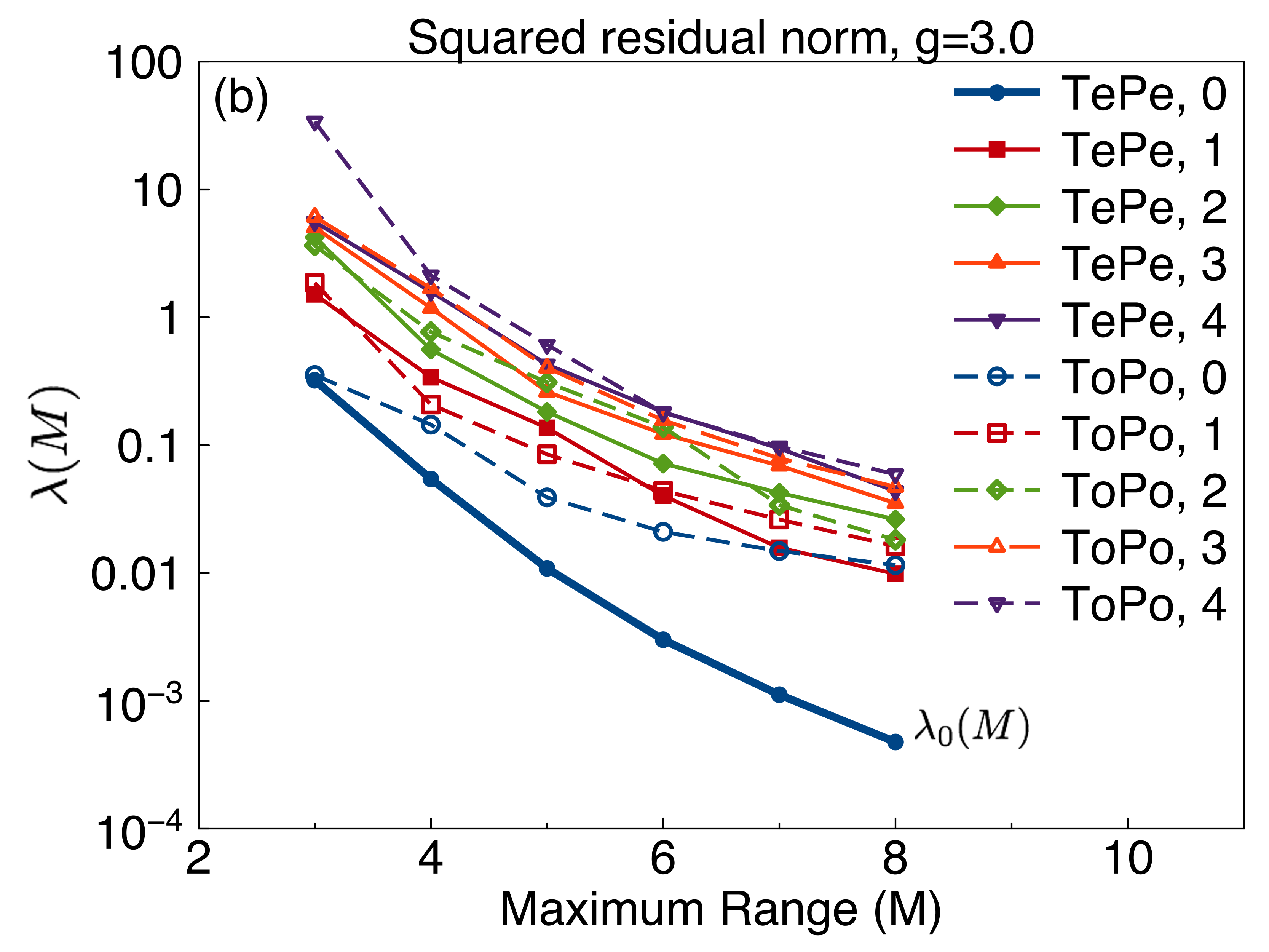}
\includegraphics[width=1.0\columnwidth]{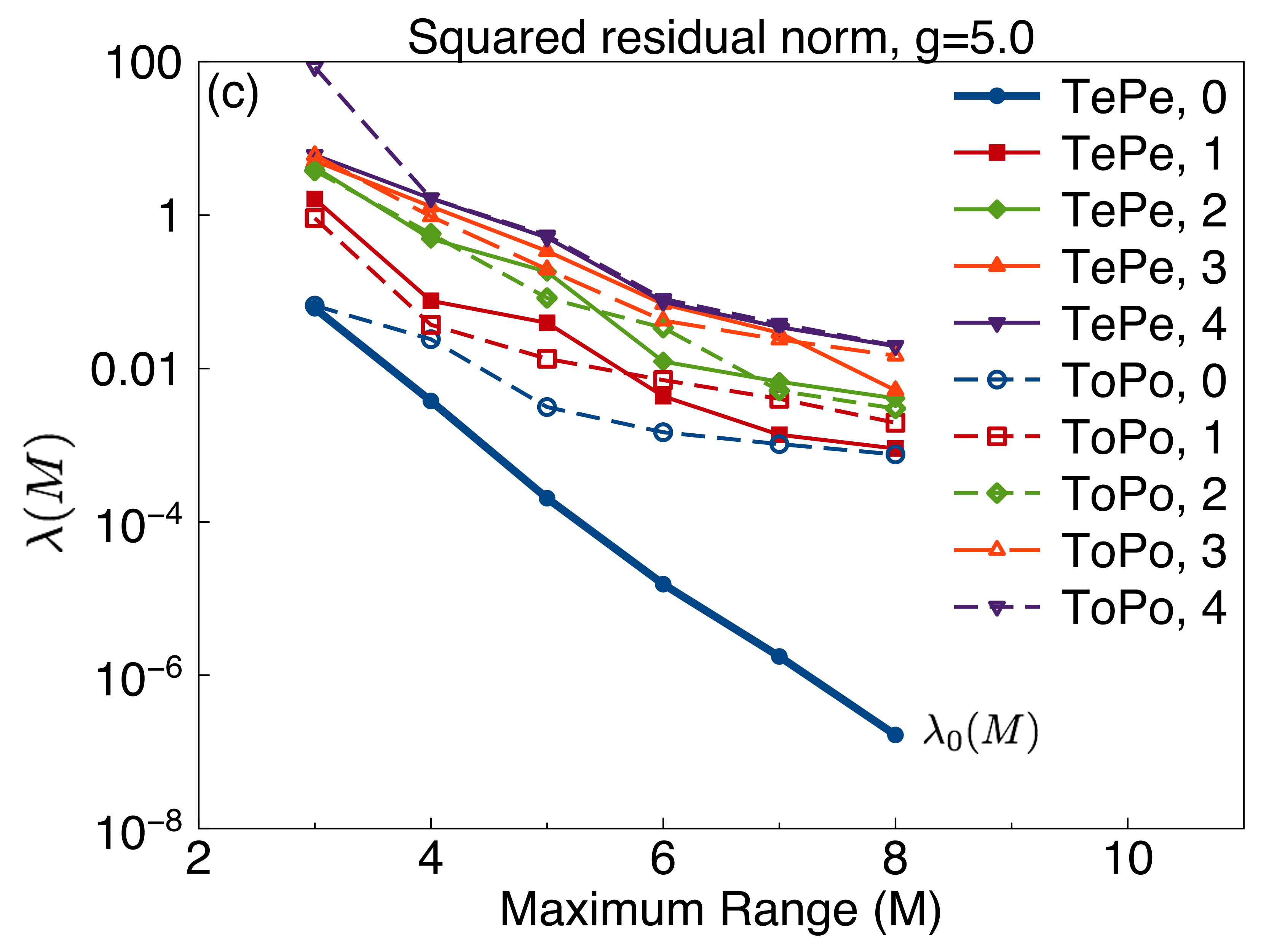}
\caption{\label{fig:higher} (color online)
Behavior of the squared residual norm $\lambda(M)$ for the first five slowest operators in the ``TePe'' and ``ToPo'' sectors.
(a) For $g = 1.0$, the slowest operator in the ``TePe'' sector shows similar dependence on $M$ as the other nearby slow operators; no particularly slow degrees of freedom exist in this case.
On the other hand, in panels (b) for $g = 3.0$ and (c) for $g = 5.0$, the slowest operator has exponential dependence on $M$ up to some range, while the other operators decrease more slowly throughout, which suggests that the slowest operator has parametrically more slow dynamics compared to other degrees of freedom.}
\end{figure}

While the exponential scaling of the slowest operator for large $g$ suggests that it is quasi-conserved, one may wonder how many quasi-conserved quantities exist.
To answer this question, we further study the scaling of the squared residual norm $\lambda(M)$ of the first five slowest operators in the time-reversal and parity even (odd) sector, denoted as ``TePe'' (``ToPo'') in Fig.~\ref{fig:higher}.
The operators in the ``TePo'' and ``ToPe'' sectors have higher squared residual norms than the ones shown in the figure and are hence less interesting and not included.
Here we only show results that are accessible using the exact diagonalization of the matrix $\mathbf{C}$, or $M \leq 8$.

Figure~\ref{fig:higher}(a) shows the scaling of $\lambda(M)$ for $g = 1.0$.
Note that the slowest operator in this case has a similar scaling trend compared to other operators.
Therefore the speed of the dynamics is not hierarchically slower than for other degrees of freedom.

On the other hand, in panels Figs.~\ref{fig:higher}(b) and ~\ref{fig:higher}(c), the slowest operator clearly has faster scaling than the next-slowest operators.
This is another feature suggesting that for large $g$, the speed of the dynamics of the slowest operator is hierarchically slower than other operators, resulting in apparent freezing of its dynamics and hence the prethermalization phenomenon.
We conclude that in these particular cases, there is only one quasi-conserved quantity.
This differs from the proposal in Ref.~\onlinecite{abanin_rigorous_2017} that there may be two independent quasi-conserved quantities (excluding the energy itself) in the strong coupling regime.
We suspect that this difference comes from our separation of operators into independent ones using the orthogonality in the Frobenius inner product.

\subsection{Relation to operator norm and thermalization time scale}
\label{subsec:Opnorm}

\begin{figure}
\includegraphics[width=1.0\columnwidth]{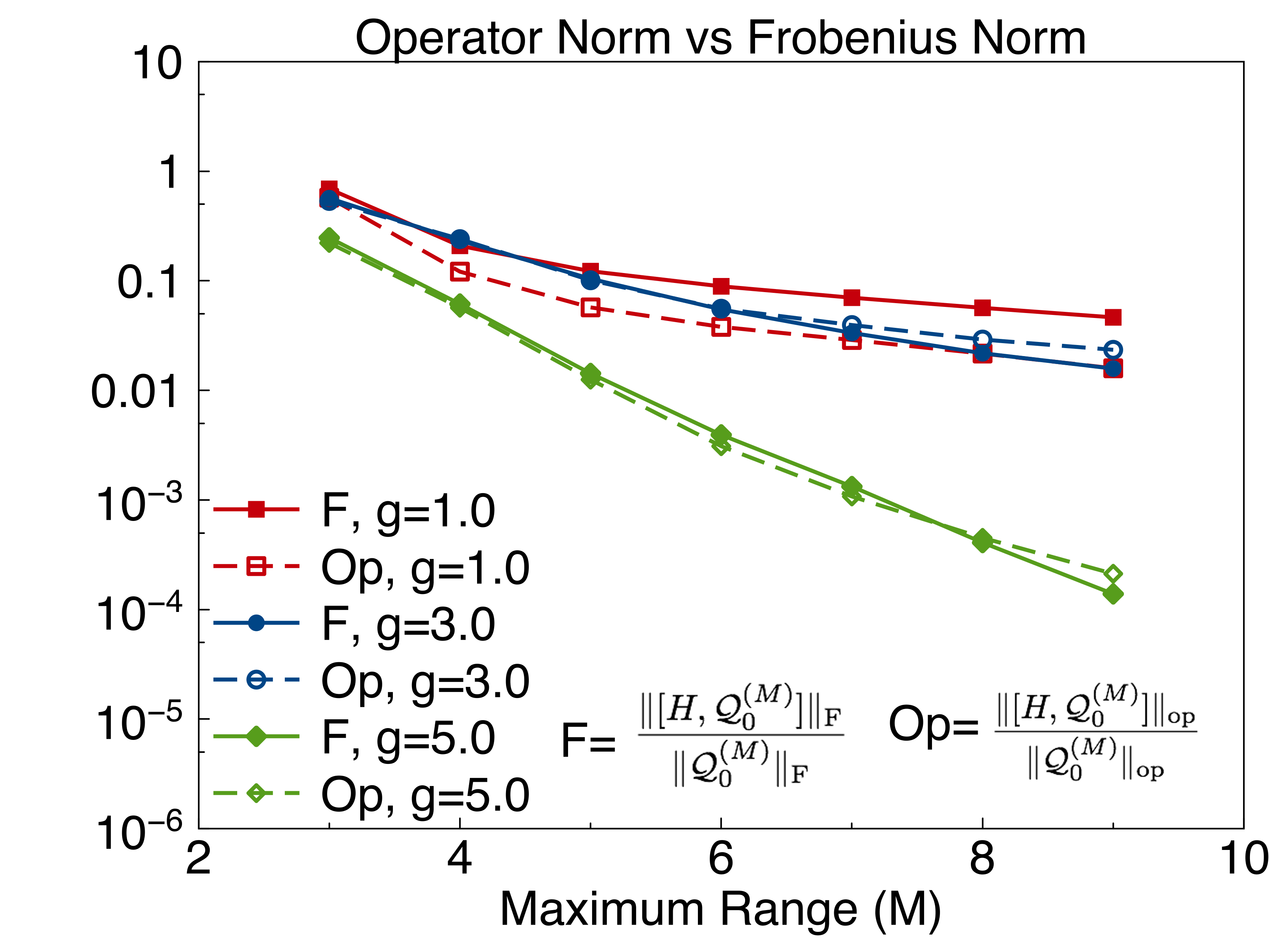}
\caption{\label{fig:opnorm}
(color online) Comparison between the residual Frobenius norm and operator norm measures of the slowest operator $\Q_0^{(M)}$; the operator is obtained from the minimization of the residual Frobenius norm as described in Sec.~\ref{sec:method}.
The inverse of $\frac{\|[H, \Q_0^{(M)}]\|_\text{op}}{\|\Q_0^{(M)}\|_\text{op}}$ gives the thermalization time scale of $\Q_0^{(M)}$.
For large coupling, cases $g = 3.0$ and $g = 5.0$, we find that the numerical values of the residual Frobenius and operator norm measures are close to each other up to some $M$ and then start deviating  (see text for some discussion).
}
\end{figure}

Minimizing the commutator $[H, \Q]$ with respect to the Frobenius norm is advantageous because it can be relatively easily calculated numerically and is independent of the system size.
On the other hand, to relate the smallness of the commutator to the dynamics, it is more appropriate to use the conventional operator norm.
Indeed, following Ref.~\onlinecite{kim_slowest_2015}, let us consider  a quench setting where we start from some initial state $|\psi_{\text{ini}} \rangle$.
Using the Heisenberg representation of observables, $\Q_H(t) \equiv e^{i H t} \Q e^{-i H t}$, and denoting the expectation value of the operator $\langle \Q_H(t) \rangle \equiv \langle \psi_{\text{ini}}| \Q_H(t) |\psi_{\text{ini}} \rangle$, the deviation of the expectation value from its initial value can be estimated as
\begin{eqnarray}
|\langle \Q_H(t) \rangle - \langle \Q \rangle| &=& \left| \left\langle \int_0^t d\tau \frac{d\Q_H}{d\tau}(\tau) \right\rangle \right| \nonumber \\
	&\leq& \int_0^t d\tau \left| \left\langle \frac{d\Q_H}{d\tau}(\tau) \right\rangle \right| \nonumber \\
&\leq& \int_0^t d\tau \| [H, \Q_H(\tau)] \|_\text{op} = t \|[H, \Q]\|_\text{op} ~, ~~~
\end{eqnarray} 
where we have used $\|[H, \Q_H(\tau)]\|_\text{op} = \|[H, \Q]\|_\text{op}$ for arbitrary $\tau$, and the above inequality holds for any initial state.
If we assume that $\Q$ has unit operator norm, we see that for $\langle \Q_H(t) \rangle$ to deviate from its initial value by an order-one number, the time scale is $t_* \sim (\|[H, \Q]\|_\text{op})^{-1}$.
For a general not normalized $\Q$, including the suitable normalization gives the time scale $t_* \sim \left(\frac{\|[H, \Q]\|_\text{op}}{\|\Q\|_\text{op}} \right)^{-1}$.

Figure~\ref{fig:opnorm} demonstrates the comparison between the Frobenius norm measure and the operator norm measure of the smallness of the commutator $[H, \Q_0^{(M)}]$, where the slowest operator $\Q_0^{(M)}$ is as before obtained by minimizing the residual Frobenius norm for given $M$.
Note that the operator norm per site of a translationally invariant operator like $\sum_{j=1}^L q_j^{(M)}$, unlike the intensive Frobenius norm defined earlier, depends on the system size $L$ and should be obtained in the thermodynamic limit (a familiar example is the ground-state energy per site of a translationally invariant Hamiltonian).
However, we expect the size dependence to diminish for increasing $L$.
We confirmed this by calculating the operator norms by diagonalizing the corresponding operators on finite systems up to size $L = 16$, and Fig.~\ref{fig:opnorm} shows our results for the largest $L$; we were able to go only up to $M=9$ because the calculations became prohibitively expensive for larger $M$.
Unlike the residual Frobenius norm, the residual operator norm $\frac{\|[H, \Q_0^{(M)}]\|_\text{op}}{\|\Q_0^{(M)}\|_\text{op}}$ can increase with $M$ since the minimization procedure is not with respect to the operator norm.
This can also potentially serve as a criterion for picking an ``optimal'' quasi-conserved operator $\Q_0^{(M_*)}$ for some $M = M_*$ that gives the minimum residual operator norm measure.
However, we do not observe a clear minimum of the residual operator norm measure for the accessible $M$.
Nevertheless, we can already bound $t_*$ from below from the $M \!=\! 9$ data.
Thus, for $g = 5.0$, we can bound $t_* > 5 \cdot 10^3$ which is already very long; while for $g = 3.0$, we can bound $t_*$ from below by approximately $t_* > 30$.

While here we were able to calculate the operator norm explicitly numerically, it is instructive to consider the following crude bound for the prethermalization condition obtained from the scaling of the residual Frobenius norm.
First, we note that we can write $[H, \Q_0^{(M)}] = \sum_j \eta_j$, where $\eta_j$ has maximum range $M + 1$.
We then have $\|[H, \Q_0^{(M)}]\|_\text{op} \leq \sum_j \|\eta_j\|_\text{op} = L \|\eta_j\|_\text{op} \leq L \, 2^{(M+1)/2} \|[H, \Q_0^{(M)}]\|_\text{F}$ (recall that here and below we use the intensive Frobenius norm).
On the other hand, for $\Q_0^{(M)} = \sum_j q_j$, heuristically we can estimate $\|\Q_0^{(M)}\|_\text{op}  \approx \sum_j \|q_j^{(M)}\|_\text{op} = L \|q_j^{(M)}\|_\text{op}$, and we also have exact bound $\|q_j^{(M)}\|_\text{op} \geq \|\Q_0^{(M)}\|_\text{F}$.
 We therefore obtain
\begin{equation}
\label{heuristicOp_Fbound}
\frac{\|[H, \Q_0^{(M)}]\|_\text{op}}{\|\Q_0^{(M)}\|_\text{op}} \leq 2^{\frac{M+1}{2}} \frac{\|[H, \Q_0^{(M)}]\|_\text{F}}{\|\Q_0^{(M)}\|_\text{F}} = 2^{\frac{M+1}{2}} \sqrt{\lambda_0(M)} ~,
\end{equation}
(which is nonrigorous bound).
To maximize the thermalization time scale, we find $\bar{M}_*$ by minimizing the right-hand side and obtain a crude criterion
\begin{equation}
\frac{d \log_{10} \lambda_0(M)}{dM}|_{M = \bar{M}_*} = -\log_{10} 2 ~.
\end{equation}
Thus, the optimal $\bar{M}_*$ from this heuristic bound is determined as the point where the magnitude of the slope of $\log_{10} \lambda_0(M)$ vs $M$ drops below value $\log_{10} 2$ (assuming that the magnitude of the slope is decreasing with $M$, as observed in Fig.~\ref{fig:scaling}).
We expect $\bar{M}_* \leq M_*$ (the latter defined from the true operator-norm minimization).

The above arguments also show how one may reconcile the fact that while the Frobenius norm measure $\lambda_0(M)$ is always decreasing with $M$, the thermalization time scale could still be finite.
The actual data for the operator norm vs Frobenius norm in Fig.~\ref{fig:opnorm} shows that the operator norm measure is numerically close to the Frobenius norm over the available maximum range $M$, particularly for large $g$.
That is, the factor of $2^{\frac{M+1}{2}}$ in the heuristic bound Eq.~(\ref{heuristicOp_Fbound}) between the two measures is an overestimate, and at least over this range of $M$ the Frobenius norm measure can be used to bound the speed of the dynamics.

We can understand the rough agreement between the Frobenius and operator norm measures if the operators $\Q_0^{(M)}$ and $[H, \Q_0^{(M)}]$ have roughly similar ``profiles'' in the operator space.
Indeed, in this case, the numerators on both sides of the  inequality in Eq.~(\ref{heuristicOp_Fbound}) and the denominators should have similar relations, which would cancel out in the ratio (while the overestimating factor $2^{\frac{M+1}{2}}$ arose from using different limits of the relations between the Frobenius and operator norms for the denominator and numerator).
We expect this to be particularly true when $\Q_0^{(M)}$ is ``localized'' in real space, which we indeed find in the strong coupling regime at least for the available $M$---see our understanding of the slowest operator from the perturbative SW picture in Sec.~\ref{sec:iterative} and direct measurements of its profile in Sec.~\ref{sec:profile}.
We do start observing some deviations between the Frobenius and operator norm measures for larger $M$, which could be indicating changing localization properties; however, the differences are still small to reach definite conclusions.

Examining carefully all data in Fig.~\ref{fig:opnorm}, we would like to point out that even though for $g = 1.0$ the operator norm measure is smaller than the one for $g = 3.0$, it does not imply that the system with $g = 1.0$ will exhibit prethermalization.
For a fair comparison of the dynamics, one also needs to compare the thermalization time scale of $\Q_0^{(M)}$ to other degrees of freedom in the same system.
We indeed know from the previous section, cf.\ Fig.~\ref{fig:higher}, that for $g = 1.0$, the next-slowest operators have comparable relaxation times to $\Q_0^{(M)}$ and the prethermalization phenomenon is less likely than for $g = 3.0$, where the slowest operator is more separated from the rest.
This could explain our findings in Sec.~\ref{sec:dynamics} of clear prethermalization at $g = 3.0$ and no prethermalization at $g = 1.0$.

While the residual norm provides us some bound on the thermalization time scale, it is also important to obtain the physical meaning of the slowest operator.
In the system in the good ergodic regime studied in Ref.~\onlinecite{kim_slowest_2015}, in the nontranslationally invariant setting, the slowest operator can be understood as dressed energy density modulation operator.
On the other hand, in the translationally invariant setting, the slowest operator does not have simple connection to the energy density modulation and its physical meaning remains an open question.
In the MBL system, Ref.~\onlinecite{obrien_explicit_2016} used this approach to explicitly construct the approximately conserved operators as local integrals of motion.
As we will show in the next Sec.~\ref{sec:iterative}, the slowest operator we found in the large $g$ regime can be understood as a dressed total spin-$z$ operator, coming from the solvable limit $H_0 = \sum_j (g X_j + h Z_j)$, which can be viewed as a quasi-local integral of motion.

\section{\label{sec:iterative} Schrieffer-Wolff Construction of Quasi-Conserved Quantity}
Reference~\onlinecite{abanin_rigorous_2017} used a renormalization scheme to construct an effective Hamiltonian which commutes with $H_0$ up to some order in small parameter, which can then be used to describe the prethermalization dynamics.
Here, we use an approach with similar spirit but based on the {\emph local} Schrieffer-Wolff (SW) transformation\cite{datta_low-temperature_1996, bravyi_schriefferwolff_2011} to construct a quasi-conserved operator perturbatively.
The term ``local'' is stressed since the generators are solved in the form of sum of local terms, in contrast with the ``global'' SW transformation, where the generators are solved using projectors of the $H_0$ eigenspaces.\cite{bravyi_schriefferwolff_2011}
The locality in particular allows us to construct the quasi-conserved quantity numerically to high order and measure its properties exactly, in contrast to the more abstract construction in Ref.~\onlinecite{abanin_rigorous_2017}.
A popular variant of a local SW transformation was in fact proposed in Ref.~\onlinecite{macdonald_$fractu$_1988} as a perturbative treatment of the Hubbard model in the large $U$ limit; this reference used generalized ``ladder'' operators connecting different Hubbard sectors, and we discuss the relation to our approach in App.~\ref{app:ladderformalism}.
Before proceeding, we briefly point some differences with Ref.~\onlinecite{bravyi_schriefferwolff_2011}.
First, our setup works in the thermodynamic limit $L \to \infty$ from the start.
More importantly, we choose the solution of Eq.~(\ref{eqn:solSm}) for the generator that eliminates the off-diagonal part of $V_m$ among all the sectors, while in Ref.~\onlinecite{bravyi_schriefferwolff_2011} one is only focusing on the off-diagonal part between the ground-state sector and other sectors.

We first describe the specific SW transformation used here and how we numerically construct a perturbation series for a quasi-conserved operator $\tilde{I}^{(n)}$ to $n$-th order.
We then calculate the squared residual norm of $\tilde{I}^{(n)}$ and the overlap between $\Q_0^{(M)}$ and $\tilde{I}^{(n)}$ to demonstrate the similarity between the two operators.
We will see that the slowest operator $\Q_0^{(M)}$ in the large $g$ regime can be understood---at least up to the maximum range accessible in our work---as $\tilde{I}^{(n)}$, which is essentially dressed ``total spin-$z$ operator.''

\subsection{Procedure of SW transformation}
\label{subsec:SWproc}
In the large-$g$ limit, we can decompose $H = H_0 + \epsilon T$, with $H_0 = \sum_j (g X_j + h Z_j)$ being our solvable limit and $\epsilon T = J \sum_j Z_j Z_{j+1}$ treated as perturbation with small parameter $\epsilon$.
[For example, we can define $\epsilon \equiv J/\sqrt{g^2 + h^2}$ so that for convenience $\|T\|_F = \|H_0\|_F$ in the intensive Frobenius norm, but the specific choice is not important.]
We construct a unitary transformation $U = e^{-i \epsilon S_1} e^{-i \epsilon^2 S_2} \dots e^{-i \epsilon^n S_n}$, with $S_m$ being Hermitian and $\epsilon$-independent, such that the rotated Hamiltonian $H' \equiv U^\dagger (H_0 + \epsilon T) U$ commutes with $H_0$ up to order $n$ in the formal expansion in $\epsilon$.
Stated another way, the eigenvalues of $H_0$ define the corresponding unperturbed sectors, and we want $H'$ to have only sector-diagonal terms up to order $n$ in $\epsilon$, while sector-off-diagonal terms are present only in higher order.
If we then undo the rotation on $H_0$ back to the original picture, i.e., perform the inverse rotation to define $I \equiv U H_0 U^\dagger$, we obtain an operator that commutes with $H$ up to order $n$ by construction.

To be more specific, we follow Ref.~\onlinecite{datta_low-temperature_1996} and consider an expansion of $H'$ in powers of $\epsilon$:
\begin{equation}
\label{Hprime_expansion}
H' = H_0 + \sum_{m=1}^n \epsilon^m [i\ad_{S_m}(H_0) + V_m] + H_{>n} ~,
\end{equation}
where $V_1 \equiv T$ and
\begin{eqnarray}
\label{eqn:Vm}
V_m &=& \sum_{p=2}^m \sum_{[k_1, \dots, k_p] = m} \!\!\! \mathfrak{f}(k_1, \dots, k_p) \, i\ad_{S_{k_p}} \dots i\ad_{S_{k_1}}(H_0) \nonumber \\
&+& \sum_{p=1}^{m-1} \sum_{[k_1, \dots, k_p] = m-1} \!\!\!\! \mathfrak{f}(k_1, \dots, k_p) \, i\ad_{S_{k_p}} \dots i\ad_{S_{k_1}}(T) ~~~~~
\end{eqnarray}
for $m \geq 2$.
Here we have used the notation ``$[k_1, \dots, k_p] = m$'' to mean the summation conditions $1 \leq k_i \leq n$ for $i = 1, \dots, p$ and $k_1 + \dots + k_p = m$, while the function $\mathfrak{f}(k_1, \dots, k_p) = \Theta(1 \!\leq\! k_1 \!\leq\! \dots \!\leq\! k_p \!\leq\! n)/[\prod_{l=1}^n \text{card}(l)!]$, where $\Theta(\bullet) = 1$ if the condition in the argument is true and $\Theta(\bullet) = 0$ otherwise, and $\text{card}(l)$ counts the number of elements in $\{k_1, \dots, k_p\}$ that are equal to $l$.
By construction, each $V_m$ is $\epsilon$-independent; it enters with a coefficient $\epsilon^m$ and is part of the $m$-th term in Eq.~(\ref{Hprime_expansion}) for $m = 1, \dots, n$.
Furthermore, $H_{>n} = \sum_{m=n+1}^\infty \epsilon^m V_m$ collects all the terms with $\epsilon$ powers higher than $n$.

The generators of the SW transformation are solved order by order by finding $i S_m$ such that 
\begin{equation}
\label{eqn:eqnforSm}
i\ad_{S_m}(H_0) + V_m = V_m^\text{diag} ~,
\end{equation}
where we have defined $O^\text{diag}$ as a part of an operator $O$ that is ``diagonal'' in the $H_0$ sector label; i.e., $O^\text{diag}$ is the component of the operator that commutes with $H_0$.
Equivalently, $O^\text{diag}$ is the component of $O$ in the kernel (nullspace) of $\ad_{H_0}$.
The remainder $O^\text{off-diag} \equiv O - O^\text{diag}$ is the ``off-diagonal'' part of the operator, and can be also viewed as a component of $O$ orthogonal to the kernel of $\ad_{H_0}$ in the Frobenius inner product\cite{datta_low-temperature_1996}.
We can solve for the generator
\begin{equation}
\label{eqn:solSm}
i S_m = [\ad_{H_0}]^{-1} V_m^\text{off-diag} ~,
\end{equation}
where $[\ad_{H_0}]^{-1}$ is the {\it pseudoinverse} of $\ad_{H_0}$.
Note that $i S_m$ solving Eq.~(\ref{eqn:eqnforSm}) is determined only up to a component in the kernel of $\ad_{H_0}$, and we make a choice here where such component is zero, i.e., $i S_m$ is composed of only sector-off-diagonal operators; this is common choice in the SW approach, cf.~Refs.~\onlinecite{macdonald_$fractu$_1988, datta_low-temperature_1996, bravyi_schriefferwolff_2011}.
The described procedure generates an effective Hamiltonian which commutes with $H_0$ up to order $n$ by truncating out $H_{>n}$, obtaining $H^{(n)}_\text{eff} = H_0 + \sum_{m=1}^n \epsilon^m V_m^\text{diag}$.

An important property of the above SW transformation is its locality, which ensures the representability of $S_m$ and $V_m$ in finite-dimensional operator spaces, making the SW procedure programmable as operations of matrices and vectors.
In fact, one can show that for $H_0 \in \mathcal{T}_1$ and $T \in \mathcal{T}_2$ we have $V_m \in \mathcal{T}_{m+1}$ and $S_m \in \mathcal{T}_{m+1}$, see Ref.~\onlinecite{bravyi_schriefferwolff_2011} and Proposition~\ref{prop:locality} in App.~\ref{app:boundHn}.

We remark that the SW transformation generally does not converge when one takes the $n \to \infty$ limit.
There are rigorous results for the convergence of the ground state energy estimates for gapped Hamiltonians\cite{datta_low-temperature_1996, bravyi_schriefferwolff_2011} but no known results for the ability of the SW procedure to capture the entire spectrum of interest here.
Nevertheless, the SW transformation is well-defined for any finite $n$ and can be used to obtain rigorous bounds on the dynamics in the spirit of Refs.~\onlinecite{mori_rigorous_2016, kuwahara_floquet-magnus_2016, abanin_rigorous_2017}.
Thus, one can show that, for small enough $\epsilon$, $\|H_{>n}\|_\text{F} < O(n^{2n+2} \epsilon^{n+1})$, see Ref.~\onlinecite{bravyi_schriefferwolff_2011} and Theorem~\ref{thm:boundHn} in App.~\ref{app:boundHn}.
The dynamics described by $H' = H_\text{eff}^{(n)} + H_{>n}$ in the rotated picture does not truly conserve $H_0$ but only approximately.
In other words, while $H_\text{eff}^{(n)}$ conserves $H_0$, the ``remainder'' $H_{>n}$ does not and is responsible for the eventual thermalization of the dynamics, which can be very slow if $\epsilon$ is small.

We can thus intuitively understand the prethermalization via this perturbative SW construction.\cite{mori_rigorous_2016, kuwahara_floquet-magnus_2016, abanin_rigorous_2017, else_prethermal_2017}
The solvable limit $H_0$ defines different sectors labeled by different integers, which can be viewed as counting the number (up to some off-set) of some emergent ``particles.'' (see also App.~\ref{app:ladderformalism}).
The perturbation term $\epsilon T$ introduces interactions within the sectors and transitions between the sectors.
The interactions within the sectors are indeed the ``diagonal'' part of $T$.
At $m$-th order, the coefficient $\epsilon^m$ in the SW perturbation theory basically describes the transition amplitude of any process with $m$ inter-sector transitions.
The generator $i S_m$ is set to rotate the picture such that these processes are eliminated.
The remaining part $V_m^\text{diag}$ basically describes the processes which start and end in the same sector connected by $m$ times of the inter-sector transitions.
The perturbation series would be convergent for small enough $\epsilon$ if there were at most $\mathcal{O}(e^{c m})$ of such processes.
However, generically, in a translationally invariant system, there are order $\mathcal{O}(m^{\gamma m})$ such processes coming from combinatorial factorials in $m$.
The exponential suppression of the transition amplitude is then not enough to suppress the factorial factor.
Therefore, even though at high order of $n$, the transition amplitude is perturbatively small $O(\epsilon^n)$, manifesting slowness of individual processes, there are, however, too many ways of the transitions $O(n^{\gamma n})$ such that the system will eventually thermalize. 

\subsection{Quasi-conserved quantity by SW transformation}
\label{subsec:quasiIOviaSW}
Once we have obtained the generators for the SW transformation, we can rotate $H_0$ back to the original picture and obtain the quasi-conserved operator.
Consider
\begin{equation}
I \equiv U H_0 U^\dagger = H_0 + \sum_{m=1}^n \epsilon^m I_m + I_{>n} ~,
\end{equation}
where 
\begin{equation}
\label{eqn:solIm}
I_m = \sum_{p=1}^m \, \sum_{[k_1, \dots, k_p] = m} \!\!\! (-1)^p \, \mathfrak{f}(k_1, \dots, k_p) \, i\ad_{S_{k_1}} \dots i\ad_{S_{k_p}}(H_0)
\end{equation}
and $I_{>n} = \sum_{m = n + 1}^\infty \epsilon^m I_m$ collects all the higher-power in $\epsilon$ terms.
We then obtain the quasi-conserved operator $I^{(n)} = H_0 + \sum_{m=1}^n \epsilon^m I_m$.
In Appendix~\ref{app:boundHn}, we show that $I^{(n)} \in \mathcal{T}_{n+1}$.
To compare with the slowest operator, we remove the part of $I^{(n)}$ that is parallel to $H$ and normalize the resulting operator:
\begin{eqnarray}
\label{eqn:I(n)perp}
I^{(n)\perp} &=& I^{(n)} - H \frac{\langle H, I^{(n)} \rangle}{\|H\|_\text{F}^2} ~, \\
\label{eqn:I(n)tilde}
\tilde{I}^{(n)} &=& \frac{I^{(n)\perp}}{\|I^{(n)\perp}\|_\text{F}} ~.
\end{eqnarray}
For small enough $\epsilon$, we can bound the squared residual norm as
\begin{equation}
\mu_n \equiv \|\ad_H(\tilde{I}^{(n)}) \|_\text{F}^2 \leq \mathcal{O}(n^{4n} \epsilon^{2n}) ~.
\end{equation}
The proof of this bound and a more precise statement is in Appendix~\ref{app:boundadHI}.

Applying the previous heuristic argument for the thermalization time scale, Eq.~(\ref{heuristicOp_Fbound}), we get $t_*^{-1} \sim \mathcal{O}((2\epsilon)^n n^{2n})$.
If we treat the perturbation strength $\epsilon$ as given, and the SW order $n$ as an optimization parameter, then we can find that the residual operator norm is minimized at $n = n_* = 1/(e\sqrt{2\epsilon})$. 
The thermalization time scale is therefore maximized as $t_* = \mathcal{O}(\exp(\frac{\sqrt{2}}{e\sqrt{\epsilon}}))$.
Note that unlike Refs.~\onlinecite{ mori_rigorous_2016, kuwahara_floquet-magnus_2016,abanin_effective_2017,abanin_rigorous_2017}, where the heating rate is proven to be $\mathcal{O}(\exp(\frac{A}{\epsilon}))$, we only obtain $\mathcal{O}(\exp(\frac{A'}{\sqrt{\epsilon}}))$.
This can be traced back to the estimation of the convergence radius in Appendices.~\ref{app:boundHn} and \ref{app:boundadHI} to be $\rho_n \sim 1/n^2$, hence the squared residual norm $\mu_n \sim \mathcal{O}(n^{4n} \epsilon^{2n})$.
We suspect that a tighter convergence radius $\rho_n \sim 1/n$ is possible (see App.~\ref{app:conv_radius}); hence the bound on the thermalization time-scale could be improved to $\mathcal{O}(\exp(\frac{A}{\epsilon}))$.\cite{ftnote}
Without pursuing this tighter bound further, we leave this for future studies.

As mentioned earlier, the locality of $i S_m$ and $V_m$ allows us to formulate this procedure in finite-dimensional operator Hilbert spaces amenable to numerical calculations.
Figure~\ref{fig:scaling}(b) shows the squared residual norm calculated from such SW construction of the quasi-conserved operator for several values of parameter $g$.
Note that at order $n$, the constructed operator has maximum range $M = n + 1$.
The trend of $\mu_n$ at large $g$ more or less follows the trend of $\lambda_0(M)$, where the residual norm drops almost exponentially in low order, and turns into a slower trend, which is possibly a manifestation of the combinatorial factor $\mathcal{O}(n^{\gamma n})$.
While not appearing in the figure yet for large $g$, we expect $\mu_n$ will eventually start increasing at high enough order $n$; this is because in generic systems the combinatorial factors (like the ones appearing in the previous paragraph) will win over the exponential suppression at large enough $n$; such behavior of $\mu_n$ is observed in the $g = 1$ and $g = 2$ cases. 
Nevertheless, noting that the above arguments are based on the ``worst-case-scenario'' analytical bounds on the perturbatively-constructed operators, our numerical results for $\mu_n$ in the larger $g$ cases do not rule out the possibility that $\mu_n \rightarrow 0$.
On the other hand, unlike the perturbative construction, the numerical minimization for the slowest operator is guaranteed to get an equal or smaller residual norm when increasing $M$.

\begin{figure}
\includegraphics[width=1.0\columnwidth]{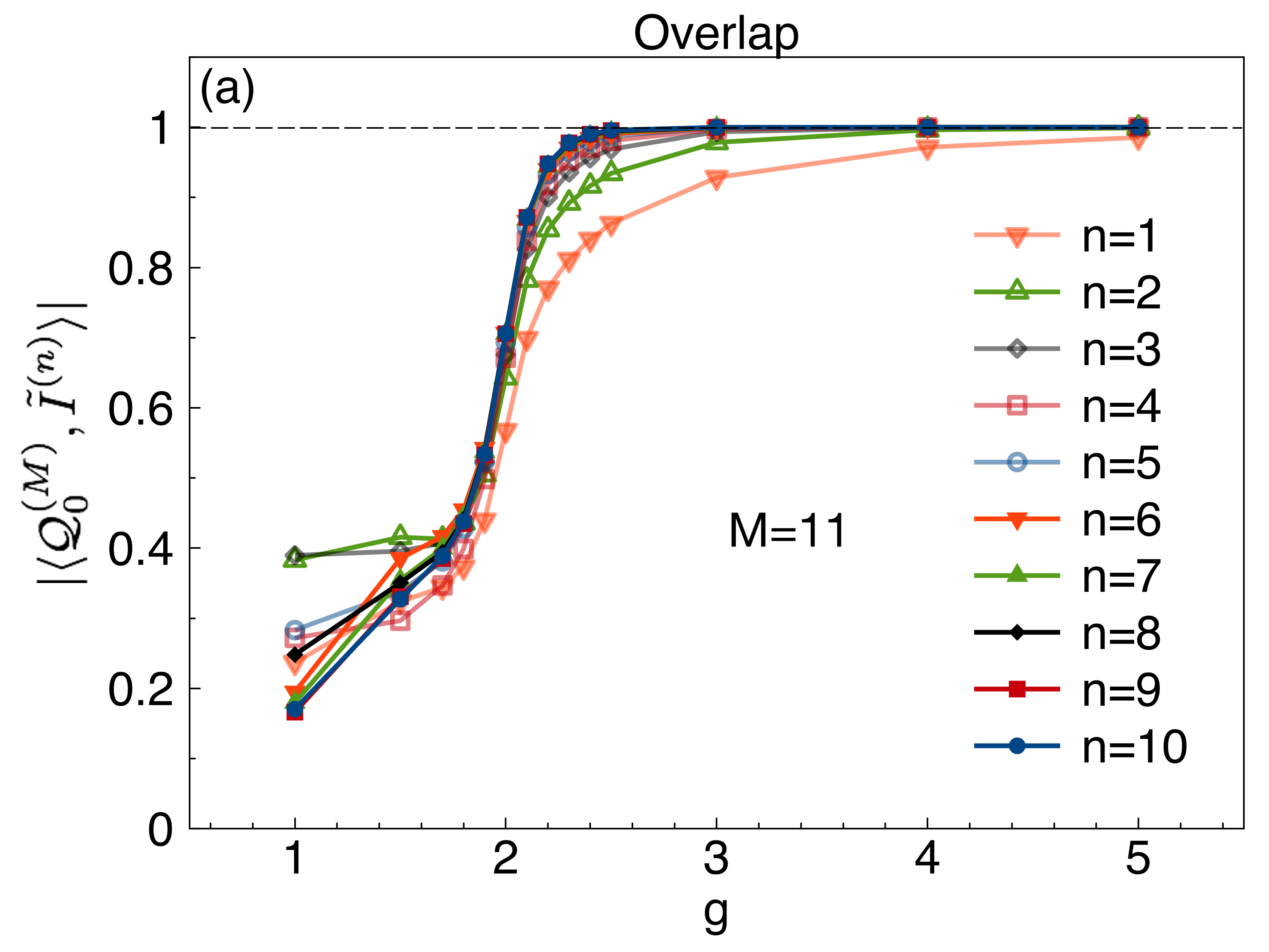}
\includegraphics[width=1.0\columnwidth]{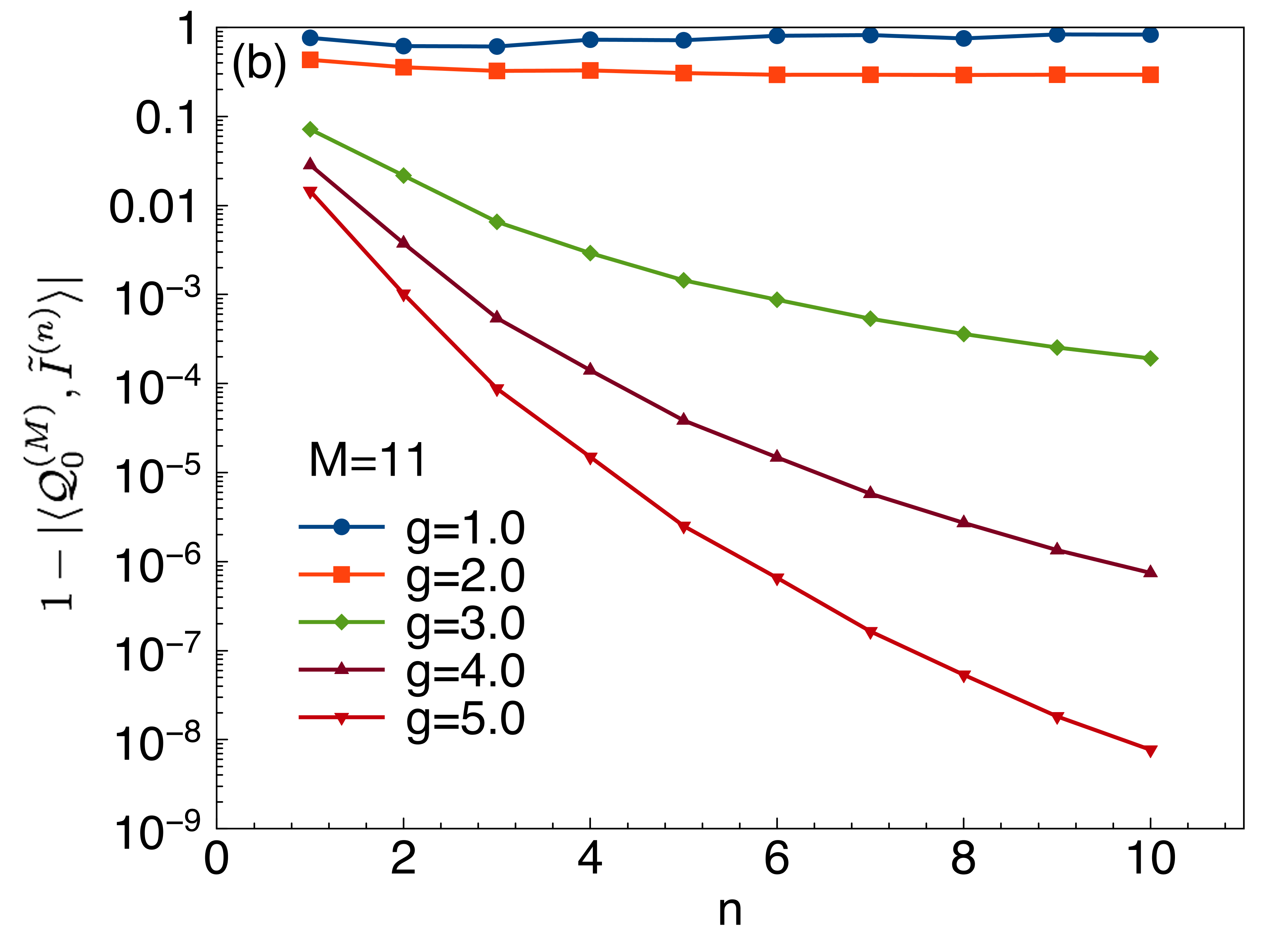}
\caption{\label{fig:overlap}
(color online) (a) The overlap between the full numerical optimization $\Q_0^{(M)}$ with $M = 11$ and the perturbative SW construction $\tilde{I}^{(n)}$ with order $n = 1$ to $10$.
(b) One minus the overlap on the log-linear plot.
At large $g$, the overlap between the two operators is almost $100\%$, which means that the slowest operator we found is essentially the dressed spin operator coming from the solvable limit $H_0$.
On the other hand, for small $g$, the slowest operator does not look like the perturbative SW construction operator anymore.
Interestingly, there is apparently a strong change in behavior around $g_c \approx 2$; however, we do not know if there is a true transition.}
\end{figure}

Figure~\ref{fig:overlap}(a) shows the overlap between the slowest operator $\Q_0^{(M)}$ with maximum range $M = 11$ and the SW construction $\tilde{I}^{(n)}$ with order $n$ up to $10$.
The overlap at large $g$ is almost $100\%$!
Accordingly, we can understand the slowest operator we found in the large $g$ limit as the translationally invariant sum of the dressed spin-$z$ operator, or the dressed $H_0$.
Interestingly, there appears to be a strong change in behavior at $g_c \approx 2$. 
For $g > g_c$, the slowest operator looks like the dressed spin-$z$ operator, with an exponential scaling of the residual norm for small $M$; on the other hand, for $g < g_c$, the slowest operator does not look like the dressed spin-$z$ operator, and its residual norm has a power-law scaling.

Note that despite the fact that the SW construction $\tilde{I}^{(n)}$ and the slowest operator $\Q_0^{(M)}$ have very high overlap $1 - \alpha$, where $\alpha$ can be a very small number as shown in Fig.~\ref{fig:overlap}(b), the difference between their squared residual norms can still be sizable.
Indeed, consider $\tilde{I}^{(n)} = (1 - \alpha) \Q_0^{(M)} + \beta \eta$, where $\|\Q_0^{(M)}\|_\text{F} = \|\eta\|_\text{F} = 1$ and $\eta$ is some operator perpendicular to $\Q_0^{(M)}$ in the Frobenius inner product.
The normalization condition of $\tilde{I}^{(n)}$ gives $\beta^2 = 2\alpha - \alpha^2$, hence $\beta = \mathcal{O}(\sqrt{\alpha})$.
The squared residual norm of $\tilde{I}^{(n)}$ is $\|\ad_H(\tilde{I}^{(n)})\|_\text{F}^2 = (1 - \alpha)^2 \|\ad_H(\Q_0^{(M)})\|_\text{F}^2 + \beta^2 \|\ad_H(\eta)\|_\text{F}^2 + 2 \beta (1 - \alpha) {\rm Re}[\langle \ad_H(\Q_0^{(M)}), \ad_H(\eta) \rangle]$.
We can thus see that
\begin{eqnarray*}
&& \|\ad_H(\tilde{I}^{(n)})\|_\text{F}^2 - \|\ad_H(\Q_0^{(M)})\|_\text{F}^2 \approx \\
&& \approx 2\alpha \|\ad_H(\eta)\|_\text{F}^2 + 2\sqrt{2\alpha} {\rm Re}[\langle \ad_H(\Q_0^{(M)}), \ad_H(\eta) \rangle] ~,
\end{eqnarray*}
where we expressed everything in terms of the small number $\alpha$ and kept only terms that are expected to dominate.
Note that while $\|\ad_H(\Q_0^{(M)})\|_\text{F}$ is a small number, no such smallness is expected for $\|\ad_H(\eta)\|_\text{F}$ since the deviation direction $\eta$ is not special in any way.
Since $\|\eta\|_\text{F} = 1$, we expect that $\|\ad_H(\eta)\|_\text{F}$ is a number of order $1$ in the energy units of $H$ (and could be larger depending on the range of typical terms in $\eta$), which could be sufficient to explain the visible difference between the two residual norms in Fig.~\ref{fig:scaling}(b) despite the high overlap between $\tilde{I}^{(n)}$ and $\Q_0^{(M)}$.

\section{\label{sec:characteristic} Characterizing the Slowest Operators}
In this section, we analyze some properties of the quasi-conserved operators that we found in Sec.~\ref{sec:method}.
We measure their ``locality'' in the operator space and in the real space, to contrast different behaviors of the slowest operators between small $g$ and large $g$ regimes.

\subsection{\label{sec:OIPR} Operator inverse participation ratio}
From the previous section, we expect that for large $g$ the quasi-conserved operator looks like a dressed spin operator.
It is therefore reasonable to expect that $\Q_0^{(M)}$ should be a sum of a small number of Pauli string operators, analogous to the local integrals of motion in MBL studies.~\cite{obrien_explicit_2016}
Using the Pauli string basis $I$, $X$, $Y$, $Z$ (without forming the parity-invariant basis), we measure the operator inverse participation ratio (OIPR)
\footnote{Here we call the quantity in Eq.~(\ref{eqn:OIPR}) ``operator inverse participation ratio'' so that it is consistent with usual definition, e.g., as used in single-particle localization problems where for a normalized wavefunction $\psi(x)$ the inverse participation ratio is $1/\sum_x |\psi(x)|^4$; this convention is different from that in Ref.~\onlinecite{obrien_explicit_2016}.}
defined as
\begin{equation}
\label{eqn:OIPR}
{\rm OIPR}(\Q_0^{(M)}) = \left(\sum_{i=1}^{3 \cdot 4^{M-1}} |a_i|^4 \right)^{-1} ~,
\end{equation}
where $a_i$'s are the amplitude of the $I$-$X$-$Y$-$Z$ Pauli-string basis and we assumed normalization $\sum_{i=1}^{3 \cdot 4^{M-1}} |a_i|^2 = 1$.
The OIPR is bounded from below by $1$.

\begin{figure}
\includegraphics[width=1.0\columnwidth]{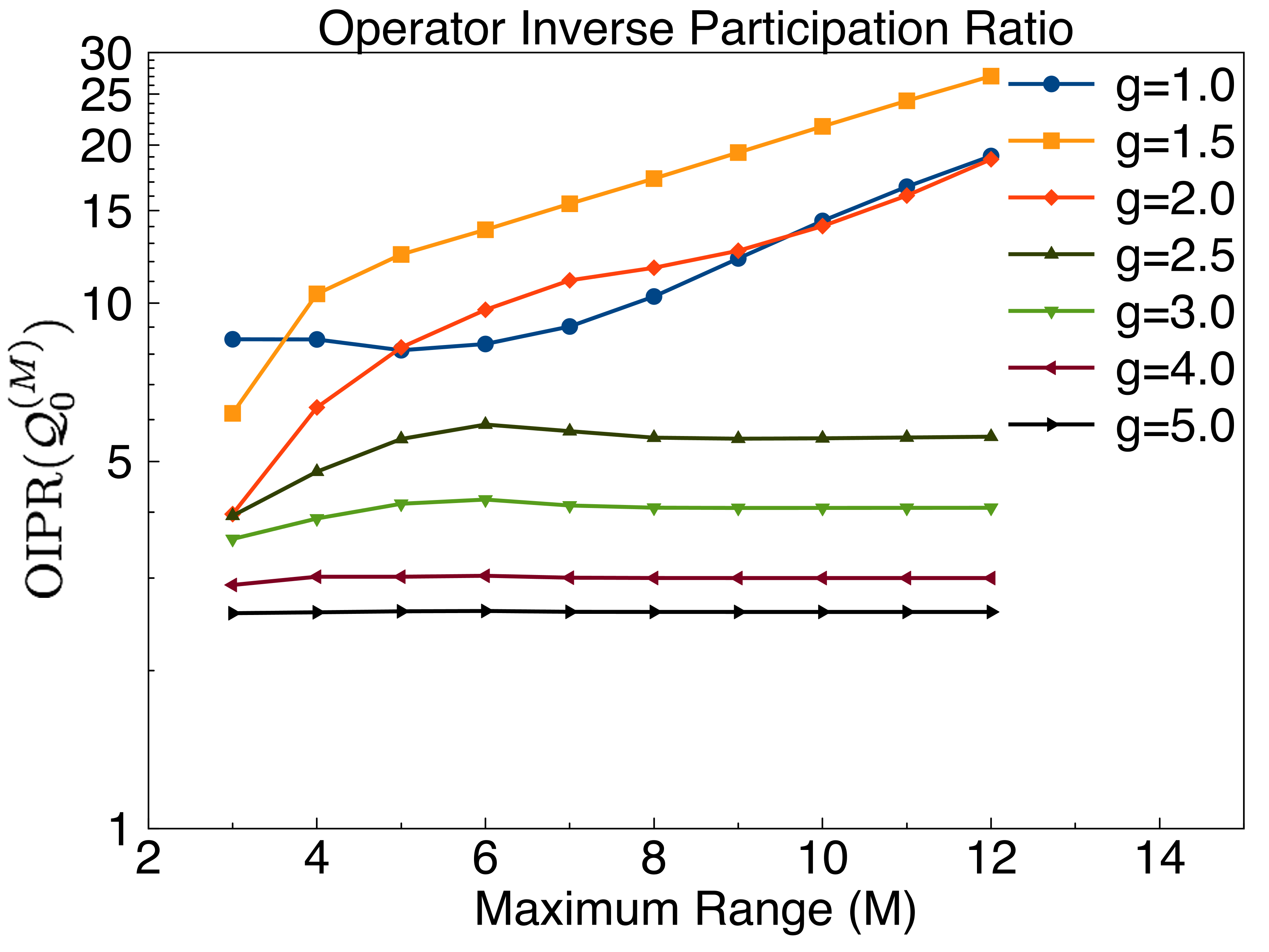}
\caption{\label{fig:OIPR}
(color online) Operator inverse participation ratio of the slowest operator vs maximum range $M$ for different $g$.
For large $g \gtrsim 2$, the OIPR appears to converge to a finite value, which suggests its locality in the operator space.
On the other hand, in the ergodic regime, $g \lesssim 2$, the OIPR does not converge and instead grows strongly with $M$ (the behavior on the linear-log plot suggests exponential growth).}
\end{figure}

Figure~\ref{fig:OIPR} shows the OIPR of the slowest operator $\Q_0^{(M)}$ for different $g$. 
Interestingly, for larger $g \gtrsim 2$, the OIPR seems to converge to a finite value at large enough $M$.
This behavior is consistent with our expectation that the quasi-conserved operator is a dressed total spin operator.
The convergence of the OIPR indicates locality in the operator space.
On the other hand, for small $g \lesssim 2$, the OIPR does not saturate but instead grows strongly with $M$.
This suggests that the slowest operators we found in the ergodic regime are composed of an extensive number of the Pauli string basis states; hence they are ``delocalized'' in the operator space.

\subsection{\label{sec:profile} Real-space profile of the slowest operator}

\begin{figure*}
\includegraphics[width=0.65\columnwidth]{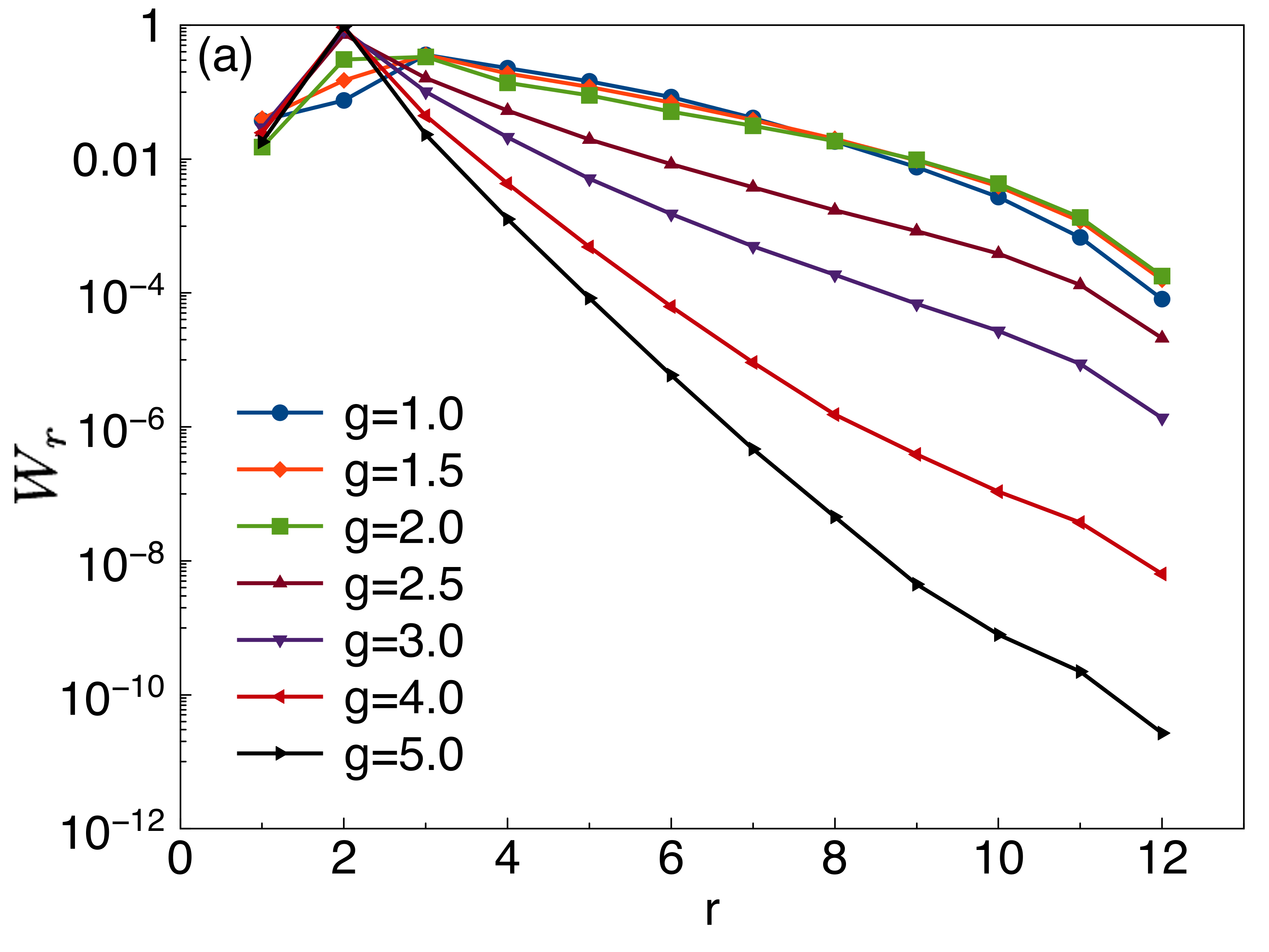}
\includegraphics[width=0.65\columnwidth]{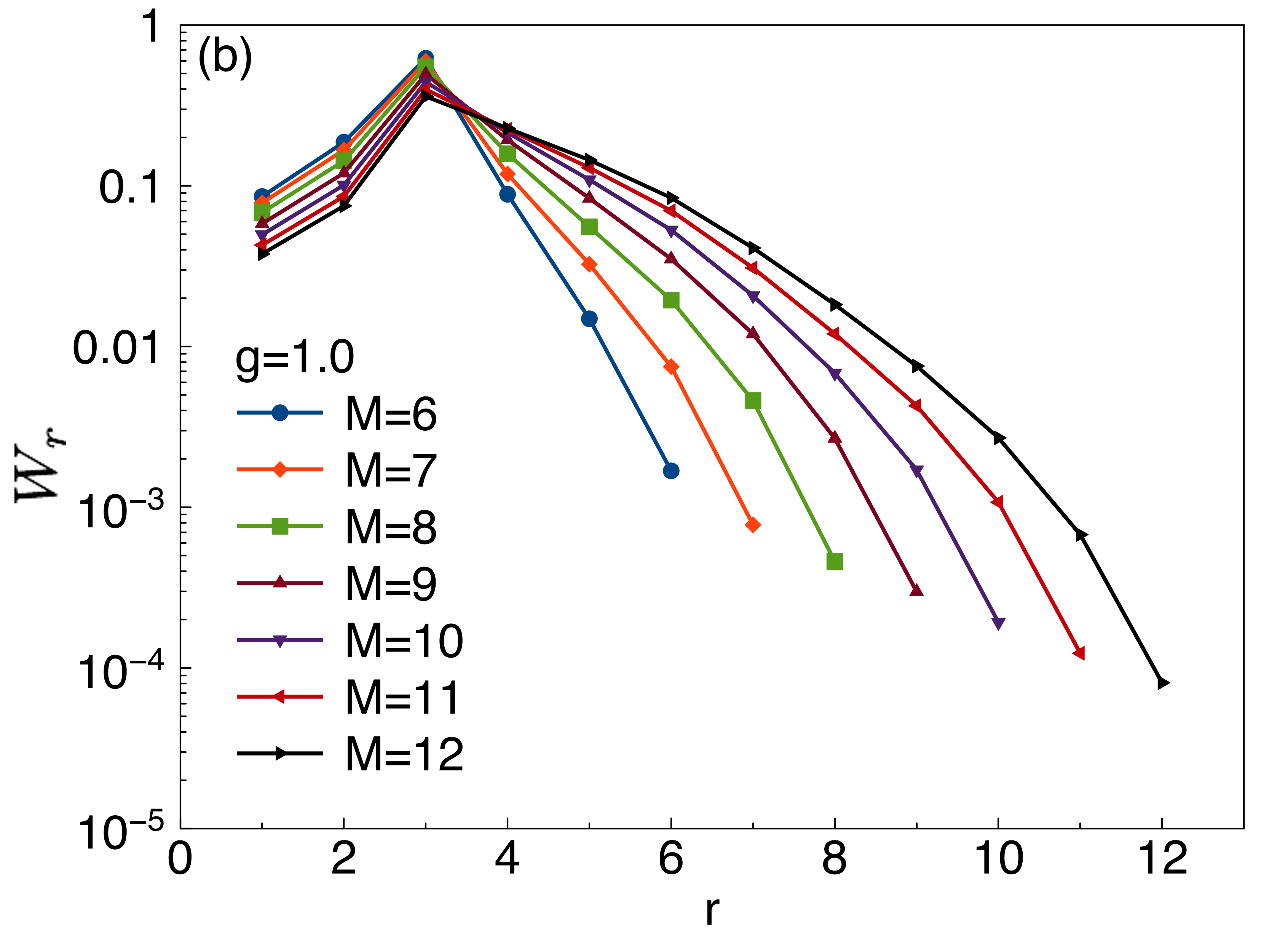}
\includegraphics[width=0.65\columnwidth]{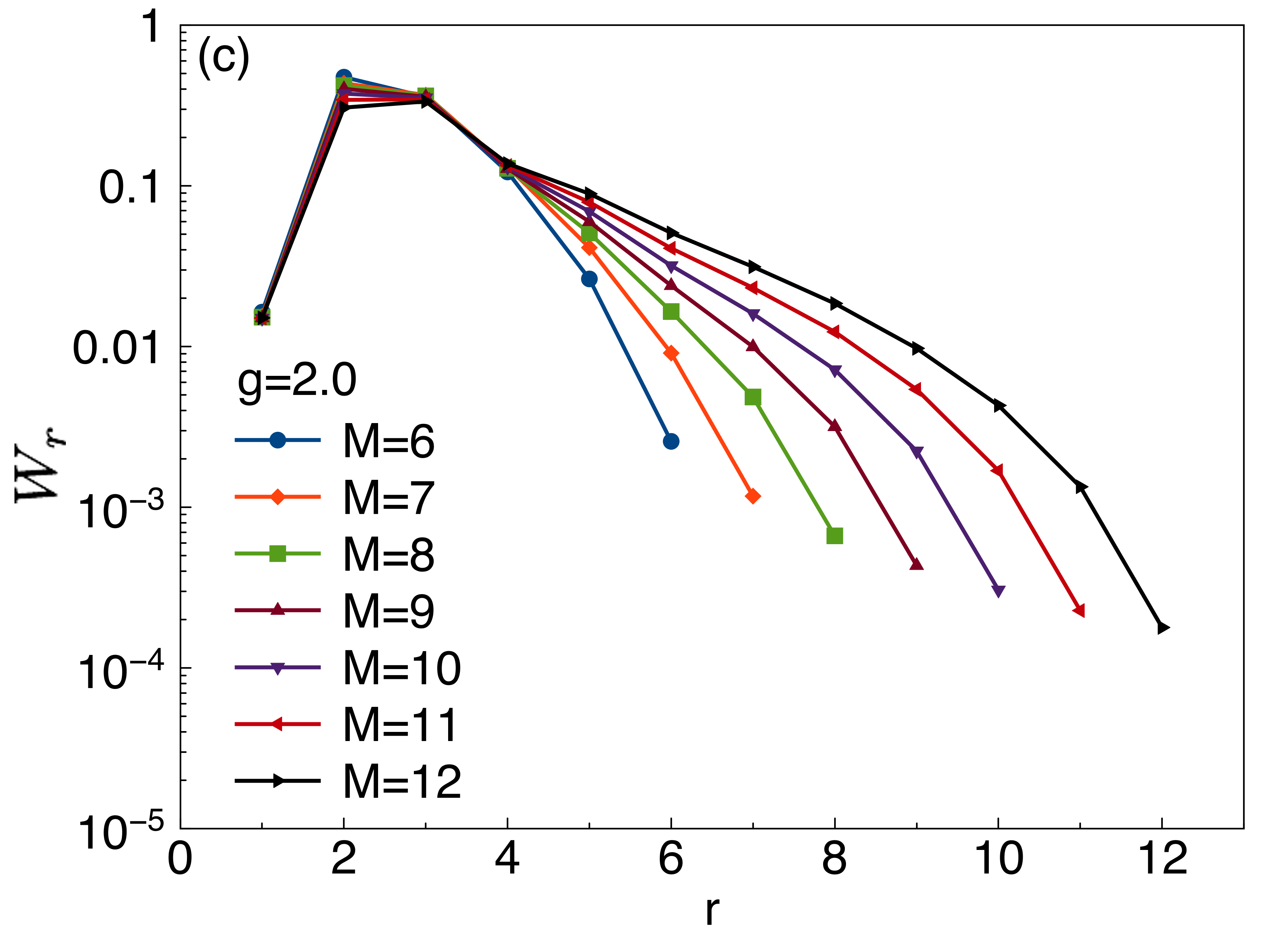}
\includegraphics[width=0.65\columnwidth]{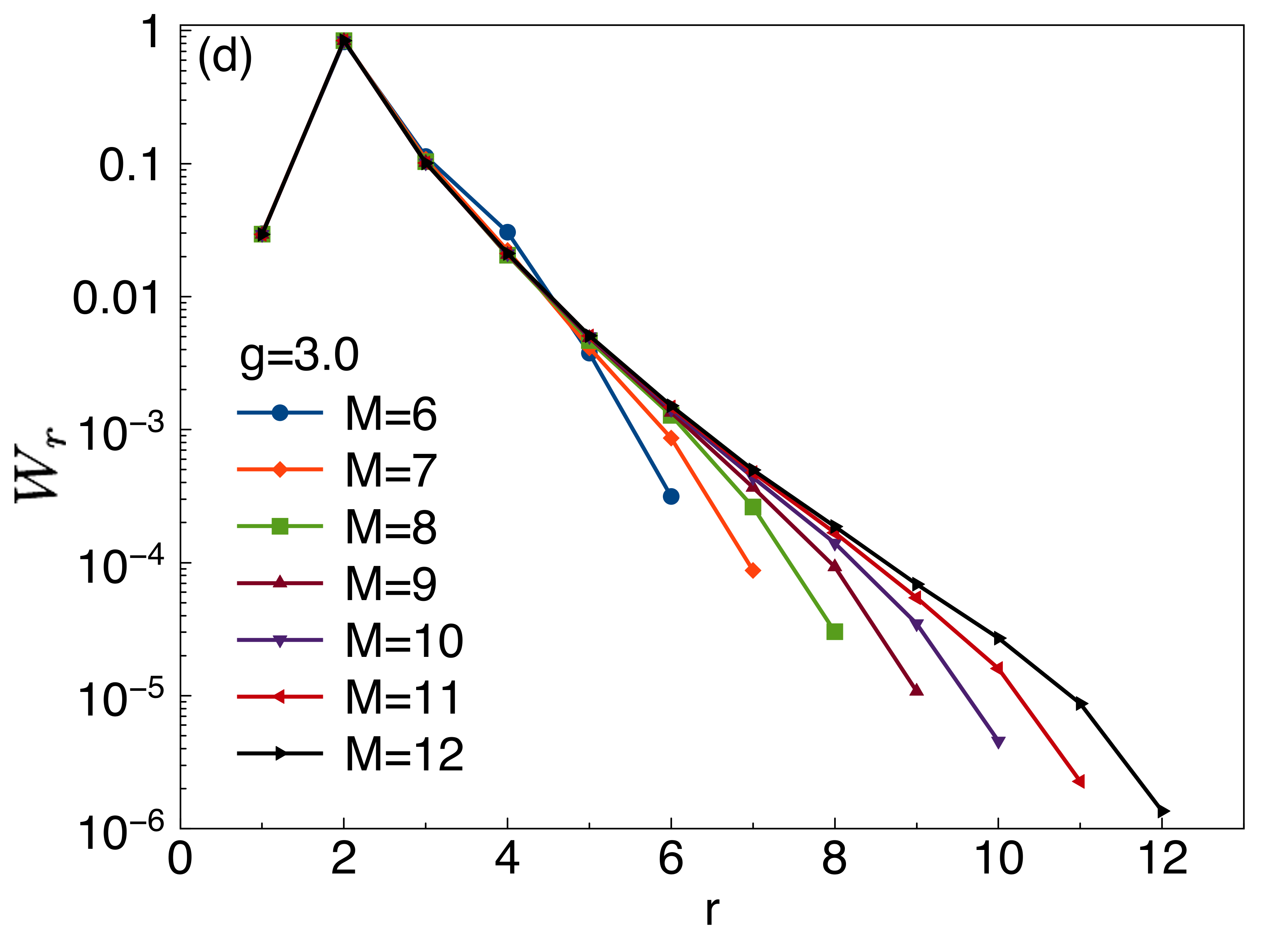}
\includegraphics[width=0.65\columnwidth]{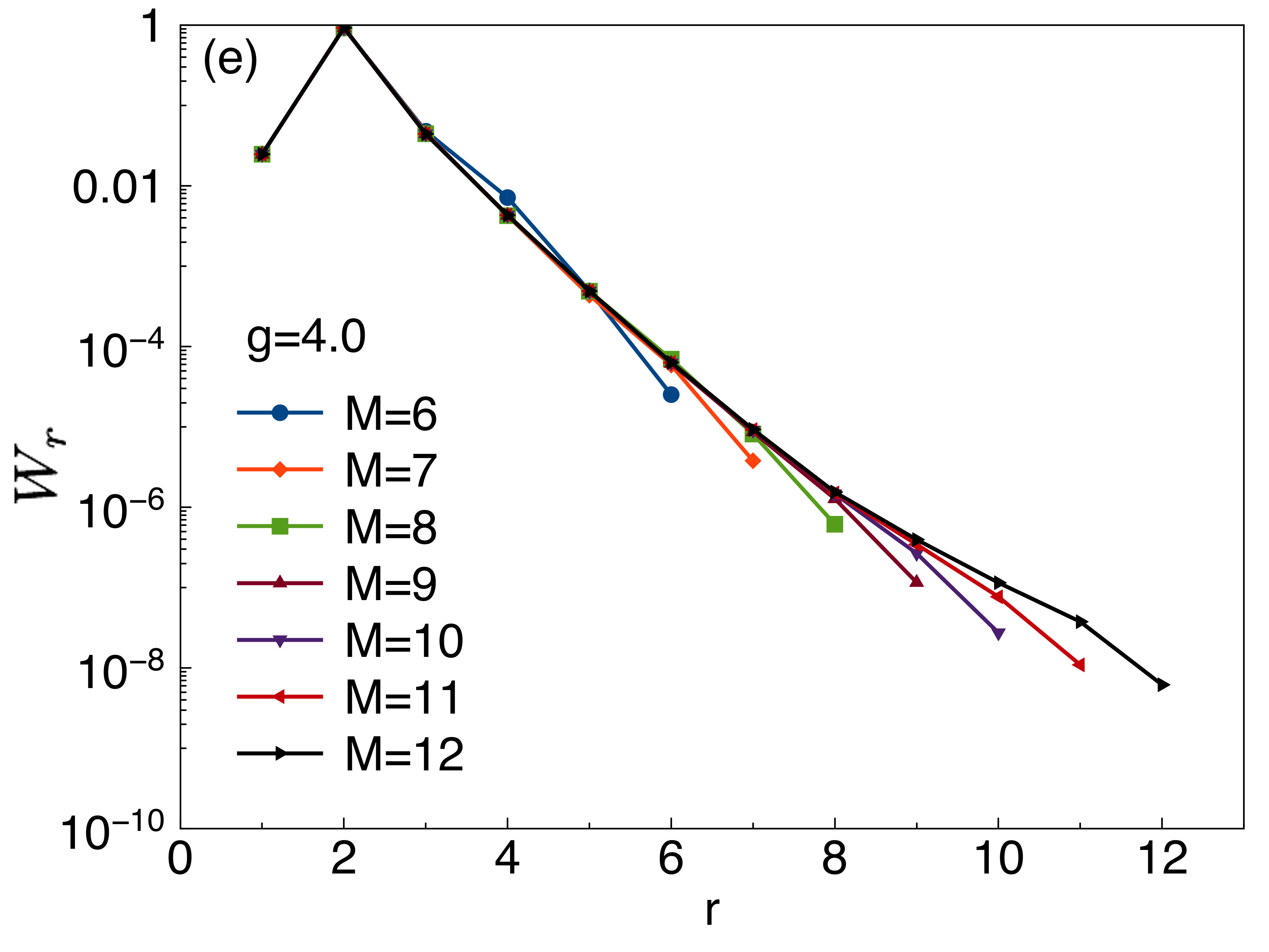}
\includegraphics[width=0.65\columnwidth]{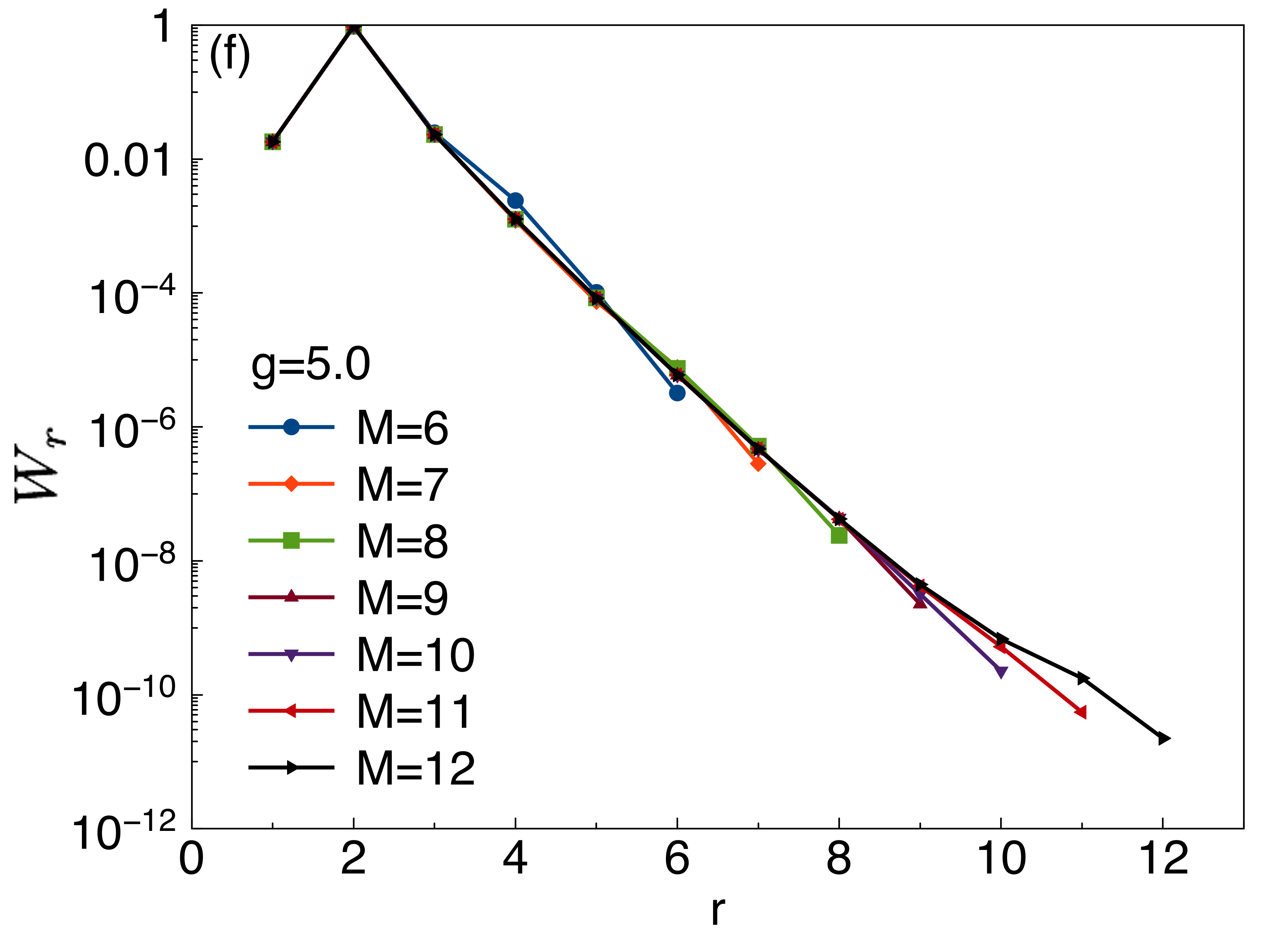}
\caption{\label{fig:Opprofile}
(color online) (a) The weight $W_r$ of range-$r$ operators contained in $\Q_0^{(M)}$ with maximum range $M = 12$ for various $g$.
For large $g \gtrsim 2$, the weight $W_r$ decays exponentially at short distance $r$.
The decay length grows as $g$ decreases.
For small $g \lesssim 2$, the decay of $W_r$ is naively better described by a Gaussian, with the curves almost independent of $g$. 
(b)-(f) The weight $W_r$ of range-$r$ operators in $\Q_0^{(M)}$ when varying $M$ from $M = 6$ to $M = 12$ for fixed $g$ indicated in each panel.
For large $g$, the exponentially decaying part at short distances is essentially converged in $M$; however, the long-distance behavior is not clear.
For small $g$, the weight distribution is pushed to larger $r$ and shows significantly slower decay as a function of $r$ when one increases $M$; this suggests that these operators are not normalizable in the large $M$ limit.
}
\end{figure*}

In this subsection, we examine the real-space shape of the slowest operator more closely.
We define $W_r$ as the weight of $\Q_0^{(M)}$ on range-$r$ operators.
In other words, we can decompose $\Q_0^{(M)} = \sum_{r=1}^M O_r$, with $O_r$ being an operator with range exactly equal to $r$, and define $W_r = \|O_r\|_\text{F}^2$.
The normalization condition ensures that $\sum_r W_r = \sum_r \|O_r\|_\text{F}^2 = \|\Q_0^{(M)}\|_\text{F}^2 = 1$.
Figure~\ref{fig:Opprofile}(a) shows the weights $W_r$ measured for the slowest operator $\Q_0^{(M)}$ with $M = 12$.

For large $g \gtrsim 2$, the weight has an almost-exponential decay at small $r$.
Figures~\ref{fig:Opprofile}(b)-(e) show the weights $W_r$ for $\Q_0^{(M)}$ at fixed $g$ when increasing $M$ from $M = 6$ to $M = 12$.
From the plots, we can see that for large $g$, the weight of the profile is peaked on $2$-local operators, which we can understand already from the leading order SW construction, see Eq.~(\ref{eqn:I1perp}) in App.~\ref{app:ladderformalism}.
We also see that the exponentially decaying part of $W_r$ at short distances is essentially converged, or independent of $M$.
However, the ``shape'' of the operator at long distances is not yet converged and is hence undetermined. 
Despite the fact that we can not determine the long-distance behavior for the slowest operators due to computational limitations, it is clear that the short-distance decay becomes slower when one decreases $g$.

On the other hand, for small $g \lesssim 2$, there is no clear exponential decay even at short distance.
In fact, for fixed $g$ and $M$, the weights appear to decay faster than exponentially (with a Gaussian-like profile).
However, the overall curve shifts to larger $r$ as one increases $M$, with no apparent convergence to some fixed curve independent of $M$.
This suggests the non-normalizability for the $\lim_{M \to \infty} \Q_0^{(M)}$ operators in the small $g$ regime and is also consistent with the result of increasing OIPR as one increases $M$, since there are more Pauli string operators involved in $\Q_0^{(M)}$.

\section{\label{sec:dynamics} Dynamical Simulation}

\begin{figure}
\includegraphics[width=1.0\columnwidth]{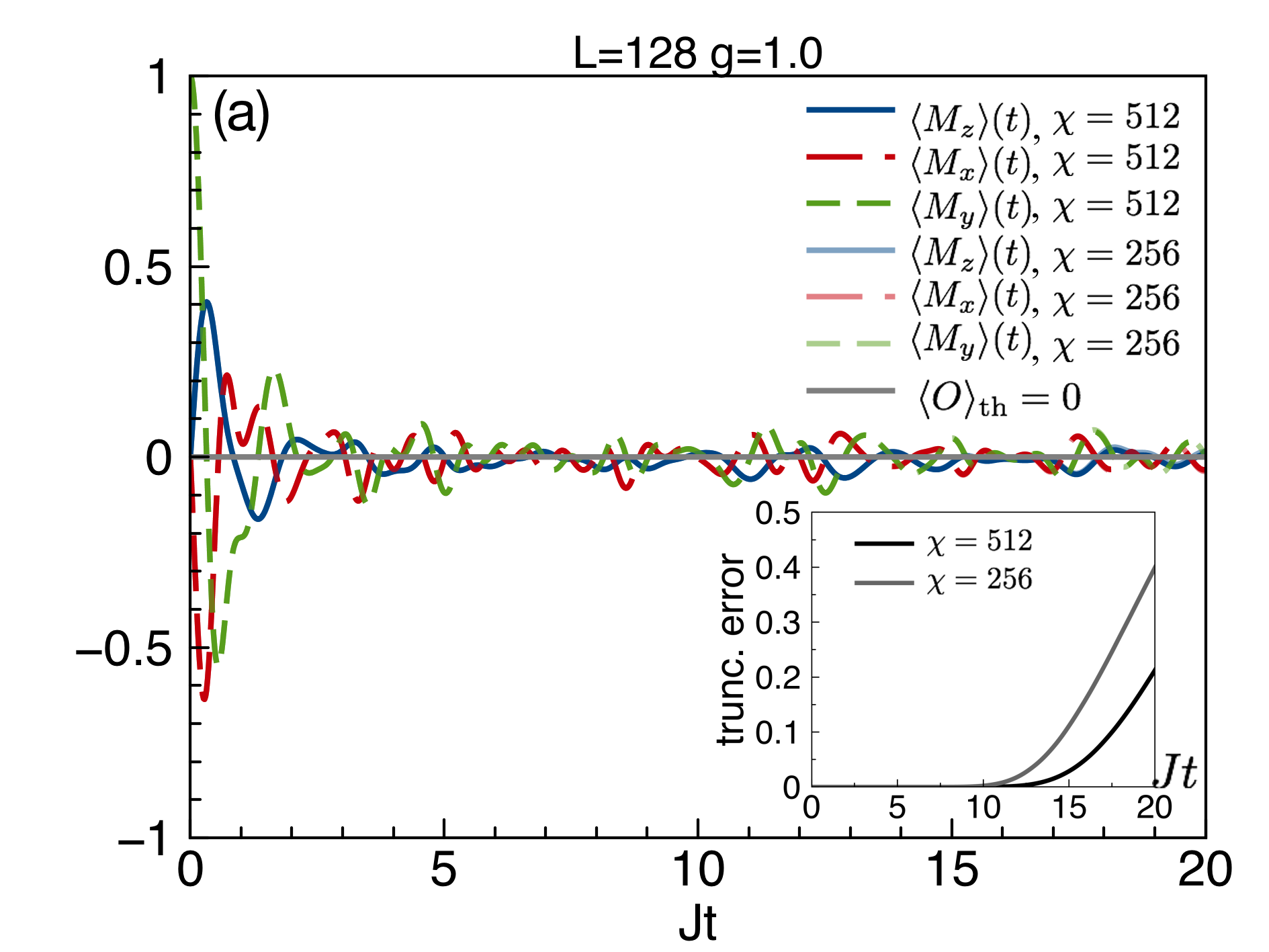}
\includegraphics[width=1.0\columnwidth]{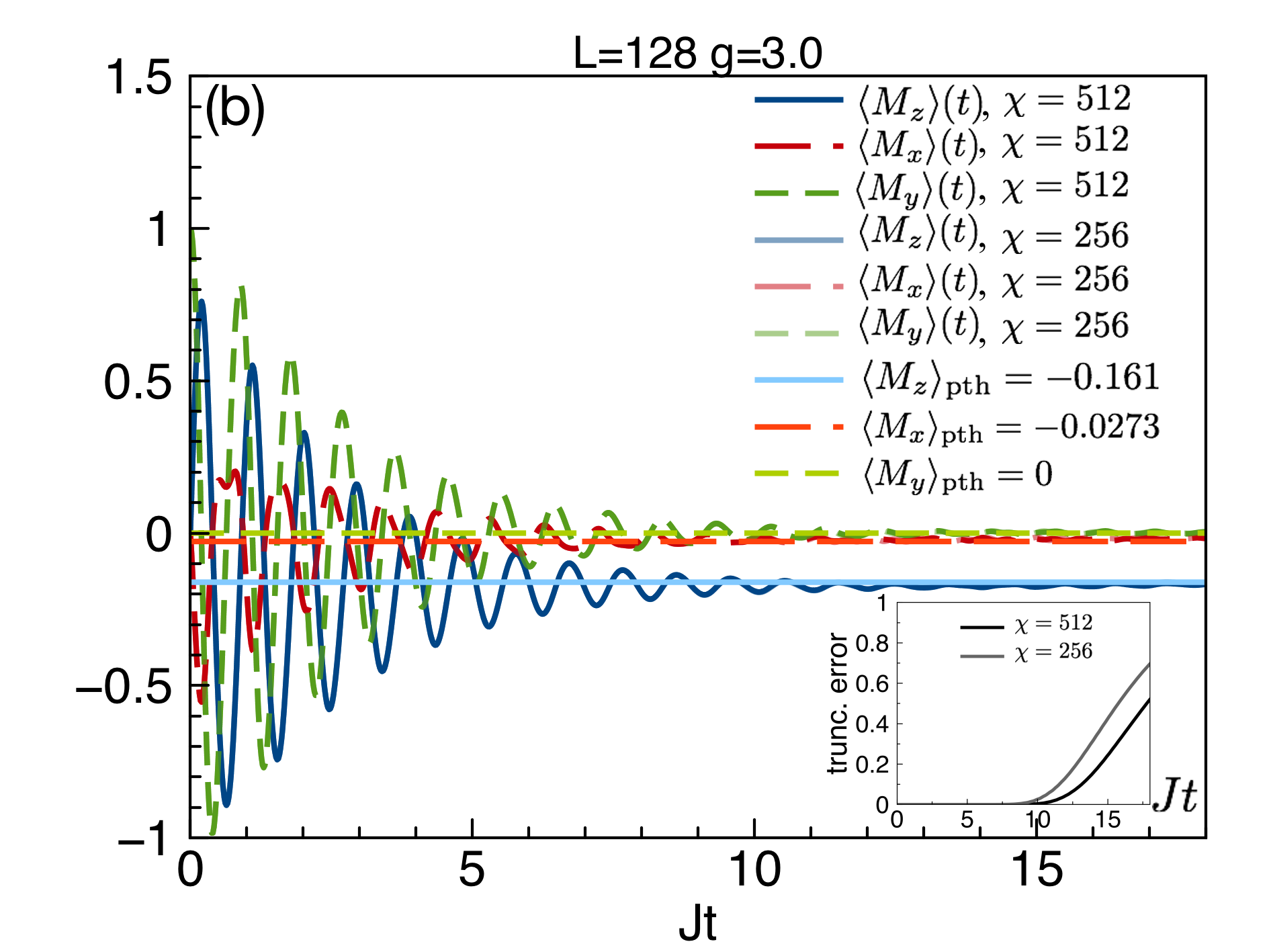}
\caption{\label{fig:TEBD}
(color online) TEBD simulations with bond dimensions $\chi = 256$ and $\chi = 512$ of the evolution of various ``magnetizations'' $\langle M_{x,y,z} \rangle$ upon quench from the initial state $|Y\!+ \rangle$.
The Hamiltonian is given by Eq.~(\ref{eqn:hamiltonian}) with parameters $J = 1.0$, $h = 1.5$, and different $g$ indicated in each panel.
(a) Evolution of the magnetizations for $g = 1$.
The magnetizations appear to approach the thermal value $\langle O \rangle_\text{th} = 0$ expected for any traceless observable $O$.
(b) Evolution of the magnetizations for $g = 3$.
The magnetizations are approaching values described by the generalized Gibbs ensemble that includes also the quasi-conserved operator (see text for details); the expected prethermalized values are marked with subscript ``pth.''
Insets in both panels show truncation error $1 - \langle \psi(t) | \psi(t) \rangle$ of the matrix-product states.
We set the cut-off for the $\chi = 256$ simulation as $\mathfrak{s}_0=10^{-6}$, while for the $\chi = 512$ simulation the cut-off is $\mathfrak{s}_0=10^{-8}$.}
\end{figure}

In order to demonstrate the effect of the quasi-conserved operator that we found in the large $g$ limit, we perform a quench dynamics calculation and observe an intermediate prethermalization state.
We explicitly show that to describe the prethermalization state, one needs to include the slowest operator in the generalized Gibbs ensemble (GGE).
We prepare the initial state as a product state with all spins pointing in the positive-$y$ direction, $|\psi \rangle = |Y\!+ \rangle$ at time $t = 0$.
We evolve the state under the Hamiltonian Eq.~(\ref{eqn:hamiltonian}) as $|\psi(t) \rangle = e^{-iHt} |\psi \rangle$ and measure the evolution of the magnetizations $\langle M_\mu \rangle (t) \equiv \frac{1}{L} \sum_{j=1}^L \langle \psi(t) |\sigma_j^\mu |\psi(t) \rangle / \langle \psi(t) |\psi(t) \rangle$, where $\mu = x, y, z$.
We use time-evolved block-decimation (TEBD) method\cite{vidal_efficient_2003} to simulate the quench dynamics in a system of length $L = 128$ with open boundary conditions.
We use second-order Trotter-Suzuki decomposition with Trotter step $\delta t = 0.02$, which is sufficiently small to achieve the desired accuracy.
We control truncations of the MPS using ``cut-off'' $\mathfrak{s}_0$, which means that we discard singular values smaller than $\mathfrak{s}_0$.
We also use ``bond dimension'' $\chi$, which means that we keep at most $\chi$ singular values.
Two different sets of truncation parameters are used and compared against each other in order to estimate the effect of truncations on the MPS: $\mathfrak{s}_0 = 10^{-6}, \chi = 256$ and $\mathfrak{s}_0 = 10^{-8}, \chi = 512$.
Figure~\ref{fig:TEBD} shows the results of the TEBD calculations.
The loss of norm (truncation error) seen in the insets is due to various truncations and provides some measure of the accuracy of the time evolution (note that it is roughly compensated in the magnetization measurements by normalizing at each $t$, so the exhibited magnetizations are still reasonably accurate over the time range shown).

The effective inverse temperature $\beta$ for any initial state $|\psi \rangle$ is determined by finding the parameter $\beta$ such that equation $\langle \psi| H |\psi \rangle = \frac{1}{Z} \tr[e^{-\beta H} H]$ is satisfied, where $Z = \tr[e^{-\beta H}]$.
The thermal value is defined as $\langle \dots \rangle_\text{th} = \frac{1}{Z} \tr[\rho_\text{th} \dots]$, where $\rho_\text{th} = e^{-\beta H}$ is the associated Gibbs ensemble.
Since $\langle Y\!+| H | Y\!+ \rangle = 0$, it is easy to verify that the effective inverse temperature $\beta = 0$ for this initial state.
As a result, for any traceless observable $O$, the thermal value $\langle O \rangle_\text{th} = 0$.
Hence, if the system thermalizes, the magnetizations $\langle M_\mu \rangle (t)$ should approach zero.

Figure~\ref{fig:TEBD} shows the dynamical evolution of the magnetizations for parameters $g = 1$ and $g = 3$ for system size $L = 128$.
For $g = 1$, even though the magnetizations have not fully equilibrated yet on our simulation times, we can see that they are fluctuating around zero, which is the expected thermal value.
It is therefore reasonable to assume that the magnetizations are equilibrating toward zero, and the system thermalizes, without any prethermalization stage.
On the other hand, for $g = 3$, it is visually clear that $\langle M_z \rangle (t)$ is approaching a sizable nonzero value.
$\langle M_x \rangle (t)$ is also approaching a small nonzero value, even though it is less clear visually.
The prethermalization stage persists over our simulation time, which is consistent with our bound on $t_*$ in Sec.~\ref{subsec:Opnorm}.

Crude features in the dynamics for $g = 3$ can in fact be understood easily as the precession of the spins.
If $J = 0$, the spins, which are pointing along $y\!+$ direction initially, will precess under $H_0$ persistently.
The $T = J \sum_j Z_j Z_{j+1}$ term introduces interactions among the spins, resulting in the decay of the precession, therefore the damping of the magnetization oscillation.
There is a simple quasiparticle description to understand the oscillation and the decay.\cite{lin_quasiparticle_2017} 
Viewing $H_0$ as the ``total particle number,'' part of the $T$ term introduces ``hopping'' of the ``particles.''
The oscillation frequency can essentially be understood as the quasiparticle excitation energy.
Even if we modeled the quasiparticles using an integrable hard-core boson Hamiltonian, the oscillations will damp eventually.
However, the equilibrium value (at least at this intermediate stage) is not described by the Gibbs ensemble.

Here we verify the conjecture that, to describe these intermediate equilibrium values, one needs to include the quasi-conserved quantity into a generalized Gibbs ensemble (GGE).
The GGE in this case is $\rho_\text{pth} \equiv e^{-\alpha H} e^{-\mu \Q_0^{(M)}} / Z_\text{pth}$, and $Z_\text{pth} \equiv \tr[e^{-\alpha H} e^{-\mu \Q_0^{(M)}}]$.
[Here we used the above form for the GGE rather than $e^{-\alpha H - \mu \Q_0^{(M)}}$, since the former is easier to evaluate numerically where one only needs to diagonalize $\Q_0^{(M)}$ once, instead of diagonalizing $\alpha H + \mu \Q_0^{(M)}$ for each pair of $(\alpha, \mu)$.
Furthermore, since $\Q_0^{(M)}$ and $H$ almost commute, we expect the two expressions are approximately the same.]
The parameters $(\alpha, \mu)$ are determined by finding the values satisfying the following equations
\begin{eqnarray}
\langle \psi| H |\psi \rangle &=& \frac{1}{Z_\text{pth}} \tr[H \rho_\text{pth}] ~, \\
\langle \psi| \Q_0^{(M)} |\psi \rangle &=& \frac{1}{Z_\text{pth}} \tr[\Q_0^{(M)} \rho_\text{pth}] ~.
\end{eqnarray}
For the initial state $|Y\!+ \rangle$, $\langle Y\!+ | H |Y\!+ \rangle = 0$; while $\frac{1}{L} \langle Y\!+ | \Q_0^{(M)} | Y\!+ \rangle = 0.63889$ using $\Q_0^{(M=12)}$.
In fact, the ``particle densities'' in the initial state, $\frac{1}{L} \langle Y\!+ | \Q_0^{(M)} | Y\!+ \rangle$, measured from $M = 8$ to $M = 11$ are within approximately $1\%$ from the $M = 12$ result.
Note also that since the initial state is a product state, the particle density in a finite system of size $L$ will be independent of $L$ as long as $L \geq M$.
We then solve for $(\alpha, \mu)$ on the right-hand side using Newton's method, while $\rho_\text{pth}$ is evaluated by the exact diagonalization of $H$ and $\Q_0^{(M)}$ for system size $L = 16$ and $M = 8$ (the largest $L$ and $M$ accessible with our computation resources), under periodic boundary condition; we find $(\alpha, \mu) = (-0.05155, -1.4417)$.
We then calculate the prethermalized GGE values as $\langle M_z \rangle_\text{pth} = -0.161045$, $\langle M_x \rangle_\text{pth} = -0.0273397$, and $\langle M_y \rangle_\text{pth} = 0$ (by time-reversal symmetry in the effective Hamiltonian for the prethermalized state), where $\langle \dots \rangle_\text{pth} = \frac{1}{Z_\text{pth}} \tr[\rho_\text{pth} \dots]$.
Figure~\ref{fig:TEBD}(b) shows a fair agreement between the observed prethermal equilibrium values $\langle M_\mu \rangle(t)$ and the GGE estimates $\langle M_\mu \rangle_\text{pth}$.

We have thus explicitly verified that the quasi-conserved operator in the large $g$ regime has nontrivial effects on the relaxation of the system.
Furthermore, to describe the equilibrium values at the intermediate prethermalization stage, one needs to include this quasi-conserved operator in the generalized Gibbs ensemble.

\section{\label{sec:conclusion} Discussion}
We numerically construct the slowest operator that is translationally invariant with maximum range $M$.
In the small coupling regime, the norm of the commutator of the slowest operator with the Hamiltonian has a power-law dependence on $M$.
On the other hand, in the strong coupling regime, we find exponential decay at least at small $M$, identifying the slowest operator as quasi-conserved operator.
At larger $M$, however, the decay becomes slower, possibly a power law.
This may be related to the eventual thermalization of the system, after a prethermalization stage with a parametrically long time scale.
The true behavior at large $M$ is not certain due to the limitations of our numerical calculations, constrained by the exponentially large operator Hilbert space.
However, from the analysis of the OIPR, it appears that the quasi-conserved operator resides only on a very small fraction of states in the total Hilbert space.
It may therefore be possible to reduce the relevant operator Hilbert space dimension by identifying the property of this space and by restricting studies to only such an ansatz, which could potentially allow reaching larger maximum range; we leave this idea for future studies.

Our TEBD calculation of the dynamics after a quench explicitly confirms the existence of the prethermalization stage for large $g$ and further supports the GGE construction that includes the quasi-conserved operator.
From the residual Frobenius norm of the quasi-conserved operator $\sqrt{\lambda_0(M)}$, we can heuristically provide a lower bound on the thermalization time scale as $t_* \sim 2^{-\frac{M+1}{2}} \lambda_0(M)^{-1/2}$; we can also bound the thermalization time more accurately by measuring the conventional operator norm, $t_* \sim (\|[H, \mathcal{Q}_0^{(M)}]\|_{\text{op}}/\|\mathcal{Q}_0^{(M)}\|_{\text{op}})^{-1}$.
However, we cannot determine the time scale of the prethermalization stage from the TEBD calculations due to the limited accessible simulation time.
Even if we could extend the TEBD calculation to longer times, we may have to consider a different truncation scheme\cite{Chris_DMT} to get more accurate results.
A straightforward truncation of small singular values in the MPS state does not necessary conserve the quasi-conserved quantity, and hence may artificially decrease the prethermalization time.
It would be interesting to extract the prethermalization time scale directly from simulations or even from experiments to compare with our heuristic argument.

Another interesting observation which we still do not fully understand is the apparent ``transition'' between the prethermalization and ergodic behaviors.
While it is not clear what defines the prethermalization ``phase,'' it appears that the different scaling behavior of the residual norm can serve as an indicator.
Furthermore, the OIPR seems to provide a stronger signature:  the OIPR of the slowest operator appears to converge with $M$ in the large coupling regime $g \geq 2.5$, while the OIPR diverges in the ergodic regime.
Also, the operator profile appears to converge with increasing $M$ for $g \geq 3$ while it does not converge for $g \leq 2$.
The persistence of this sharp distinction between the prethermalization and ergodic behaviors to larger $M$ or even $M \to \infty$ deserves more study.

An exciting possibility which may be suggested by our results for $g \geq 3$ is the existence of the truly conserved quasi-local quantity,\cite{fagotti_conservation_2014, grover_quantum_2014, garrison_partial_2017} or the convergence of the SW transformation in the $n \to \infty$ limit.
While the theoretical upper bounds on the norms in the SW series do not prove the convergence, they do not disprove it either.
In fact, from our numerical calculations in App.~\ref{app:Vmnorm}, the convergence of the SW transformation might even be possible.
This would imply that we can find a (quasi-local) unitary transformation $U$ such that $U^\dagger H U$ commutes with $H_0$.
A partial breakdown of ETH would be possible due to the existence of this emergent ``particle conservation'' in the entire spectrum.
In fact, the quantum Ising model $H = \sum_j X_j + \epsilon  \sum_j Z_j Z_{j+1}$ provides an example where the SW procedure converges.\cite{else_prethermal_2017, Lin_QIM_unpub2017}
In this case, instead of one (or few) conserved quantity, there is a macroscopic number of conservation laws due to the model's integrability.
Nevertheless, the SW procedure ``does not know'' the free fermion solution but still converges and finds a conserved quantity, which happens to be the total number of the Bogoliubov quasi-particles.
Our intriguing results in the nonintegrable model thus warrant further detailed studies of the convergence of the SW transformation.

In conclusion, by numerically searching for the slowest operator, we identified the quasi-conserved operator at large coupling, which we believe is responsible for the prethermalization behavior.
The residual norm of the quasi-conserved operator has exponential decay with its maximum range up to some point; the OIPR and real-space profile show that it is localized in the operator Hilbert space and real space.
By comparing with the perturbative SW construction, we concluded that the quasi-conserved quantity is essentially the dressed total spin-$z$ operator.
Finally, by simulating the quench dynamics, we verified the conjecture that the quasi-conserved quantity leads to prethermalization behavior.
Furthermore, the apparent equilibrium values at the prethermalization stage can be described by including the quasi-conserved quantity in the GGE.

\begin{acknowledgments}
The authors would like to thank M.~C.~Ba\~nuls, M.~Serbyn, V.~Khemani, and N.~J.~Robinson for valuable discussions at a poster session at Aspen Center for Physics where bulk of this work was presented in January 2017.
We would also like to thank J.~R.~Garrison, R.~V.~Mishmash and M.~P.~A.~Fisher for inspiring discussions of their work and C.~White for discussions on the TEBD calculation.
This work was supported by NSF through grant DMR-1619696 and also by the Institute for Quantum Information and Matter, an NSF Physics Frontiers Center, with support of the Gordon and Betty Moore Foundation.
\end{acknowledgments}

\appendix
\section{Generalized Ladder Algebra Formalism}
\label{app:ladderformalism}
In Ref.~\onlinecite{macdonald_$fractu$_1988}, MacDonald {\it et.~al.}\ proposed a perturbative expansion for the electronic Hubbard model in the large $U$ limit using generalized ladder algebra formalism.
In fact, their transformation is a variant of a local SW transformation.\cite{datta_low-temperature_1996, bravyi_schriefferwolff_2011}
A small difference from the SW transformation used in the present work is that Ref.~\onlinecite{macdonald_$fractu$_1988} constructs a unitary transformation of the form $\exp(i \epsilon S_1 + i \epsilon^2 S_2 + \dots + i \epsilon^n S_n)$ rather than $\exp(i \epsilon S_1) \exp(i \epsilon^2 S_2) \dots \exp(i \epsilon^n S_n)$.
This modifies Eq.~(\ref{eqn:Vm}) by replacing $\mathfrak{f}(k_1, \dots, k_p)$ to $\frac{1}{p!}$, see Ref.~\onlinecite{datta_low-temperature_1996}.
The variant in the present paper is slightly easier to use in numerical calculations because there are fewer terms in the series.

For our spin Hamiltonian, the spectrum of the solvable limit $H_0$ is composed of different sectors labeled by different ``particle'' numbers.
To be concrete, consider 
\begin{equation}
\label{H0app}
H_0 = \Gamma \sum_j Z_j ~,
\end{equation}
where we have rotated $g X_j + h Z_j$ to the new $z$-direction and $\Gamma = \sqrt{g^2 + h^2}$.
The (rotated) perturbation $T$ can be decomposed into $T = \sum_{\ell=-2}^2 T_{\ell}$, where $T_{\ell}$-s are called generalized ladder operators, with the property that $[H_0, T_{\ell}] = 2\Gamma \ell T_{\ell}$.
More explicitly, defining $P_j, M_j = \frac{1}{2}(X_j \pm i Y_j)$, we have
\begin{eqnarray}
T_{+2} &=& t_2 \sum_j P_j P_{j+1} ~, \label{T+2} \\
T_{-2} &=& t_2 \sum_j M_j M_{j+1} = T_{+2}^\dagger ~, \\
T_{+1} &=& t_1 \sum_j (P_j Z_{j+1} + Z_j P_{j+1}) ~, \\
T_{-1} &=& t_1 \sum_j (M_j Z_{j+1} + Z_j M_{j+1}) = T_{+1}^\dagger ~, \\
T_0 &=& u_0 \sum_j Z_j Z_{j+1} + w_0 \sum_j (P_j M_{j+1} + M_j P_{j+1}) ~, ~~~ \label{T0}
\end{eqnarray}
where $t_1 = -\frac{J g h}{\Gamma^2}$, $t_2 = \frac{J g^2}{\Gamma^2}$, $u_0 = \frac{J h^2}{\Gamma^2}$, and $w_0 = t_2$.

Let us further define
\begin{equation}
T^{(k)}(\ell_1, \dots, \ell_k) \equiv T^{(k)}[\ell] = T_{\ell_1} \dots T_{\ell_k} ~.
\end{equation}
One can easily verify that these operators are also generalized ladder operators:
$[H_0, T^{(k)}[\ell]] = 2\Gamma M^{(k)}[\ell] T^{(k)}[\ell]$, where $M^{(k)}[\ell] \equiv \sum_{i=1}^k \ell_i$.
In particular, if $M^{(k)}[\ell] = 0$, then $T^{(k)}[\ell]$ is in the nullspace of $\ad_{H_0}$.

It is easy to argue that $V_m$ can all be expressed as nested commutators of $T_\ell$-s by mathematical induction from Eq.~(\ref{eqn:Vm}) and Eq.~(\ref{eqn:solSm}), given that $i S_k$ and $V_k$ are all composed of nested commutators of $T_\ell$-s for $k < m$.
Assuming $V_m = (2\Gamma)^{1-m} \sum_{\{\ell\}} C^{(m)}[\ell] \, T^{(m)}[\ell]$, where coefficients $C^{(m)}[\ell]$ have special structure such that $V_m$ is composed of nested commutators of $T_\ell$-s, Eq.~(\ref{eqn:solSm}) gives 
\begin{equation}
i S_m = (2\Gamma)^{-m} \sum_{\{\ell\}, M^{(m)}[\ell] \neq 0} \frac{C^{(m)}[\ell] \, T^{(m)}[\ell]}{M^{(m)}[\ell]} ~.
\end{equation}
One can therefore see that it is a special type of the local SW where everything is expressed by the generalized ladder algebra.

As an example, we work out the effective Hamiltonian and the quasi-conserved operator to second-order.
At first order, $V_1 = T$, so we want to find $i S_1$ such that $i\ad_{S_1}(H_0) + T = T_0$.
The solution is 
\begin{equation}
i S_1 = \frac{1}{4\Gamma}(T_{+2} - T_{-2}) + \frac{1}{2\Gamma}(T_{+1} - T_{-1}) ~.
\end{equation}
We therefore obtain 
\begin{eqnarray}\label{eqn:V2app}
V_2 &=& \frac{1}{2} i\ad_{S_1} i\ad_{S_1}(H_0) + i\ad_{S_1}(T)
= \frac{1}{2} i\ad_{S_1}(T_0 + T) \nonumber \\
&=& \frac{1}{8\Gamma} 
\Big( 2[T_{+2}, T_0] - 2[T_{-2}, T_0] + 4[T_{+1}, T_0] - 4[T_{-1}, T_0] \nonumber \\
&-& [T_{+2}, T_{+1}] + [T_{-2}, T_{-1}] + 3[T_{+2}, T_{-1}] - 3[T_{-2}, T_{+1}] \nonumber \\
&+& 2[T_{+2}, T_{-2}] + 4[T_{+1}, T_{-1}] \Big) ~.
\end{eqnarray}
The last line is the diagonal part of $V_2$ while the rest is the off-diagonal part.
At second order, we solve for $i S_2$ such that $i\ad_{S_2}(H_0) + V_2 = V_2^\text{diag}$; the solution is
\begin{eqnarray}
i S_2 &=& \frac{1}{48\Gamma^2} \Big( 3[T_{+2}, T_0] + 3[T_{-2}, T_0] + 12[T_{+1}, T_0] \nonumber \\
&+& 12[T_{-1}, T_0] - [T_{+2}, T_{+1}] - [T_{-2}, T_{-1}] \nonumber \\
&+& 9[T_{+2}, T_{-1}] + 9[T_{-2}, T_{+1}] \Big) ~.
\end{eqnarray}

We can now obtain contributions to the quasi-conserved operator as 
\begin{eqnarray}
I_1 &=& T - T_0 = T_{+2} + T_{-2} + T_{+1} + T_{-1} ~, \\
I_2 &=& -V_2^\text{diag} + i\ad_{S_1}(T_0) = \frac{1}{4\Gamma} \Big(-[T_{+2}, T_{-2}] - 2[T_{+1}, T_{-1}] \nonumber \\
&+& [T_{+2} - T_{-2}, T_0] + 2[T_{+1} - T_{-1}, T_0] \Big) ~.
\end{eqnarray}
To compare with the slowest operator approach, we calculate the component perpendicular to $H$, which can be obtained via Eq.~(\ref{eqn:I(n)perp}).
For example, we find for the leading-order SW construction,
\begin{equation}\label{eqn:I1perp}
I_1^\perp = I_1 - \frac{J^2 g^2 (g^2 + 4 h^2)}{2 (J^2+g^2 + h^2)^2(g^2+h^2)} H ~.
\end{equation}
This can be used to understand the 1-local and 2-local content of the slowest operator for large $g$, see Fig.~\ref{fig:Opprofile}.

\section{Bound on $H_{>n}$}
\label{app:boundHn}
In this Appendix, we prove the bound on $\|H_{>n}\|_\text{F}$ quoted in the main text.
We set the norm of $H_0$ as the energy unit, $\|H_0\|_\text{F} = \Gamma$, and the norm of the perturbation term as $\epsilon \|T\|_\text{F} = \epsilon \Gamma$, where $\epsilon$ is the strength of the perturbation and is used to organize the perturbative expansion.
We also assume that $H_0 \in \mathcal{T}_1$ and $T \in \mathcal{T}_2$.
Without loss of generality, we assume working in the basis such that $H_0 = \Gamma \sum_j Z_j$, since for any general $H_0 \in \mathcal{T}_1$ one can always rotate the basis to achieve this.
The results in this appendix are parallel to the results obtained in Ref.~\onlinecite{bravyi_schriefferwolff_2011} but are tailored to our definitions of norms for translationally-invariant operators and the specific SW procedure used; furthermore, our results are not restricted to effective Hamiltonians in the lowest-energy sector but are valid for the entire spectrum.

We first prove the locality of the operators $S_m$ and $V_m$ in the SW transformation procedure, Sec.~\ref{subsec:SWproc}, and of the operators $I_m$ in the quasi-conserved quantity obtained by SW transformation, Sec.~\ref{subsec:quasiIOviaSW}.
\begin{prop}
\label{prop:locality}
$V_m \in \mathcal{T}_{m+1}$, $S_m \in \mathcal{T}_{m+1}$, and $I_m \in \mathcal{T}_{m+1}$.
\end{prop}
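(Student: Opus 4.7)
My plan is to prove all three statements simultaneously by strong induction on $m$, using as the workhorse the already-noted fact that if $A \in \mathcal{T}_r$ and $O \in \mathcal{T}_s$ then $\operatorname{ad}_A(O) \in \mathcal{T}_{r+s-1}$. The base case is immediate: $V_1 = T \in \mathcal{T}_2$ by assumption, and for $S_1$ and $I_1$ I will invoke the pseudoinverse/commutator steps described below in the inductive argument.

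For the inductive step, I will assume $V_k, S_k, I_k \in \mathcal{T}_{k+1}$ for every $k < m$ and handle each of the three operators in turn. For $V_m$, I read off Eq.~(\ref{eqn:Vm}): every term is a nested commutator of the form $\operatorname{ad}_{S_{k_p}}\cdots\operatorname{ad}_{S_{k_1}}(H_0)$ with $\sum_i k_i = m$, or $\operatorname{ad}_{S_{k_p}}\cdots\operatorname{ad}_{S_{k_1}}(T)$ with $\sum_i k_i = m-1$. Applying the $\mathcal{T}_{r+s-1}$ rule iteratively, the first type has range $\sum_i (k_i + 1) + 1 - p = m + 1$, and the second has range $\sum_i (k_i + 1) + 2 - p = (m-1) + 2 = m+1$, giving $V_m \in \mathcal{T}_{m+1}$ as a finite linear combination of operators in $\mathcal{T}_{m+1}$.

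For $S_m$, defined by $iS_m = [\operatorname{ad}_{H_0}]^{-1} V_m^{\text{off-diag}}$, I need to verify that the pseudoinverse preserves the maximum range. Here I will use the working basis $H_0 = \Gamma \sum_j Z_j$ (obtained by the on-site rotation mentioned in the excerpt). Then the Pauli-string basis vectors of $\mathcal{T}_{m+1}$ are simultaneous eigenoperators of $\operatorname{ad}_{H_0}$ with eigenvalues $2\Gamma \sum_{k} \ell_k$, where $\ell_k \in \{-1,0,+1\}$ records the ladder character of the $k$-th site factor (and the $\operatorname{ad}_{H_0}$-kernel corresponds to the diagonal part, which is subtracted off before applying the pseudoinverse). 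Consequently $[\operatorname{ad}_{H_0}]^{-1}$ acts diagonally in this basis by rescaling coefficients of off-diagonal Pauli strings, which manifestly maps $\mathcal{T}_{m+1}$ into itself, so $S_m \in \mathcal{T}_{m+1}$. For $I_m$ given by Eq.~(\ref{eqn:solIm}), the same range count as for the first type of $V_m$ terms applies verbatim, yielding $I_m \in \mathcal{T}_{m+1}$.

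The only subtlety I anticipate is the pseudoinverse step: one must be careful that $[\operatorname{ad}_{H_0}]^{-1}$ is defined on $V_m^{\text{off-diag}}$ and leaves the support unchanged. This is why working in the rotated frame where $H_0$ is a single-site operator is essential; otherwise one might worry that inverting a non-local superoperator could enlarge the range. Once this is observed, the induction closes cleanly, and the finite sum structure in Eqs.~(\ref{eqn:Vm}) and~(\ref{eqn:solIm}) ensures that no normalization or convergence issues enter at this level — only the purely algebraic range counting.
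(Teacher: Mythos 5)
Your proof is correct and follows essentially the same route as the paper's: induction on $m$, range-additivity $\ad_A(O)\in\mathcal{T}_{r+s-1}$ applied to the nested commutators in Eqs.~(\ref{eqn:Vm}) and (\ref{eqn:solIm}), and the observation that $[\ad_{H_0}]^{-1}$ preserves $\mathcal{T}_{m+1}$ because $H_0\in\mathcal{T}_1$. Your explicit justification of the pseudoinverse step via diagonalization in the ladder (Pauli-string) basis is a slightly more detailed version of the paper's one-line assertion, but the argument is the same.
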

\begin{proof}
By assumption, $H_0 \in \mathcal{T}_1$, hence $\ad_{H_0}$ maps $\mathcal{T}_m$ to $\mathcal{T}_m$.
The pseudo-inverse $[\ad_{H_0}]^{-1}$ thus also maps from $\mathcal{T}_m$ to $\mathcal{T}_m$.
Therefore, from Eq.~(\ref{eqn:solSm}) it follows that if $V_m \in \mathcal{T}_{m+1}$ then $S_m \in \mathcal{T}_{m+1}$.
Initially, $V_1 = T \in \mathcal{T}_2$ and hence $S_1 \in \mathcal{T}_2$.
Assume $V_k \in \mathcal{T}_{k+1}$ and $S_k \in \mathcal{T}_{k+1}$ hold for $k \leq m - 1$.
Now consider the first term in $V_m$ in Eq.~(\ref{eqn:Vm}); we see that $i\ad_{S_{k_p}} \dots i\ad_{S_{k_1}}(H_0) \in \mathcal{T}_{m+1}$ since $k_1 + \dots + k_p = m$.
The second term in $V_m$ is also in $\mathcal{T}_{m+1}$, by noticing that $k_1 + \dots + k_p = m - 1$ and $T \in \mathcal{T}_2$ in $i\ad_{S_{k_p}} \dots i\ad_{S_{k_1}}(T)$.
By similar argument applied to Eq.~(\ref{eqn:solIm}), we have $I_m \in \mathcal{T}_{m+1}$.
The proposition is proved by mathematical induction.
\end{proof}

Here we introduce a different norm on the operator Hilbert space $\mathcal{T}_k$ which will be technically useful in the future proofs.
Consider any operator $O \in \mathcal{T}_k$ written in the Pauli-string basis composed of $I$, $P \equiv \frac{1}{2}(X + iY)$, $M \equiv \frac{1}{2}(X - iY)$, and $Z$: 
$O = \sum_j \sum_{\mathbf{a}} o_{\mathbf{a}} Q_{j; k}^{\mathbf{a}}$, where $Q_{j; k}^{\mathbf{a}} = \sigma_j^{a_1} \dots \sigma_{j+k-1}^{a_k}$ denotes the ``$I$-$P$-$M$-$Z$'' string with support on sites $j$ to $j + k - 1$ with non-identity on the site $j$.
That is, $\sigma$ on each site other than $j$ can be one of the four operators $I$, $P$, $M$, or $Z$, while it can be only $P$, $M$, or $Z$ on the site $j$ (recall that $\mathcal{T}_k$ consists of traceless operators, and this ``gauge'' choice for writing local operators is similar to the one in the main text).
\begin{defi}
For $O = \sum_j \sum_{\mathbf{a}} o_{\mathbf{a}} Q_{j; k}^{\mathbf{a}} \in \mathcal{T}_k$, the one-norm is defined as $\|O\|_1 = \sum_{\mathbf{a}} |o_{\mathbf{a}}|$.  
\end{defi}
Such a definition of the one-norm is in fact basis-dependent, so it is crucial that our one-norm is understood in the basis such that $H_0 = \Gamma \sum_j Z_j$ and operators are expanded in the $I$-$P$-$M$-$Z$ strings.
These particular $I$-$P$-$M$-$Z$ strings are orthogonal but not normalized with the respect to the Frobernius inner product in $\mathcal{T}_k$.
In fact, $\|Q_{j; k}^{\mathbf{a}}\|_\text{F}^2 = 2^{-N_{\mathbf{a}}}$, where $N_{\mathbf{a}}$ is the number of $P$ and $M$ letters in $Q_{j; k}^{\mathbf{a}}$. 

Our one-norm can be used to bound the Frobenius norm discussed in the main text:
\begin{prop}
\label{prop:onenorm_vs_Frobenius}
For $O \in \mathcal{T}_k$, we have $\|O\|_\text{F} \leq \|O\|_1 \leq \sqrt{5 \cdot 6^{k-1}} \|O\|_\text{F}$.
\end{prop}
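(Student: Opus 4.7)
The plan is to expand $O$ in the $I$-$P$-$M$-$Z$ basis and relate both norms to sums over coefficients $|o_{\mathbf{a}}|$, with the geometry controlled by the number $N_{\mathbf{a}}$ of $P$ or $M$ letters in each string $Q_{j;k}^{\mathbf{a}}$.

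First I would verify that the strings $\{Q_{j;k}^{\mathbf{a}}\}$ are pairwise orthogonal in the (intensive, gauge-fixed) Frobenius inner product, and that $\|Q_{j;k}^{\mathbf{a}}\|_\text{F}^2 = 2^{-N_{\mathbf{a}}}$. This is a site-by-site calculation: on a given site, the single-letter norms are $\tr[P^\dagger P]/2 = \tr[M^\dagger M]/2 = 1/2$ while $\tr[Z^\dagger Z]/2 = \tr[I]/2 = 1$, and cross-overlaps between different letters vanish (including $P\cdot M^{\dagger}=PP=0$ and all traceless mixed products). Taking tensor products across the $k$ sites gives orthogonality and $\|Q_{j;k}^{\mathbf{a}}\|_\text{F}^2 = 2^{-N_{\mathbf{a}}}$, so
\begin{equation}
\|O\|_\text{F}^2 \;=\; \sum_{\mathbf{a}} |o_{\mathbf{a}}|^2 \, 2^{-N_{\mathbf{a}}} \,.
\end{equation}

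The lower bound is then immediate: since $2^{-N_{\mathbf{a}}} \leq 1$,
\begin{equation}
\|O\|_\text{F}^2 \;\leq\; \sum_{\mathbf{a}} |o_{\mathbf{a}}|^2 \;\leq\; \Bigl(\sum_{\mathbf{a}} |o_{\mathbf{a}}|\Bigr)^{2} \;=\; \|O\|_1^{\,2}\,.
\end{equation}
For the upper bound I would split $|o_{\mathbf{a}}| = (|o_{\mathbf{a}}|\,2^{-N_{\mathbf{a}}/2})\cdot 2^{N_{\mathbf{a}}/2}$ and apply Cauchy–Schwarz:
\begin{equation}
\|O\|_1 \;=\; \sum_{\mathbf{a}} |o_{\mathbf{a}}| \;\leq\; \Bigl(\sum_{\mathbf{a}} |o_{\mathbf{a}}|^{2}\, 2^{-N_{\mathbf{a}}}\Bigr)^{\!1/2} \Bigl(\sum_{\mathbf{a}} 2^{N_{\mathbf{a}}}\Bigr)^{\!1/2} \;=\; \|O\|_\text{F}\,\Bigl(\sum_{\mathbf{a}} 2^{N_{\mathbf{a}}}\Bigr)^{\!1/2}.
\end{equation}

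The one nontrivial step is counting $\sum_{\mathbf{a}} 2^{N_{\mathbf{a}}}$, which factorizes over sites. Each site $i \in \{2,\dots,k\}$ ranges over $\{I,P,M,Z\}$, contributing $1+2+2+1=6$. The leftmost site (which by the gauge-fixing cannot be $I$) ranges over $\{P,M,Z\}$, contributing $2+2+1=5$. Hence $\sum_{\mathbf{a}} 2^{N_{\mathbf{a}}} = 5\cdot 6^{k-1}$, yielding $\|O\|_1 \leq \sqrt{5\cdot 6^{k-1}}\,\|O\|_\text{F}$. The main conceptual care needed is keeping the gauge-fixing on the first site straight when doing this enumeration; otherwise the argument is a direct Cauchy–Schwarz calculation.
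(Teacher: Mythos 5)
Your proof is correct and follows essentially the same route as the paper's: the lower bound via $2^{-N_{\mathbf{a}}}\le 1$ plus the elementary inequality between the $\ell^2$ and $\ell^1$ sums, and the upper bound via the same Cauchy--Schwarz splitting $|o_{\mathbf{a}}| = (|o_{\mathbf{a}}|2^{-N_{\mathbf{a}}/2})\cdot 2^{N_{\mathbf{a}}/2}$ with the count $\sum_{\mathbf{a}}2^{N_{\mathbf{a}}}=5\cdot 6^{k-1}$. Your explicit site-by-site verification of $\|Q_{j;k}^{\mathbf{a}}\|_\text{F}^2=2^{-N_{\mathbf{a}}}$ and of the gauge-fixed combinatorial count is a welcome elaboration of steps the paper only states.
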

\begin{proof}
Indeed, writing $O$ in the $I$-$P$-$M$-$Z$ strings as $O = \sum_j \sum_{\mathbf{a}} o_{\mathbf{a}} Q_{j; k}^{\mathbf{a}}$, we have 
\begin{eqnarray*}
\|O\|_\text{F}^2 = \sum_{\mathbf{a}} |o_{\mathbf{a}}|^2  2^{-N_{\mathbf{a}}} \leq \sum_{\mathbf{a}} |o_{\mathbf{a}}|^2 \leq \left( \sum_{\mathbf{a}} |o_{\mathbf{a}}| \right)^2 = \|O\|_1^2 ~.
\end{eqnarray*}
The last inequality follows from the fact that there are more non-negative terms on the right-hand side.

For the bound on the one-norm, using Cauchy-Schwartz inequality, we have
\begin{equation*}
\sum_{\mathbf{a}} \left( |o_{\mathbf{a}}| \, 2^{-\frac{N_{\mathbf{a}}}{2}} \right) \left( 2^{\frac{N_{\mathbf{a}}}{2}} \right) \leq \sqrt{\sum_{\mathbf{a}} \left( |o_{\mathbf{a}}| \, 2^{-\frac{N_{\mathbf{a}}}{2}} \right)^2} \sqrt{\sum_{\mathbf{a}} \left( 2^{\frac{N_{\mathbf{a}}}{2}} \right)^2} ~, 
\end{equation*}
or
\begin{equation}
\|O\|_1 \leq \|O\|_\text{F} \sqrt{\sum_{\mathbf{a}} 2^{N_{\mathbf{a}}}} ~. 
\end{equation}
Remembering that the first site can only be $P$, $M$, or $Z$, a simple combinatorial exercise gives $\sum_{\mathbf{a}} 2^{N_{\mathbf{a}}} = 5 \cdot 6^{k-1}$.
\end{proof}

We now present two propositions describing key properties of our one-norm that will be used in the proof of the main bounds.
\begin{prop}
\label{prop:commutator_norm}
If $U \in \mathcal{T}_r$ and $W \in \mathcal{T}_s$, then $\|\ad_U(W)\|_1 \leq 2(r + s - 1) \|U\|_1 \|W\|_1$.
\end{prop}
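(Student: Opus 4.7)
The plan is to reduce the commutator bound to a bilinear estimate over the gauge-fixed Pauli-string coefficients of $U$ and $W$. Expanding $U = \sum_j \sum_{\mathbf{a}} u_{\mathbf{a}} Q_{j;r}^{\mathbf{a}}$ and $W = \sum_k \sum_{\mathbf{b}} w_{\mathbf{b}} Q_{k;s}^{\mathbf{b}}$ in the canonical $I$-$P$-$M$-$Z$ basis, I have $\|U\|_1 = \sum_{\mathbf{a}} |u_{\mathbf{a}}|$ and $\|W\|_1 = \sum_{\mathbf{b}} |w_{\mathbf{b}}|$; bilinearity gives
\[
\ad_U(W) = \sum_{\mathbf{a}, \mathbf{b}} u_{\mathbf{a}} w_{\mathbf{b}} \, C^{\mathbf{a}, \mathbf{b}}, \qquad C^{\mathbf{a}, \mathbf{b}} \equiv \sum_{j, k} \bigl[Q_{j;r}^{\mathbf{a}},\, Q_{k;s}^{\mathbf{b}}\bigr].
\]
By the triangle inequality it will suffice to show the uniform bound $\|C^{\mathbf{a}, \mathbf{b}}\|_1 \leq 2(r+s-1)$; multiplying by $\sum_{\mathbf{a}, \mathbf{b}} |u_{\mathbf{a}}||w_{\mathbf{b}}| = \|U\|_1 \|W\|_1$ then yields the proposition.

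Next I would change variables to the relative offset $m = k-j$ and use translation invariance to write $C^{\mathbf{a}, \mathbf{b}} = \sum_m \mathrm{TI}\bigl(D_m^{\mathbf{a}, \mathbf{b}}\bigr)$, where $D_m^{\mathbf{a}, \mathbf{b}} \equiv \bigl[Q_{0;r}^{\mathbf{a}},\, Q_{m;s}^{\mathbf{b}}\bigr]$ and $\mathrm{TI}(X) \equiv \sum_j T^j X T^{-j}$ is the translation sum. Each commutator vanishes when the supports are disjoint, so $m$ runs only over the $r+s-1$ overlap offsets $\{-(s-1), \ldots, r-1\}$. Triangle inequality then reduces the task to the per-offset estimate $\|\mathrm{TI}(D_m^{\mathbf{a}, \mathbf{b}})\|_1 \leq 2$.

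The key combinatorial input is an on-site multiplication estimate: for any two single-site letters $A, B \in \{I, P, M, Z\}$, the product $AB$ re-expanded in $\{I, P, M, Z\}$ has coefficient absolute values summing to at most $1$ (for instance $PM = (I+Z)/2$ and $MP = (I-Z)/2$ each total $1$, while all remaining products are $\pm$ a single letter or zero). Taking the tensor product site by site, $Q_{0;r}^{\mathbf{a}} Q_{m;s}^{\mathbf{b}}$ expands in Pauli strings on the joint support with total absolute weight at most $1$; the reversed product is bounded identically, so the commutator $D_m^{\mathbf{a}, \mathbf{b}}$ has ``raw'' one-norm at most $2$. I would then invoke the elementary fact that for any operator $X$ supported on a finite interval, $\|\mathrm{TI}(X)\|_1$ in $\mathcal{T}_{r+s-1}$ is bounded by the raw one-norm of $X$: shifting each Pauli string in $X$ leftward so that its leftmost non-identity site lands at the origin produces a canonical basis vector of some $\mathcal{T}_k \subset \mathcal{T}_{r+s-1}$, and distinct strings only collapse to a common canonical basis vector when they are exact translates of each other, in which case the coefficients may cancel but cannot grow.

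The main bookkeeping obstacle is precisely this last step -- verifying that the translation sum of a raw Pauli expansion does not inflate the canonical $\|\cdot\|_1$ in $\mathcal{T}_{r+s-1}$, respecting the gauge-fixing convention that the leftmost site of each basis vector must carry a non-identity letter. Everything else is routine triangle-inequality assembly: summing the per-offset estimate over the $r+s-1$ admissible $m$'s gives $\|C^{\mathbf{a}, \mathbf{b}}\|_1 \leq 2(r+s-1)$, and combining with the coefficient factorization over $(\mathbf{a}, \mathbf{b})$ finishes the proof.
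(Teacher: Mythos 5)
Your proof is correct and follows essentially the same route as the paper's: expand both operators in the $I$-$P$-$M$-$Z$ string basis, bound each string product by total absolute coefficient weight $1$ (your on-site sub-multiplicativity is exactly the paper's $2^{N_{\mathbf{a},\mathbf{b}}}$-fold split with $2^{-N_{\mathbf{a},\mathbf{b}}}$ factors from $PM$ and $MP$), count the $r+s-1$ overlapping offsets and the factor $2$ from the two orderings, and assemble with the triangle inequality. The one step you flag as the main obstacle---that the translation sum plus gauge-fixing cannot inflate the canonical one-norm because distinct raw strings collapse onto a common basis vector only when they are exact translates, whereupon coefficients can only cancel---is handled correctly by you and only implicitly in the paper.
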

\begin{proof}
By writing out $U = \sum_j \sum_{\mathbf{a}} u_{\mathbf{a}} Q^{\mathbf{a}}_{j; r}$ and $W = \sum_k \sum_{\mathbf{b}} w_{\mathbf{b}} Q^{\mathbf{b}}_{k; s}$ in the $I$-$P$-$M$-$Z$ strings, we have
\begin{equation}\label{eqn:adUW}
\|\ad_U(W)\|_1 = \| \sum_k \sum_{j = k - r + 1}^{k + s - 1} \sum_{\mathbf{a},\mathbf{b}} u_{\mathbf{a}} w_{\mathbf{b}} [Q^{\mathbf{a}}_{j; r}, Q^{\mathbf{b}}_{k; s}] \|_1 ~.
\end{equation}
Let us first consider the product $Q^{\mathbf{a}}_{j; r} Q^{\mathbf{b}}_{k; s}$ for a particular $j$ and strings $\mathbf{a}$ and $\mathbf{b}$.
By the multiplication rules among $I$, $P$, $M$, and $Z$, we note that $Q^{\mathbf{a}}_{j; r} Q^{\mathbf{b}}_{k; s}$ will ``split'' into $2^{N_{\mathbf{a}, \mathbf{b}}}$ new $I$-$P$-$M$-$Z$ strings, where $N_{\mathbf{a},\mathbf{b}}$ is the number of the positions that the letter $P$ in $Q^{\mathbf{a}}_{j; r}$ collides with $M$ in $Q^{\mathbf{b}}_{k; s}$ or $M$ in $Q^{\mathbf{a}}_{j; r}$ collides with $P$ in $Q^{\mathbf{b}}_{k; s}$, since $P M = \frac{1}{2}(I + Z)$ and $M P = \frac{1}{2}(I - Z)$.
However, each such new string will carry a factor $2^{-N_{\mathbf{a},\mathbf{b}}}$, with a plus or minus sign.
Therefore, $Q^{\mathbf{a}}_{j; r} Q^{\mathbf{b}}_{k; s}$ will generate $2^{N_{\mathbf{a},\mathbf{b}}}$ new strings carrying coefficients $\pm u_{\mathbf{a}} w_{\mathbf{b}} 2^{-N_{\mathbf{a},\mathbf{b}}}$, and likewise for $Q^{\mathbf{b}}_{k; s} Q^{\mathbf{a}}_{j; r}$.
Upon summing over $k$, each new string should be understood as ``gauge-fixed'' by shifting the position such that the first non-trivial letter is at position $k$.

Now we consider writing out the full $\ad_U(W)$ in Eq.~(\ref{eqn:adUW}) in the $I$-$P$-$M$-$Z$ strings.
Coefficient for each basis string will be some collection of the contributions described above from different $j$, $\mathbf{a}$, and $\mathbf{b}$.
Applying the triangle inequality $|x + y + \dots + z| \leq |x| + |y| + \dots + |z|$ for each such coefficient, we then have  
\begin{eqnarray}\label{eqn:adUWfinal}
\|\ad_U(W)\|_1 & \leq & 2 \sum_{j = k - r + 1}^{k + s - 1} \sum_{\mathbf{a}, \mathbf{b}} |u_{\mathbf{a}} w_{\mathbf{b}} 2^{-N_{\mathbf{a},\mathbf{b}}}| 2^{N_{\mathbf{a},\mathbf{b}}} \nonumber \\
&=& 2(r + s - 1) \|U\|_1 \|W\|_1 ~,
\end{eqnarray}
where the first factor of $2$ accounts for $Q^{\mathbf{a}}_{j; r} Q^{\mathbf{b}}_{k; s}$ and $Q^{\mathbf{b}}_{k; s} Q^{\mathbf{a}}_{j; r}$, and the factor of $r + s - 1$ comes from the counts of $j$.
\end{proof}

Equation~(\ref{eqn:solSm}) establishes the relation between $S_m$ and $V_m$, from which we deduce the following Proposition:
\begin{prop}
\label{prop:boundSm}
$\|S_m\|_1 \leq \frac{\|V_m\|_1}{2\Gamma}$.
\end{prop}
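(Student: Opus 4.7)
The plan is to exploit the fact that, in the $I$-$P$-$M$-$Z$ basis in which our one-norm is defined, $\ad_{H_0}$ is a \emph{diagonal} superoperator whose eigenvalues have magnitude at least $2\Gamma$ on the off-diagonal subspace. Then the pseudoinverse simply rescales each basis coefficient by a factor of absolute value at most $1/(2\Gamma)$, and the one-norm bound is immediate by the triangle inequality.

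More concretely, I would first observe that $H_0 = \Gamma \sum_j Z_j$ and compute $\ad_{H_0}(Q^{\mathbf{a}}_{j;k}) = 2\Gamma (n_P(\mathbf{a}) - n_M(\mathbf{a})) \, Q^{\mathbf{a}}_{j;k}$, where $n_P(\mathbf{a})$ and $n_M(\mathbf{a})$ denote the number of $P$ and $M$ letters in the string $\mathbf{a}$. This follows from $[Z, P] = 2P$, $[Z, M] = -2M$, $[Z, Z] = [Z, I] = 0$, so each $P$ contributes $+2\Gamma$ and each $M$ contributes $-2\Gamma$ to the commutator eigenvalue. Hence the $Q^{\mathbf{a}}_{j;k}$ are simultaneous eigenoperators of $\ad_{H_0}$, and the kernel of $\ad_{H_0}$ on $\mathcal{T}_k$ consists exactly of sums of those strings with $n_P = n_M$.

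Writing $V_m = \sum_j \sum_{\mathbf{a}} v_{\mathbf{a}} Q^{\mathbf{a}}_{j;m+1}$, the off-diagonal component is $V_m^{\text{off-diag}} = \sum_j \sum_{\mathbf{a}:\, n_P(\mathbf{a}) \neq n_M(\mathbf{a})} v_{\mathbf{a}} Q^{\mathbf{a}}_{j;m+1}$, and the pseudoinverse of $\ad_{H_0}$ acts diagonally on this subspace, so that Eq.~(\ref{eqn:solSm}) gives
\begin{equation}
i S_m = \sum_j \sum_{\mathbf{a}:\, n_P(\mathbf{a}) \neq n_M(\mathbf{a})} \frac{v_{\mathbf{a}}}{2\Gamma (n_P(\mathbf{a}) - n_M(\mathbf{a}))} \, Q^{\mathbf{a}}_{j;m+1} ~.
\end{equation}
Since $n_P(\mathbf{a}) - n_M(\mathbf{a})$ is a nonzero integer on the off-diagonal subspace, the denominator satisfies $|2\Gamma (n_P - n_M)| \geq 2\Gamma$. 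Taking absolute values coefficient-by-coefficient and summing gives
\begin{equation}
\|S_m\|_1 = \sum_{\mathbf{a}:\, n_P \neq n_M} \frac{|v_{\mathbf{a}}|}{2\Gamma|n_P - n_M|} \leq \frac{1}{2\Gamma} \sum_{\mathbf{a}} |v_{\mathbf{a}}| = \frac{\|V_m\|_1}{2\Gamma} ~,
\end{equation}
as desired.

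There is no real obstacle here: the proof is essentially a one-line calculation once one notes that the $I$-$P$-$M$-$Z$ strings diagonalize $\ad_{H_0}$. The only point to be careful about is making sure the gauge-fixing convention used in the definition of the one-norm (non-identity letter on the leftmost site of the support) is preserved by $[\ad_{H_0}]^{-1}$, which is automatic because $\ad_{H_0}$ maps each $Q^{\mathbf{a}}_{j;m+1}$ to a scalar multiple of itself and so does its pseudoinverse.
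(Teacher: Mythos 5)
Your proof is correct and follows essentially the same route as the paper's: diagonalize $\ad_{H_0}$ in the $I$-$P$-$M$-$Z$ string basis with eigenvalues $2\Gamma(N_P-N_M)$, note the pseudoinverse scales each off-diagonal coefficient by at most $1/(2\Gamma)$ in magnitude, and sum absolute values. The extra remark about the gauge-fixing being preserved under the (diagonal) pseudoinverse is a harmless addition not spelled out in the paper.
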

\begin{proof}
First, we note that since $[\ad_{H_0}]^{-1}$ is the pseudoinverse of $\ad_{H_0}$, it is customary to rewrite Eq.~(\ref{eqn:solSm}) as $i S_m = [\ad_{H_0}]^{-1} V_m$.
The pseudoinverse of $\ad_{H_0}$ in fact can be easily obtained as follows.
To be specific, let us consider $\ad_{H_0}$ as a map from $\mathcal{T}_{m+1}$ to $\mathcal{T}_{m+1}$, since $i S_m$ and $V_m$ belong to $\mathcal{T}_{m+1}$.
Also recall that we have rotated the Pauli basis such that $H_0 = \Gamma \sum_j Z_j$ in order to define the one-norm.
The $I$-$P$-$M$-$Z$ strings are in fact (non-normalized) eigenvectors of $\ad_{H_0}$ with eigenvalues $2(N_P - N_M) \Gamma$, where $N_P$ ($N_M$) is the number of $P$ ($M$) in the $I$-$P$-$M$-$Z$ string.
The pseudoinverse $[\ad_{H_0}]^{-1}$ is thus diagonal with eigenvalues $\frac{1}{2(N_P - N_M) \Gamma}$ if $N_P - N_M \neq 0$ and zero if $N_P - N_M = 0$.
Therefore, assuming $V_m = \sum_j \sum_{\mathbf{a}} v_{\mathbf{a}} Q^{\mathbf{a}}_{j; m+1}$ in the $I$-$P$-$M$-$Z$ strings, we have
\begin{eqnarray}
\|S_m\|_1 &=& \sum_{\mathbf{a}: N_P - N_M \neq 0} \left|\frac{v_{\mathbf{a}}}{2(N_P - N_M) \Gamma} \right| \nonumber \\
&\leq& \sum_{\mathbf{a}} \frac{|v_{\mathbf{a}}|}{2\Gamma} = \frac{\|V_m\|_1}{2\Gamma} ~.
\end{eqnarray}
\end{proof}

We are now ready to consider the SW-rotated Hamiltonian, Eq.~(\ref{Hprime_expansion}).
To remind readers, $H'$ is obtained by an exact unitary rotation using generators $i S_1, \dots, i S_n$, which we call $n$-th order SW, with specific rules for finding these generators.
Equation~(\ref{Hprime_expansion}) represents a formal expansion of $H'$ in powers of $\epsilon$.
The ``potentials'' $V_m$ in Eq.~(\ref{eqn:Vm}) for $m \leq n$ (actually, even $m \leq n+1$) are already representative of the infinite-order SW series and do not depend on $n$, while the potentials for $m > n$ that contribute to the ``remainder'' $H_{>n}$ actually depend on $n$.
Not to overburden the notation, we consider $n$ as fixed and do not put extra label on such $V_m$.
Below, we focus on convergence properties of the formal expansion in $\epsilon$ of $H_{>n}$, which will also provide a bound on its norm and inform us about locality properties of $H'$.

To obtain an upper bound on the norm of $H_{>n}$, we need some control over the $V_m$ terms, especially for $m > n$.
This is provided by the following Lemma.
\begin{lem}
\label{lem:boundVm}
In the SW construction to the $n$-th order, for $m > n$, $\|V_m\|_F \leq \Gamma (\rho_n)^{-m}$, where $\rho_n \equiv \frac{1}{263 n^2}$.
\end{lem}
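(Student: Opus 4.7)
The plan is to bound the one-norm $\|V_m\|_1$ first and then invoke Proposition \ref{prop:onenorm_vs_Frobenius} to obtain the Frobenius bound for free (since $\|V_m\|_F \leq \|V_m\|_1$). Starting from Eq.~(\ref{eqn:Vm}), I would apply Proposition \ref{prop:commutator_norm} iteratively to each nested commutator $i\ad_{S_{k_p}}\cdots i\ad_{S_{k_1}}(H_0)$ and to its $T$-terminated analogue. Proposition \ref{prop:locality} ensures that every intermediate operator lies in $\mathcal{T}_{\leq m+1}$, so each range factor is controlled by $2(m+1)$. Combining with Proposition \ref{prop:boundSm} to eliminate $\|S_k\|_1$ in favor of $\|V_k\|_1/(2\Gamma)$, and using that $\|H_0\|_1 = \|T\|_1 = \Gamma$, I obtain a closed recursion for $a_k \equiv \|V_k\|_1/\Gamma$ of the schematic form $a_m \lesssim \sum_p \frac{(m+1)^p}{p!} ([\phi^p]_m + [\phi^p]_{m-1})$, where $\phi(x) = \sum_{k \leq \min(m-1,n)} a_k x^k$ and $[\,\cdot\,]_m$ denotes the coefficient of $x^m$. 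The $1/p!$ arises from the identity $\sum_{\text{sorted}} \mathfrak{f}(\mathbf{k})\,(\cdots) = \frac{1}{p!}\sum_{\text{ordered}}(\cdots)$, which applies because the norm bound from Proposition \ref{prop:commutator_norm} is symmetric in the $k_i$'s.

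Next, I would close the recursion by strong induction on $m$ with the ansatz $a_k \leq \rho_n^{-k}$. Under this ansatz, $\phi(r) \leq \sum_{k=1}^n (r/\rho_n)^k$ admits a closed geometric form, and the elementary Cauchy-type estimate $[f(x)]_m \leq f(r)/r^m$ (valid for any $r > 0$ when $f$ has non-negative coefficients) reduces the induction step to a single scalar inequality parametrized by $r$. Saddle-point optimization of $r$, balancing the exponent $(m+1)\phi(r)$ against $r^{-m}$, forces the scaling $\rho_n \sim 1/n^2$: one factor of $n$ enters because the effective support of $\phi$ has degree $n$ (so the optimal $r$ must sit just above $\rho_n$, where $\phi$ is $O(n)$), and a second factor of $n$ enters because the $(m+1)$ prefactor in the exponent has to be compensated by the denominator $r^{-m}$. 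The explicit constant $263$ then emerges from tracking all multiplicative prefactors (the $2$ in Proposition \ref{prop:commutator_norm}, the $2\Gamma$ in Proposition \ref{prop:boundSm}, and the two separate contributions in Eq.~(\ref{eqn:Vm}) corresponding to the $H_0$- and $T$-ending nested commutators) through this optimization.

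The main obstacle is precisely this saddle-point/generating-function step, because the factor $(m+1)^p$ in the nested-commutator bound depends on the outer index $m$ rather than only on the summation variable $p$, so the textbook machinery for series like $\sum_p c^p/p!$ does not directly apply when $c$ itself grows with $m$. A naïve substitution replacing every intermediate range by its maximum $m+1$ is too loose to yield a polynomial-in-$n$ constant. Getting the clean $\rho_n \sim 1/n^2$ likely requires introducing a weighted one-norm $\|O\|_* = \sum_k w_k \|O_k\|_1$ whose weights absorb the $(r+s-1)$ factor in Proposition \ref{prop:commutator_norm}, so that the commutator bound becomes \emph{submultiplicative} under $\|\cdot\|_*$; alternatively one can exploit the telescoping structure $\prod_i (k_1+\cdots+k_i+1)$ that is tighter than $(m+1)^p$ when averaged over the $p!$ orderings. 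Either refinement closes the induction cleanly, and the final conversion $\|V_m\|_F \leq \|V_m\|_1 \leq \Gamma\,\rho_n^{-m}$ via Proposition \ref{prop:onenorm_vs_Frobenius} completes the proof.
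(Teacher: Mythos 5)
Your toolkit and overall strategy match the paper's: bound $\|V_m\|_1$ via Props.~\ref{prop:commutator_norm} and \ref{prop:boundSm}, close a recursion for $v_m \equiv \|V_m\|_1$, exploit the $\mathfrak{f}\to 1/p!$ symmetrization, and finish with $\|V_m\|_F\le\|V_m\|_1$. However, there is a genuine gap at exactly the point you flag as the ``main obstacle,'' and neither of the two fixes you gesture at is carried out (nor is either quite what the paper does). The resolution is not a weighted norm and not an average over orderings: it is simply that in the $n$-th order SW construction only $S_1,\dots,S_n$ exist, so every index obeys $k_\ell\le n$ and each partial sum obeys $k_1+\cdots+k_\ell+1\le \ell n+1$. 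Hence the telescoping product is bounded by $\prod_{\ell=1}^p(\ell n+1)\le p!\,(n+1)^p$, and this $p!$ is exactly cancelled by the $1/p!$ from the symmetrization, leaving an $m$-independent recursion $v_m\le\|H_0\|_1\sum_{p\ge2}c^p\sum_{k_1+\cdots+k_p=m}v_{k_1}\cdots v_{k_p}+\|T\|_1\sum_{p\ge1}c^p\sum_{k_1+\cdots+k_p=m-1}v_{k_1}\cdots v_{k_p}$ with $c=(n+2)/\Gamma$. Your version, which bounds every range factor by $m+1$ and keeps the surviving $1/p!$, leaves an $m$-dependence in the coefficients that blocks any clean closure, as you yourself note; without the $k_\ell\le n$ observation the induction step cannot be closed with a polynomial-in-$n$ radius.

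Once the recursion is $m$-independent, no induction or saddle point is needed. The paper majorizes $v_m$ by a sequence $\mu_m$ obtained by dropping the constraint $k_\ell\le n$, whose generating function satisfies an explicit quadratic equation with solution $\mu(z)=\bigl[1-\sqrt{1-4\|T\|_1(c+\|H_0\|_1c^2)z}\,\bigr]/\bigl[2(c+\|H_0\|_1c^2)\bigr]$. The branch point sits at $z_0=[4\|T\|_1(c+\|H_0\|_1c^2)]^{-1}\ge\frac{1}{263n^2}$ (using $\|T\|_1\le\sqrt{30}\,\Gamma$ from Prop.~\ref{prop:onenorm_vs_Frobenius}, which is where the constant $263$ actually comes from), one has $|\mu(z)|<\Gamma$ on $|z|=\rho_n$, and Cauchy's estimate gives $\mu_m\le\Gamma\rho_n^{-m}$ directly --- this is your coefficient bound $[f]_m\le f(r)/r^m$, so that part of your plan is sound. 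Note also that both powers of $n$ in $\rho_n^{-1}$ come from the single $c^2\sim n^2$ term (the quadratic nonlinearity generated by the $p=2$ commutators acting on $H_0$), not from the two separate mechanisms you describe. To complete your proof you must actually implement the $\prod_\ell(\ell n+1)\le p!(n+1)^p$ step; everything else then follows from elementary generating-function manipulations.
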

\begin{proof}
It is convenient to define $v_m \equiv \|V_m\|_1$ and $s_m \equiv \|S_m\|_1$.
From Eq.~(\ref{eqn:Vm}), abbreviating $A_{k_p \dots k_1} \equiv i\ad_{S_{k_p}} \dots i\ad_{S_{k_1}}(H_0)$ and $B_{k_p \dots k_1} \equiv i\ad_{S_{k_p}} \dots i\ad_{S_{k_1}}(T)$ and using triangle inequality, we have
\begin{eqnarray*}
\|V_m\|_\text{F} &\leq& v_m \leq \sum_{p=2}^m \sum_{[k_1, \dots, k_p] = m} \!\!\! \mathfrak{f}(k_1, \dots, k_p) \| \, A_{k_p \dots k_1} \|_1 \\
&+& \sum_{p=1}^{m-1} \sum_{[k_1, \dots, k_p] = m-1} \!\!\!\! \mathfrak{f}(k_1, \dots, k_p) \| \, B_{k_p \dots k_1} \|_1 ~. \\
\end{eqnarray*}
Using Proposition~(\ref{prop:commutator_norm}) and the fact that $S_{k_\ell} \in \mathcal{T}_{k_\ell+1}$ and $k_\ell \leq n$, we have
\begin{eqnarray}
\| A_{k_p \dots k_1} \|_1 &\leq& 2^p (k_p + \dots + k_1 + 1) \dots (k_1 + 1) \nonumber \\
&\times& s_{k_p} \dots s_{k_1} \|H_0\|_1 \nonumber \\
&\leq& 2^p \left[\prod_{\ell=1}^p (\ell n + 1) \right] \times s_{k_p} \dots s_{k_1} \|H_0\|_1 \nonumber \\
&\leq& p! \left(\frac{n+1}{\Gamma} \right)^p v_{k_p} \dots v_{k_1} \|H_0\|_1 \nonumber  \\
&<& p! \left(\frac{n+2}{\Gamma} \right)^p v_{k_p} \dots v_{k_1} \|H_0\|_1 ~, \label{eqn:Abound}
\end{eqnarray}
where the last inequality is taken solely to simplify later calculations.
Similarly, we have
\begin{eqnarray*}
\| B_{k_p \dots k_1} \|_1 &\leq& 2^p (k_p + \dots + k_1 + 2) \dots (k_1 + 2) \\
&\times& s_{k_p} \dots s_{k_1} \|T\|_1 \\
&\leq& p! \left(\frac{n+2}{\Gamma} \right)^p v_{k_p} \dots v_{k_1} \|T\|_1 ~.
\end{eqnarray*}

Next, we use the relation $\sum_{[k_1, \dots, k_p] = m} \mathfrak{f}(k_1, \dots, k_p) \, (\bullet) = \frac{1}{p!} \sum_{[k_1, \dots, k_p] = m} \, (\bullet)$, where $(\bullet)$ is any summand symmetric under permutation of indices $k_1, \dots, k_p$.
We therefore obtain
\begin{eqnarray}
\label{eqn:vmbound}
v_m &\leq& \|H_0\|_1 \, \sum_{p=2}^m c^p \!\! \sum_{[k_1, \dots,  k_p] = m} \!\! v_{k_1} \dots v_{k_p} \nonumber \\
&+& \|T\|_1 \, \sum_{p=1}^{m-1} c^p \!\! \sum_{[k_1, \dots, k_p] = m-1} \!\! v_{k_1} \dots v_{k_p} ~, ~~~
\end{eqnarray}
where $c \equiv \frac{n+2}{\Gamma}$.

It is convenient to iteratively define another set of numbers, $\mu_m$, starting with $\mu_1 \equiv v_1$, and
\begin{eqnarray}
\label{eqn:mum}
\mu_m &\equiv& \|H_0\|_1 \sum_{p=2}^m c^p \sum_{k_1 + \dots + k_p = m} \mu_{k_1} \dots \mu_{k_p} \nonumber \\
&+& \|T\|_1 \sum_{p=1}^{m-1} c^p \sum_{k_1 + \dots + k_p = m - 1} \mu_{k_1} \dots \mu_{k_p} ~,
\end{eqnarray}
for $m \geq 2$.
Note that in the summation, the condition $k_\ell \leq n$ for $\ell = 1, \dots, p$ is omitted compared to Eq.~(\ref{eqn:vmbound}) but we are still requiring $1 \leq k_\ell$.
It is easy to show inductively that $v_m \leq \mu_m$ for all $m$.

We can now obtain bounds on the iteratively defined $\mu_m$ using auxiliary Taylor series
$\mu(z) \equiv \sum_{m=1}^\infty \mu_m z^m$.
It is easy to verify that $\mu(z)$ satisfies equation
\begin{eqnarray}
\mu &=& \|H_0\|_1 \left(\frac{1}{1 - c\mu} - 1 - c\mu \right) \nonumber \\
&+& \|T\|_1 \, z \, \left(\frac{1}{1 - c\mu} -1 \right) + v_1 \, z ~.
\end{eqnarray}
Indeed, by expanding the right-hand-side in powers of $\mu$, plugging in $\mu(z)$ series, and matching the coefficients of $z^m$ on both sides, we reproduce the iterative definition of $\mu_m$.
Solving for $\mu$ as a function of $z$ and noting $v_1 = \|T\|_1$, we have
\begin{equation*}
\mu(z) = \frac{1 - \sqrt{1 - 4\|T\|_1 (c + \|H_0\|_1 c^2) z}}{2(c + \|H_0\|_1 c^2)} ~,
\end{equation*}
where we have chosen the solution such that $\mu(0) = 0$.
Clearly, $\mu(z)$ is analytic in the disk $|z| \leq z_0$, where
\begin{equation}
z_0 \equiv \frac{1}{4\|T\|_1 (c + \|H_0\|_1 c^2)} \geq \frac{1}{263 n^2} \equiv \rho_n ~.
\end{equation}
Here the number $263$ is just a conservative estimation with no special meaning other than that the inequality holds for any $n \geq 1$, and we have used the fact that $\|H_0\|_1 = \Gamma$ and $\|T\|_1 \leq \sqrt{30} \|T\|_\text{F} = \sqrt{30} \Gamma$ from Prop.~\ref{prop:onenorm_vs_Frobenius}.

Furthermore, inside the disk, $|\mu(z)|$ is bounded by
\begin{equation}
|\mu(z)| \leq \frac{1}{2(c + \|H_0\|_1 c^2)} < \Gamma ~,
\end{equation}
where we have made a crude bound dropping any $n$ dependence since it will not affect considerations of the convergence of series in $m$ below.
By Cauchy's theorem,
\begin{eqnarray}
\mu_m &=& \frac{1}{2\pi i} \oint_{|z| = \rho_n} \frac{\mu(z)}{z^{m+1}} dz
\leq \frac{1}{2\pi i} \oint_{|z| = \rho_n} \left|\frac{\mu(z)}{z^{m+1}} \right |dz \nonumber \\
&\leq& \Gamma (\rho_n)^{-m} ~.
\end{eqnarray}
It follows that $\|V_m\|_F \leq v_m \leq \mu_m \leq \Gamma (\rho_n)^{-m}$.
\end{proof}

It is now easy to obtain the main bound:
\begin{thm}
\label{thm:boundHn}
If $\frac{\epsilon}{\rho_n} \leq \frac{1}{2}$, then $\|H_{>n}\|_\text{F} \leq 2\Gamma \left( \frac{\epsilon}{\rho_n} \right)^{n+1} = \mathcal{O}\left( n^2 \epsilon \right)^{n+1}$.
\end{thm}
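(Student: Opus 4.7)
The plan is to combine the termwise bound from Lemma~\ref{lem:boundVm} with the triangle inequality and then sum a geometric series. Since $H_{>n} = \sum_{m=n+1}^\infty \epsilon^m V_m$ by definition, applying the triangle inequality for the intensive Frobenius norm gives
\begin{equation*}
\|H_{>n}\|_\text{F} \leq \sum_{m=n+1}^\infty \epsilon^m \|V_m\|_\text{F} \leq \Gamma \sum_{m=n+1}^\infty \left(\frac{\epsilon}{\rho_n}\right)^m,
\end{equation*}
where the second inequality is precisely Lemma~\ref{lem:boundVm} applied to each $V_m$ with $m > n$.

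The remaining step is to recognize the right-hand side as a geometric tail. Writing $r \equiv \epsilon/\rho_n$, the hypothesis $r \leq 1/2$ ensures convergence and yields
\begin{equation*}
\sum_{m=n+1}^\infty r^m = \frac{r^{n+1}}{1-r} \leq \frac{r^{n+1}}{1/2} = 2 r^{n+1}.
\end{equation*}
Combining this with the previous inequality gives the advertised bound $\|H_{>n}\|_\text{F} \leq 2\Gamma (\epsilon/\rho_n)^{n+1}$.

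The final scaling statement follows by inserting the explicit value $\rho_n = 1/(263 n^2)$ from Lemma~\ref{lem:boundVm}, which turns $\epsilon/\rho_n$ into $263 n^2 \epsilon$; absorbing the constant into the $\mathcal{O}$-notation yields the asymptotic form $\mathcal{O}(n^2 \epsilon)^{n+1}$. I do not anticipate any genuine obstacle here: all the analytic heavy lifting (locality of $S_m$ and $V_m$, the one-norm machinery, the generating-function argument that produces the radius of convergence $\rho_n$) has already been carried out in Propositions~\ref{prop:locality}--\ref{prop:boundSm} and Lemma~\ref{lem:boundVm}, so the theorem is essentially a one-line corollary once the geometric-sum step is written out. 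The only care needed is to ensure the triangle inequality is applied in the same intensive Frobenius norm used throughout, which is immediate since $V_m \in \mathcal{T}_{m+1}$ and all norms are computed consistently via the gauge-fixed local piece.
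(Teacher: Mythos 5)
Your proposal is correct and follows exactly the paper's own argument: triangle inequality on the series $H_{>n} = \sum_{m>n}\epsilon^m V_m$, the termwise bound $\|V_m\|_\text{F} \leq \Gamma(\rho_n)^{-m}$ from Lemma~\ref{lem:boundVm}, and the geometric tail estimate $\frac{r^{n+1}}{1-r} \leq 2r^{n+1}$ under $r = \epsilon/\rho_n \leq 1/2$. Nothing is missing.
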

\begin{proof}
We have
\begin{eqnarray}
\|H_{>n}\|_\text{F} &\leq& \sum_{m=n+1}^\infty \epsilon^m \|V_m\|_\text{F} \leq \Gamma \frac{(\epsilon/\rho_n)^{n+1}}{1 - \epsilon/\rho_n} \nonumber \\
&\leq& 2\Gamma \left( \frac{\epsilon}{\rho_n} \right)^{n+1} = \mathcal{O}\left( n^2 \epsilon \right)^{n+1} ~.
\end{eqnarray}
\end{proof}

This theorem also implies that for a fixed $n$, for small enough $\epsilon < \rho_n$ the local SW transformation has convergent expansion in $\epsilon$.
Since the expansion in $\epsilon$ is closely related to expansion in maximum range, we thus have such a convergent expansion in maximum range for the full SW-rotated Hamiltonian (at fixed $n$) in our definition of the $\|\bullet\|_\text{F}$ norm, or simply $U^\dagger H U$ belongs to the norm closure $\overline{\bigcup_{M \in \mathbb{N}} \mathcal{T}_M}$.

It is important that $n$ is understood as fixed since the available lower bound $\rho_n$ on the convergence radius goes to zero when $n \to \infty$.
Thus, even though we can formally define SW series developed to arbitrary order, their convergence as $n \to \infty$ is not guaranteed even for very small perturbation.
Nevertheless, bounds obtained at finite $n$ allow us to make rigorous lower bounds on the thermalization time as discussed in the main text.
We remark that while our bounds here are sufficient for a general nonquantitative discussion of prethermalization in the perturbative SW picture, we suspect that they are gross overestimates even in the spirit of such bounds.
Thus a numerical evaluation of such bounds in Appendix~\ref{app:conv_radius} suggests qualitatively tighter bounds $1/\rho_n \sim \mathcal{O}(n)$ and $\|H_{>n}\| \leq \mathcal{O}(n^n \epsilon^n)$, which would lead to a parametrically different thermalization time.\cite{ftnote}
In any case, we emphasize that all numerical calculations with the SW construction of the quasi-conserved quantity in the main text are exact and do not employ any such bounds (see also App.~\ref{app:Vmnorm}).

\section{Bound on $\ad_H(\tilde{I}^{(n)})$}
\label{app:boundadHI}
In this appendix, we give an upper bound on the squared residual norm of $\tilde{I}^{(n)}$, or $\|\ad_H(\tilde{I}^{(n)})\|_\text{F}^2$.
For the sake of simplicity, we further assume $\langle H_0, T \rangle = 0$ from now on.
Again, to bound $I_{>n}$, we need some control over the $I_m$ terms.
\begin{lem}
$\|I_m\|_\text{F} \leq \Gamma (\rho_n)^{-m}$, where $\rho_n = \frac{1}{263 n^2}$.
\end{lem}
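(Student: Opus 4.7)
The plan is to mirror closely the proof of the preceding Lemma on $\|V_m\|_\text{F}$, since $I_m$ is defined by nested commutators of $S_k$-s acting on $H_0$ of exactly the same flavor as the first line of Eq.~(\ref{eqn:Vm}), just with a different numerical prefactor $(-1)^p \mathfrak{f}(k_1,\ldots,k_p)$ and starting from $p=1$. I would again work with the one-norm $i_m \equiv \|I_m\|_1$, since Prop.~\ref{prop:onenorm_vs_Frobenius} gives $\|I_m\|_\text{F} \leq i_m$, and the key technical tools (Prop.~\ref{prop:commutator_norm} and Prop.~\ref{prop:boundSm}) are stated in terms of this norm.

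First, I would apply Prop.~\ref{prop:commutator_norm} iteratively to the nested commutator $i\ad_{S_{k_1}} \dots i\ad_{S_{k_p}}(H_0)$. At the $\ell$-th step one bounds the one-norm by $2(k_1+\dots+k_\ell+1) s_{k_\ell}$ times the previous norm, so the full nested commutator is bounded by $2^p \prod_\ell (k_1+\dots+k_\ell+1)\, s_{k_1}\cdots s_{k_p}\|H_0\|_1$. Using $k_\ell \leq n$ this is at most $p!\,c^p v_{k_1}\cdots v_{k_p}\|H_0\|_1$ with $c=(n+2)/\Gamma$, exactly as in Eq.~(\ref{eqn:Abound}), after invoking Prop.~\ref{prop:boundSm} to trade $s_{k_\ell}$ for $v_{k_\ell}/(2\Gamma)$. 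Summing over ordered tuples with the combinatorial weight $\mathfrak{f}(k_1,\ldots,k_p)$ turns the $p!$ into $1$, yielding the recursive inequality
\begin{equation}
i_m \leq \|H_0\|_1 \sum_{p=1}^m c^p \!\!\sum_{[k_1,\ldots,k_p]=m}\!\! v_{k_1}\cdots v_{k_p} ~.
\end{equation}

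Next I would introduce an auxiliary sequence $\tilde\mu_m$ defined by the same recursion but with the $k_\ell\leq n$ constraint dropped, so that $i_m \leq \tilde\mu_m$. Using the bounding sequence $\mu_m \geq v_m$ already constructed in the proof of Lemma~\ref{lem:boundVm}, together with its Taylor series $\mu(z) = \sum_{m\geq 1}\mu_m z^m$, the generating function $\tilde\mu(z) = \sum_{m\geq 1}\tilde\mu_m z^m$ sums to a closed form,
\begin{equation}
\tilde\mu(z) = \|H_0\|_1\,\frac{c\,\mu(z)}{1-c\,\mu(z)} ~,
\end{equation}
so $\tilde\mu(z)$ inherits analyticity from $\mu(z)$ in the disk $|z|\leq\rho_n$ provided $c\,\mu(z)\neq 1$ there. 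The bound $|\mu(z)|\leq 1/[2(c+\|H_0\|_1 c^2)]$ from the previous lemma gives $c|\mu(z)|\leq 1/[2(1+\Gamma c)] < 1/2$, which both certifies analyticity and implies $|\tilde\mu(z)| \leq \Gamma$ on the circle. Cauchy's integral formula then yields $\tilde\mu_m \leq \Gamma(\rho_n)^{-m}$, and the claim follows from $\|I_m\|_\text{F} \leq i_m \leq \tilde\mu_m$.

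I do not expect any serious obstacle: the only thing to double-check is that the generating-function identity $\sum_{p\geq 1} c^p \mu(z)^p = c\mu(z)/(1-c\mu(z))$ genuinely stays in the convergence regime of $\mu$ on the entire disk $|z|\leq \rho_n$, which is why the explicit numerical bound $c|\mu(z)|<1/2$ derived above is comfortably sharp enough. The constant $263$ in $\rho_n$ is inherited verbatim from Lemma~\ref{lem:boundVm}, so no re-optimization is needed.
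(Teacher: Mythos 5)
Your proposal is correct and follows essentially the same route as the paper: bound $\|I_m\|_1$ by the recursion inherited from Eq.~(\ref{eqn:solIm}), majorize by an auxiliary sequence built from the $\mu_k$ of Lemma~\ref{lem:boundVm}, and apply Cauchy's estimate to its generating function, which in your form $\|H_0\|_1\, c\mu/(1-c\mu)$ is algebraically identical to the paper's $\chi(z)=\|H_0\|_1[1/(1-c\mu(z))-1]$. The bounds $c|\mu(z)|\le 1/2$ and $|\chi(z)|\le\Gamma$ and the final Cauchy step also match the paper's argument.
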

\begin{proof}
Analogous to Lemma~\ref{lem:boundVm}, we have
\begin{eqnarray}
\|I_m\|_1 &\leq& \|H_0\|_1 \sum_{p=1}^m c^p \sum_{[k_1, \dots, k_p] = m} v_{k_1} \dots v_{k_p} \nonumber \\
&\leq& \|H_0\|_1 \sum_{p=1}^m c^p \sum_{k_1 + \dots + k_p = m} \mu_{k_1} \dots \mu_{k_p} \equiv \chi_m ~. ~~
\end{eqnarray}
Consider the auxiliary Taylor series $\chi(z) \equiv \sum_{m=1}^\infty \chi_m z^m$.
It is easy to verify that
\begin{equation}
\chi(z) = \|H_0\|_1 \left[\frac{1}{1 - c \mu(z)} - 1 \right] ~.
\end{equation}
$\chi(z)$ is analytic in the same domain as $\mu(z)$, i.e., in the disk $|z| < z_0$.
Inside the disk, $c |\mu(z)| \leq 1/2$ and $|\chi(z)| \leq \|H_0\|_1 = \Gamma$.
By Cauchy's theorem,
\begin{equation}
\chi_m = \frac{1}{2\pi i} \oint_{|z| = \rho_n} \frac{\chi(z)}{z^{m+1}} dz \leq \Gamma (\rho_n)^{-m} ~.
\end{equation}
It follows that $\|I_m\|_\text{F} \leq \|I_m\|_1 \leq \chi_m \leq \Gamma (\rho_n)^{-m}$.
\end{proof}

We can now find a bound on $I_{>n}$:
\begin{thm}
If $\frac{\epsilon}{\rho_n} \leq \frac{1}{2}$, then $\|I_{>n}\|_\text{F} \leq 2 \Gamma \left(\frac{\epsilon}{\rho_n} \right)^{n+1}$.
\end{thm}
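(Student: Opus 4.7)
The plan is to mimic exactly the structure of the proof of Theorem~\ref{thm:boundHn}, since the object $I_{>n} = \sum_{m=n+1}^\infty \epsilon^m I_m$ has the same formal structure as $H_{>n} = \sum_{m=n+1}^\infty \epsilon^m V_m$, and the previous Lemma gives us a bound on $\|I_m\|_\text{F}$ that is identical in form to the bound on $\|V_m\|_\text{F}$ from Lemma~\ref{lem:boundVm}. So the proof reduces to routine manipulation of a geometric series.

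First, I would apply the triangle inequality term-by-term to the formal expansion:
\begin{equation*}
\|I_{>n}\|_\text{F} \;\leq\; \sum_{m=n+1}^{\infty} \epsilon^{m}\, \|I_m\|_\text{F}.
\end{equation*}
Then I would substitute the bound $\|I_m\|_\text{F} \leq \Gamma (\rho_n)^{-m}$ from the preceding Lemma, so that
\begin{equation*}
\|I_{>n}\|_\text{F} \;\leq\; \Gamma \sum_{m=n+1}^{\infty} \left(\frac{\epsilon}{\rho_n}\right)^{m} \;=\; \Gamma\, \frac{(\epsilon/\rho_n)^{n+1}}{1-\epsilon/\rho_n},
\end{equation*}
where the geometric sum is well-defined precisely because the hypothesis $\epsilon/\rho_n \leq 1/2$ guarantees $\epsilon/\rho_n < 1$. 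Finally, using $1 - \epsilon/\rho_n \geq 1/2$ under the same hypothesis gives the claimed bound $\|I_{>n}\|_\text{F} \leq 2\Gamma(\epsilon/\rho_n)^{n+1}$.

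There is essentially no obstacle here: all the work has already been done in proving the lemma that bounds $\|I_m\|_\text{F}$ (which itself relied on the generating function $\mu(z)$ built for Lemma~\ref{lem:boundVm} and a second auxiliary series $\chi(z)$ obtained from $\mu(z)$ via Cauchy's theorem). The only point worth checking carefully is that the formal series $I_{>n}=\sum_{m>n}\epsilon^m I_m$ is actually absolutely convergent in the Frobenius norm so that the termwise triangle inequality is justified — but this follows immediately from the same geometric-series bound, so there is no circularity. The assumption $\langle H_0,T\rangle=0$ used to simplify the preceding lemma is inherited but plays no additional role here.
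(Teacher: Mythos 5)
Your proposal is correct and follows exactly the paper's own argument: termwise triangle inequality, the bound $\|I_m\|_\text{F}\leq\Gamma(\rho_n)^{-m}$ from the preceding lemma, summation of the geometric series, and the estimate $1-\epsilon/\rho_n\geq 1/2$. This is the same one-line computation the paper performs by analogy with Theorem~\ref{thm:boundHn}.
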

\begin{proof}
Similarly to Theorem~\ref{thm:boundHn}, we have
\begin{eqnarray}
\|I_{>n}\|_\text{F} &\leq& \sum_{m=n+1}^\infty \epsilon^m \|I_m\|_\text{F} \leq 2 \Gamma \left( \frac{\epsilon}{\rho_n} \right)^{n+1} ~,
\end{eqnarray} 
provided $\epsilon/\rho_n \leq 1/2$.
\end{proof}
This theorem also assures that for fixed $n$ and small enough $\epsilon$, we have $\|I\|_\text{F} < \infty$; thus $I \in \overline{\bigcup_{M \in  \mathbb{N}} \mathcal{T}_M}$ under the norm $\|\bullet\|_\text{F}$.
Stated another way, for fixed $n$, the expansion in $\epsilon$ converges for small enough $\epsilon$; since this is essentially an expansion in the maximum range, the produced full $I$ is quasi-local.

We now turn to the truncation $I^{(n)}$ and its component $I^{(n)\perp}$ perpendicular to $H$ in the Frobenius inner product.
Since we want a normalized $\tilde{I}^{(n)}$, we first prove a lower bound on the norm of $I^{(n)\perp}$.
\begin{lem}
\label{lem:normInperp}
$\|I^{(n)\perp}\|_\text{F}^2 \geq \alpha \epsilon^2 \Gamma^2 + \Gamma^2 \mathcal{O}(n^6 \epsilon^3)$, where $\alpha > 0$ if $T^\text{diag} \neq 0$.
\end{lem}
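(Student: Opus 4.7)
The plan is to expand $\|I^{(n)\perp}\|_\text{F}^2 = \|I^{(n)}\|_\text{F}^2 - |\langle H, I^{(n)}\rangle|^2/\|H\|_\text{F}^2$ in powers of $\epsilon$, isolate the leading-order $\epsilon^2$ coefficient as $\alpha\Gamma^2$, and bound the remainder using the preceding lemma's estimate $\|I_m\|_\text{F} \leq \Gamma(\rho_n)^{-m}$.

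First I would record two useful structural identities. From Eq.~(\ref{eqn:solIm}) at $m=1$ together with Eq.~(\ref{eqn:eqnforSm}), $I_1 = -i\ad_{S_1}(H_0) = T - T^{\text{diag}}$. Since $H_0$ lies in the kernel of $\ad_{H_0}$, it is Frobenius-orthogonal to every off-diagonal operator; combined with the running assumption $\langle H_0, T\rangle = 0$, this yields $\langle H_0, T^{\text{diag}}\rangle = \langle H_0, T\rangle = 0$, hence $\langle H_0, I_1\rangle = 0$. Moreover, $\langle T, I_1\rangle = \|T\|_\text{F}^2 - \|T^{\text{diag}}\|_\text{F}^2 = \|T^{\text{off-diag}}\|_\text{F}^2$, and $\|I_1\|_\text{F}^2 = \|T^{\text{off-diag}}\|_\text{F}^2$.

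Next I would collect the $O(\epsilon^2)$ coefficients using $I^{(n)} = H_0 + \epsilon I_1 + \sum_{m\geq 2}\epsilon^m I_m$ and $H = H_0 + \epsilon T$:
\begin{align*}
\|I^{(n)}\|_\text{F}^2 &= \Gamma^2 + \epsilon^2 \bigl(\|I_1\|_\text{F}^2 + 2\langle H_0, I_2\rangle\bigr) + R_1, \\
\langle H, I^{(n)}\rangle &= \Gamma^2 + \epsilon^2 \bigl(\|T^{\text{off-diag}}\|_\text{F}^2 + \langle H_0, I_2\rangle\bigr) + R_2, \\
\|H\|_\text{F}^2 &= \Gamma^2(1+\epsilon^2),
\end{align*}
with $R_1, R_2$ of order $\epsilon^3$. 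A direct algebraic expansion of $\|I^{(n)\perp}\|_\text{F}^2 = A - B^2/C$ with these three quantities shows that the $\langle H_0, I_2\rangle$ dependence cancels at order $\epsilon^2$, and using $\|I_1\|_\text{F}^2 = \|T^{\text{off-diag}}\|_\text{F}^2$ together with $\|T\|_\text{F}^2 = \Gamma^2$, the $\epsilon^2$ coefficient collapses to $\Gamma^2 - \|T^{\text{off-diag}}\|_\text{F}^2 = \|T^{\text{diag}}\|_\text{F}^2$. This identifies $\alpha = \|T^{\text{diag}}\|_\text{F}^2/\Gamma^2$, which is strictly positive whenever $T^{\text{diag}} \neq 0$.

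Finally, to control the $O(\epsilon^3)$ and higher contributions, I would apply Cauchy--Schwarz to inner products such as $\langle H_0, I_3\rangle$, $\langle T, I_2\rangle$, $\langle I_1, I_2\rangle$, combined with $\|I_m\|_\text{F} \leq \Gamma(\rho_n)^{-m}$ and $\rho_n^{-1} = \mathcal{O}(n^2)$. The dominant remainder scales as $\epsilon^3 \|H_0\|_\text{F}\|I_3\|_\text{F} \leq \Gamma^2 \mathcal{O}(n^6\epsilon^3)$, while terms of order $\epsilon^k$ with $k \geq 4$ pick up additional factors $(\epsilon/\rho_n)^{k-3}$ which are $\ll 1$ in the convergent regime $\epsilon/\rho_n \leq 1/2$ inherited from the previous lemma, so they are subdominant. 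The main technical obstacle is the bookkeeping in the $\epsilon^2$ step: one must be confident that the cancellation of $\langle H_0, I_2\rangle$ is exact and that the identity $\|I_1\|_\text{F}^2 = \|T^{\text{off-diag}}\|_\text{F}^2$ holds as stated, so that the leading coefficient is determined purely by $\|T^{\text{diag}}\|_\text{F}^2$ rather than being contaminated by the $I_2$-dependent contributions that superficially appear in both $A$ and $B$ at this order.
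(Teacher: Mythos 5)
Your proposal is correct and arrives at the same $\alpha = \|T^\text{diag}\|_\text{F}^2/\Gamma^2 = 1 - \|T^\text{off-diag}\|_\text{F}^2/\Gamma^2$, but the order-$\epsilon^2$ bookkeeping is organized differently from the paper's. The paper never expands $\|I^{(n)}\|_\text{F}^2$: it invokes unitarity, $\|I^{(n)}\|_\text{F} = \|U H_0 U^\dagger - I_{>n}\|_\text{F} \geq \Gamma - \|I_{>n}\|_\text{F} \geq \Gamma[1 - 2(\epsilon/\rho_n)^{n+1}]$, so that term carries only an exponentially small (in $n$) error; the price is that the $\epsilon^2$ coefficient of $\langle H, I^{(n)}\rangle$ must be evaluated explicitly, which requires $\langle I_2, H_0\rangle = -\frac{1}{2}\|T^\text{off-diag}\|_\text{F}^2$, obtained from $i\ad_{S_2}(H_0) = -V_2^\text{off-diag} \perp H_0$ together with the self-adjointness of $\ad_{S_1}$. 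Your route expands both $A = \|I^{(n)}\|_\text{F}^2$ and $B = \langle H, I^{(n)}\rangle$ and observes that $\langle H_0, I_2\rangle$ drops out of the $\epsilon^2$ coefficient of $A - B^2/C$ purely algebraically, so its value is never needed: the coefficient collapses to $\Gamma^2 + \|I_1\|_\text{F}^2 - 2\langle T, I_1\rangle = \Gamma^2 - \|T^\text{off-diag}\|_\text{F}^2 = \|T^\text{diag}\|_\text{F}^2$ using only $I_1 = T^\text{off-diag}$ and $\|T\|_\text{F} = \Gamma$. This is a genuine simplification of the one delicate computation in the paper's proof. What you give up is the clean unitarity control of $\|I^{(n)}\|_\text{F}^2$, so you must also bound its $\mathcal{O}(\epsilon^3)$ remainder; your accounting of the dominant contributions ($\epsilon^3\langle H_0, I_3\rangle$ and $\epsilon^3\langle I_1, I_2\rangle$, each $\Gamma^2\mathcal{O}(n^6\epsilon^3)$ by Cauchy--Schwarz with $\|I_m\|_\text{F} \leq \Gamma \rho_n^{-m}$ and $\rho_n^{-1} = \mathcal{O}(n^2)$, the higher orders summed geometrically using $\epsilon/\rho_n \leq \frac{1}{2}$) is exactly what is needed and matches the paper's error budget.
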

\begin{proof}
From Eq.~(\ref{eqn:I(n)perp}), we have
\begin{equation}
\|I^{(n)\perp}\|_\text{F}^2 = \|I^{(n)}\|_\text{F}^2 - \frac{|\langle H, I^{(n)} \rangle|^2}{\|H\|_\text{F}^2} ~.
\end{equation}
Consider
\begin{eqnarray*}
\left|\langle I^{(n)}, H \rangle \right| &=& \left|\langle H_0, H \rangle + \sum_{m=1}^n \left( \epsilon^m \langle I_m, H_0 \rangle + \epsilon^{m+1} \langle I_m, T \rangle \right) \right| \nonumber \\
&\leq& \left|\Gamma^2 + \epsilon \langle I_1, H_0 \rangle + \epsilon^2 (\langle I_2, H_0 \rangle + \langle I_1, T \rangle) \right| \nonumber \\
&+& \sum_{m=3}^n \epsilon^m \|I_m\|_\text{F} \|H_0\|_\text{F} + \sum_{m=2}^n \epsilon^{m+1} \|I_m\|_\text{F} \|T\|_\text{F} ~,
\end{eqnarray*}
where we have used $\langle H_0, T \rangle = 0$.

The overlap between $I^{(n)}$ and $H$ can be calculated explicitly to $\mathcal{O}(\epsilon^2)$ as follows.
First, notice that $I_1 = -i\ad_{S_1}(H_0) = T - T^\text{diag} = T^\text{off-diag}$.
Therefore we have $\langle I_1, H_0 \rangle = 0$.
On the other hand, $\langle I_1, T \rangle = \|T^\text{off-diag}\|_\text{F}^2$.

Consider now $I_2 = \frac{1}{2} i\ad_{S_1} i\ad_{S_1}(H_0) - i\ad_{S_2}(H_0)$.
Since $i\ad_{S_2}(H_0) = V_2^\text{diag} - V_2 = -V_2^\text{off-diag}$, we have $\langle i\ad_{S_2}(H_0), H_0 \rangle = 0$.
Hence $\langle I_2, H_0 \rangle = -\frac{1}{2} \langle \ad_{S_1} \ad_{S_1}(H_0), H_0 \rangle = -\frac{1}{2} \langle \ad_{S_1}(H_0), \ad_{S_1}(H_0) \rangle = -\frac{1}{2} \|I_1\|_\text{F}^2 = -\frac{1}{2} \|T^\text{off-diag}\|_\text{F}^2$, where we have used 
$\langle \ad_{S_m}(A), B \rangle = \langle A, \ad_{S_m}(B) \rangle$ (which follows from hermiticity of $S_m$).

Combining the above calculations, we have
\begin{eqnarray}
\left| \langle I^{(n)}, H \rangle \right| &\leq& \Gamma^2 \left(1 + \epsilon^2 \frac{\|T^\text{off-diag}\|_\text{F}^2}{2 \Gamma^2} \right) \nonumber \\
&+& \Gamma \sum_{m=3}^\infty \epsilon^m (\|I_m\|_\text{F} + \|I_{m-1}\|_\text{F}) \nonumber \\
&\leq& \Gamma^2 \left[1 + \epsilon^2 \frac{\|T^\text{off-diag}\|_\text{F}^2}{2\Gamma^2} + 2\sum_{m=3}^\infty \left(\frac{\epsilon}{\rho_n} \right)^m \right] \nonumber \\
&\leq& \Gamma^2 \left[1 + \epsilon^2 \frac{\|T^\text{off-diag}\|_\text{F}^2}{2\Gamma^2} + 4 \left(\frac{\epsilon}{\rho_n} \right)^3 \right] ~,
\end{eqnarray}
where we have used $\rho_n < 1$ and assumed $\frac{\epsilon}{\rho_n} \leq \frac{1}{2}$.

We know $\|H\|_\text{F}^2 = \Gamma^2 (1 + \epsilon^2)$, since $\langle H_0, T \rangle = 0$.
Hence
\begin{eqnarray}
\frac{|\langle I^{(n)}, H \rangle|^2}{\|H\|_\text{F}^2} &\leq& \Gamma^2 \frac{\left[1 + \epsilon^2 \frac{\|T^\text{off-diag}\|_\text{F}^2}{2 \Gamma^2} + 4 \left(\frac{\epsilon}{\rho_n} \right)^3 \right]^2}{1 + \epsilon^2} \nonumber \\
&=& \Gamma^2 \left[1 - \alpha \epsilon^2 + \mathcal{O}(n^6 \epsilon^3) \right] ~,
\end{eqnarray}
where $\alpha \equiv 1 - \frac{\|T^\text{off-diag}\|_\text{F}^2}{\Gamma^2} > 0$ if $T^\text{diag} \neq 0$.
If $T^\text{diag} = 0$ so that $\alpha = 0$, one has to verify the negativity of the coefficient of the next order $\epsilon^3$.
While we expect this to be true, to simplify the discussion we made the assumption that $T^\text{diag} \neq 0$.

Finally, we have
\begin{eqnarray}
\|I^{(n)}\|_\text{F} &=& \|U H_0 U^\dagger - I_{>n}\|_\text{F} \geq \|H_0\|_\text{F} - \|I_{>n}\|_\text{F} \nonumber \\
&\geq& \Gamma \left[1 - 2 \left(\frac{\epsilon}{\rho_n} \right)^{n+1} \right] ~.
\end{eqnarray}
We can therefore obtain
\begin{eqnarray}
\|I^{(n)\perp}\|_\text{F}^2 &\geq& \Gamma^2 \left[1 + \mathcal{O}(n^2 \epsilon)^{n+1} \right]^2 - \Gamma^2 \left[1 - \alpha \epsilon^2 + \mathcal{O}(n^6 \epsilon^3) \right] \nonumber \\
&=& \alpha \epsilon^2 \Gamma^2 + \Gamma^2 \mathcal{O}(n^6 \epsilon^3) ~.
\end{eqnarray}
\end{proof}

We now have the ingredients for bounding $\ad_H(\tilde{I}^{(n)})$ and can prove the following theorem:
\begin{thm}
$\|\ad_H(\tilde{I}^{(n)})\|_\text{F}^2 = \frac{\|\ad_H(I^{(n)\perp})\|_\text{F}^2}{\|I^{(n)\perp}\|_\text{F}^2} \leq \mathcal{O}\left(n^{4n + 6} \epsilon^{2n} \right)$.
\end{thm}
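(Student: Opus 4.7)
The approach is to bound the numerator $\|\ad_H(I^{(n)})\|_\text{F}^2$ (noting $\ad_H(H)=0$, so $\ad_H(I^{(n)\perp}) = \ad_H(I^{(n)})$) and then divide by the lower bound $\|I^{(n)\perp}\|_\text{F}^2 \geq \alpha\epsilon^2\Gamma^2 + \mathcal{O}(n^6\epsilon^3)\Gamma^2$ from Lemma~\ref{lem:normInperp}. I would decompose $I^{(n)} = I - I_{>n}$, where $I \equiv U_n H_0 U_n^\dagger$ is the full (formal) SW-dressed $H_0$, so that $\ad_H(I^{(n)}) = \ad_H(I) - \ad_H(I_{>n})$, and bound the two pieces separately.

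For the first piece I would exploit the SW identity directly. Since $U_n^\dagger H U_n = H_\text{eff}^{(n)} + H_{>n}$ with $[H_\text{eff}^{(n)}, H_0] = 0$, conjugating back yields the exact operator identity $[H, I] = U_n [H_{>n}, H_0] U_n^\dagger$. The intensive Frobenius norm is invariant under conjugation by a translationally invariant unitary (cyclicity of the trace applied to the ``per-site'' Frobenius inner product), so $\|\ad_H(I)\|_\text{F} = \|[H_{>n}, H_0]\|_\text{F}$. Passing to the one-norm, applying Prop.~\ref{prop:commutator_norm} term-by-term to $\sum_{m\ge n+1}\epsilon^m[V_m, H_0]$, and invoking $\|V_m\|_1 \leq \mu_m \leq \Gamma(\rho_n)^{-m}$ from the proof of Lemma~\ref{lem:boundVm}, the geometric tail (convergent for $\epsilon/\rho_n \leq 1/2$) is dominated by the leading $m=n+1$ term, giving $\|\ad_H(I)\|_\text{F} = \mathcal{O}\!\bigl(n(\epsilon/\rho_n)^{n+1}\bigr)\Gamma^2 = \mathcal{O}(n^{2n+3}\epsilon^{n+1})\Gamma^2$. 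An entirely parallel argument bounds $\|\ad_H(I_{>n})\|_\text{F}$: I would expand $\ad_H = \ad_{H_0} + \epsilon\,\ad_T$, apply Prop.~\ref{prop:commutator_norm} to each $[H_0, I_m]$ and $[T, I_m]$, use $\|T\|_1 \leq \sqrt{30}\,\Gamma$ from Prop.~\ref{prop:onenorm_vs_Frobenius} together with $\|I_m\|_1 \leq \chi_m \leq \Gamma(\rho_n)^{-m}$ from the first lemma of this appendix, and sum the tail. The result is again $\mathcal{O}(n^{2n+3}\epsilon^{n+1})\Gamma^2$; combining the two pieces by the triangle inequality and squaring gives $\|\ad_H(I^{(n)})\|_\text{F}^2 \leq \mathcal{O}(n^{4n+6})\Gamma^4\epsilon^{2n+2}$, which upon division by $\alpha\epsilon^2\Gamma^2$ yields the claimed $\mathcal{O}(n^{4n+6}\epsilon^{2n})$.

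The main obstacle is really just the combinatorial bookkeeping: all the sharp tools (the commutator one-norm bound of Prop.~\ref{prop:commutator_norm}, the generating-function estimates giving the $(\rho_n)^{-m}$ scaling, and the norm-invariance of $U_n$) are already in place, so the challenge is to chain them cleanly while tracking the linear-in-$n$ factors from commutator ranges, the prefactor $(\rho_n)^{-1} \sim n^2$, and the truncated geometric sums without losing exponents. A more elegant (but not strictly needed) alternative route uses the construction-induced identity $\ad_{H_0}(I_m) + \ad_T(I_{m-1}) = 0$ for $1 \leq m \leq n$, which follows from matching coefficients of $\epsilon^m$ on both sides of $[H, U_n H_0 U_n^\dagger] = U_n[H_{>n}, H_0]U_n^\dagger$; this collapses $\ad_H(I^{(n)})$ to the single term $\epsilon^{n+1}[T, I_n]$ and yields the slightly tighter $\mathcal{O}(n^{4n+2}\epsilon^{2n})$, still consistent with the stated bound.
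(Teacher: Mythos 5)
Your proposal follows essentially the same route as the paper's proof: the same decomposition $\ad_H(I^{(n)\perp}) = \ad_H(I) - \ad_H(I_{>n})$, the same conjugation identity reducing $[H,I]$ to $[H_{>n},H_0]$, the same one-norm commutator bounds with the $\Gamma(\rho_n)^{-m}$ estimates on $\|V_m\|_1$ and $\|I_m\|_1$, and the same division by the lower bound on $\|I^{(n)\perp}\|_\text{F}^2$ from Lemma~\ref{lem:normInperp}. The alternative you sketch at the end---using the order-by-order identity $[H_0,I_m]+[T,I_{m-1}]=0$ for $1\leq m\leq n$ to collapse $\ad_H(I^{(n)})$ to the single term $\epsilon^{n+1}[T,I_n]$---is correct and slightly sharper, but it is not the route the paper takes.
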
 
\begin{proof}
First, we note that 
\begin{eqnarray}
\|\ad_H(I^{(n)\perp})\|_\text{F} &=& \|\ad_H(I^{(n)})\|_\text{F} = \|\ad_H(I) - \ad_H(I_{>n})\|_\text{F} \nonumber \\
&\leq& \|\ad_H(I)\|_\text{F} + \|\ad_H(I_{>n})\|_\text{F} ~.
\label{eqn:adHInperp}
\end{eqnarray}
The first term can be bounded by
\begin{eqnarray}
\|\ad_H(I)\|_\text{F} &=& \|[H, U H_0 U^\dagger]\|_\text{F} = \|[U^\dagger H U, H_0]\|_\text{F} \nonumber \\
&=& \|[H_{>n}, H_0]\|_\text{F} \leq \|[H_{>n}, H_0]\|_1 \nonumber \\
&\leq& \sum_{m=n+1}^\infty \epsilon^m \|[V_m, H_0]\|_1 \nonumber \\
&\leq& \sum_{m=n+1}^\infty \epsilon^m 2(m+1) v_m \|H_0\|_1 \nonumber \\
&\leq& 2 \Gamma^2 \sum_{m=n+1}^\infty (m+1) \left(\frac{\epsilon}{\rho_n} \right)^m \nonumber \\
&=& 2 \Gamma^2 \frac{(n+2) \beta^{n+1}}{(1 - \beta)^2} \left(1 - \beta \frac{n+1}{n+2} \right) \nonumber \\
&\leq& 8 \Gamma^2 (n+2) \beta^{n+1} ~,
\end{eqnarray}
where we have defined $\beta \equiv \frac{\epsilon}{\rho_n}$ and used $\beta \leq 1/2$.

The second term in Eq.~(\ref{eqn:adHInperp}) can be bounded as
\begin{eqnarray}
\|\ad_H(I_{>n})\|_\text{F} &\leq& \sum_{m=n+1}^\infty \epsilon^m \|\ad_H(I_m)\|_1 \nonumber \\
&\leq& \sum_{m=n+1}^\infty \epsilon^m 2(m + 2) \|H\|_1 \|I_m\|_1 \nonumber \\
&\leq& \sum_{m=n+1}^\infty \epsilon^m 2(m + 2) \sqrt{30} \sqrt{1 + \epsilon^2} \Gamma \|I_m\|_1 \nonumber \\
&\leq& 4\sqrt{15} \Gamma^2 \sum_{m=n+1}^\infty (m + 2) \left(\frac{\epsilon}{\rho_n} \right)^m \nonumber \\
&=& 4\sqrt{15} \Gamma^2 \frac{(n+3) \beta^{n+1}}{(1 - \beta)^2} \left(1 - \beta \frac{n+2}{n+3} \right) \nonumber \\
&\leq& 16\sqrt{15} \Gamma^2 (n + 3) \beta^{n+1} ~,
\end{eqnarray}
where we have used $\|H\|_1 \leq \sqrt{30} \|H\|_\text{F} = \sqrt{30} \Gamma \sqrt{1 + \epsilon^2}$ and $\sqrt{1 + \epsilon^2} < \sqrt{2}$.

Combining the above two bounds and Lemma~\ref{lem:normInperp}, we have
\begin{eqnarray}
\frac{\|\ad_H(I^{(n)\perp})\|_\text{F}^2}{\|I^{(n)\perp}\|_\text{F}^2} &\leq& 
\frac{\left[ \Gamma^2 (a n + b) \left(\frac{\epsilon}{\rho_n} \right)^{n+1} \right]^2}{\alpha \epsilon^2 \Gamma^2 + \Gamma^2 \mathcal{O}(n^6 \epsilon^3)} \nonumber \\
&=& \Gamma^2 \mathcal{O}\left(n^{4n+6} \epsilon^{2n} \right) ~,
\end{eqnarray}
where $a = 8 + 16\sqrt{15}$, $b = 16 + 48\sqrt{15}$.
\end{proof}

\section{Better bounds on $\|V_m\|$ and the convergence radius using numerical experiments}\label{app:conv_radius}

\begin{figure}
\includegraphics[width=1.0\columnwidth]{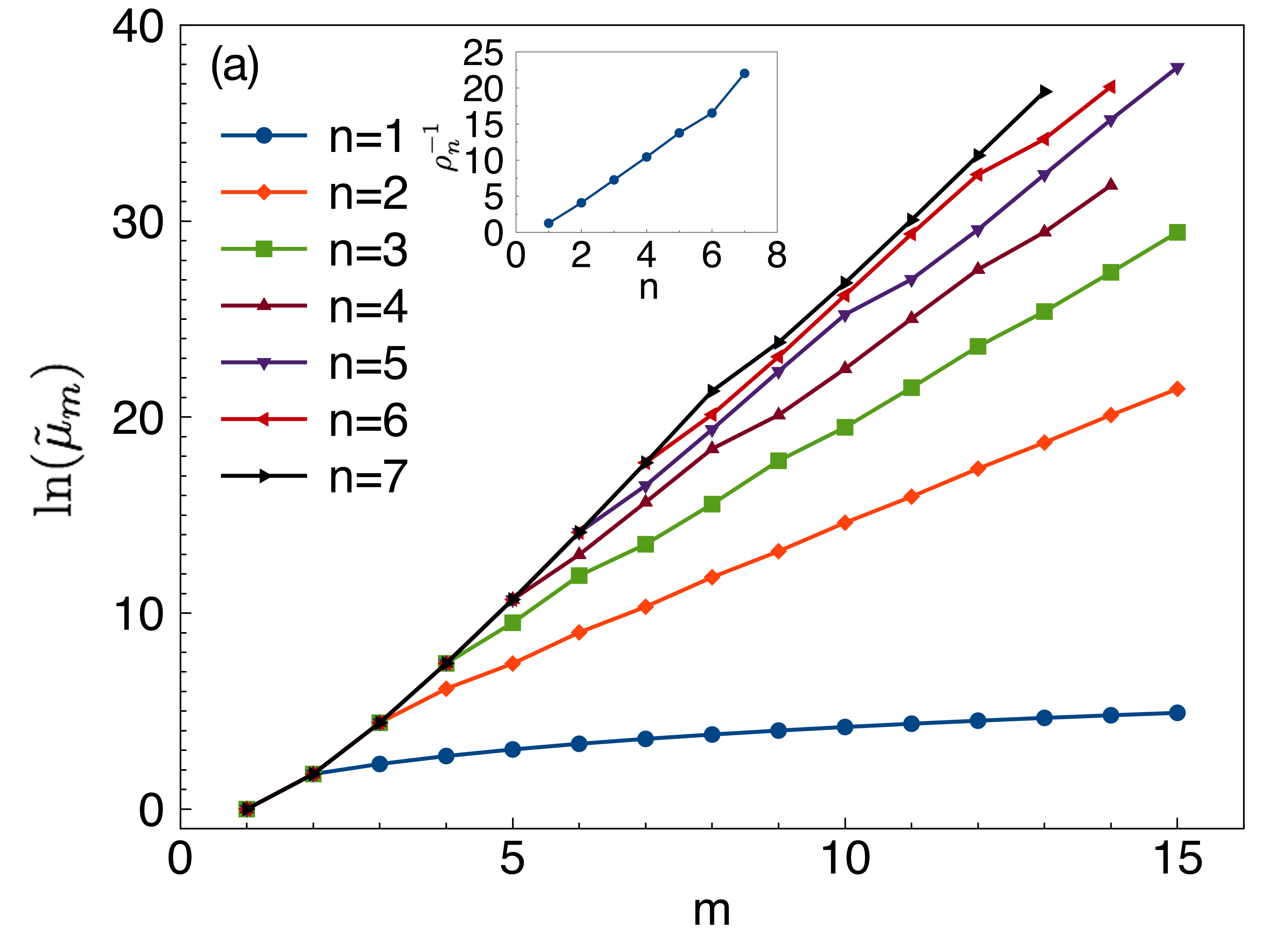}
\includegraphics[width=1.0\columnwidth]{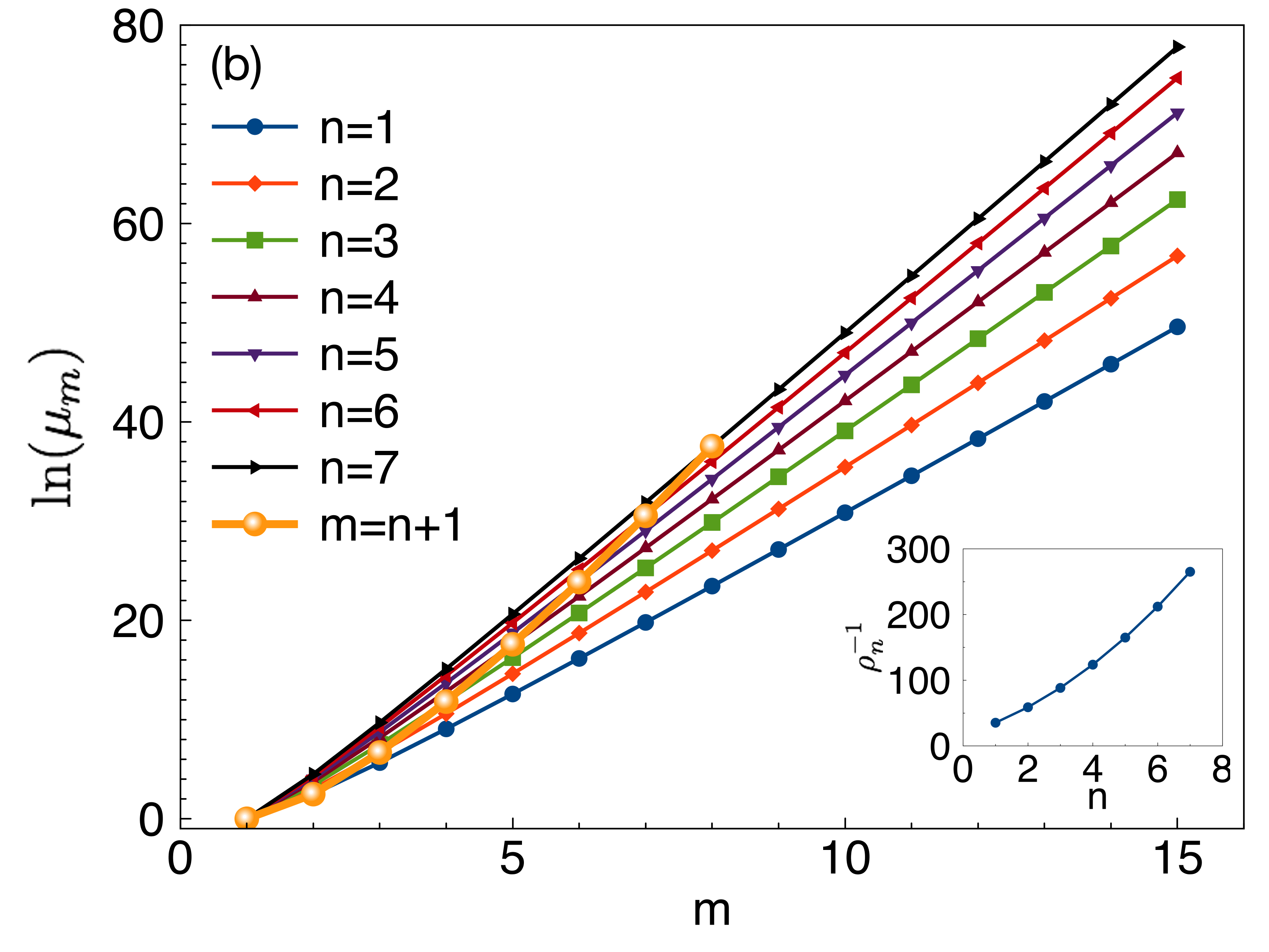}
\caption{\label{fig:conv_radius}
(color online) Numerical calculations of the iterative bounds on $\|V_m\|_1$: (a) $\tilde{\mu}_m$ generated by Eq.~(\ref{eqn:tildemum}) and (b) $\mu_m$ generated by Eq.~(\ref{eqn:mum}), for different SW order $n$.
For convenience, the one-norms of $\|H_0\|_1$ and $\|T\|_1$ are taken to be one, which does not affect the functional dependence of the convergence radius $\rho_n$ on $n$.
The curve $m = n \!+\! 1$ in (b) denotes the bound on the infinite-SW $\|V_m\|_1$ since $V_m$ does not depend on $n$ once $n \geq m \!-\! 1$.
Insets: the inverse convergence radius $\rho_n^{-1}$ as a function of $n$.
By assuming $\tilde{\mu}_m = A_n (\rho_n)^{-m}$, or $\ln(\tilde{\mu}_m) = \ln A_n - m \ln(\rho_n)$ for $m > n$, we can extract $\ln(1/\rho_n)$ from the slope of $\ln(\tilde{\mu_m})$ vs $m$ and plot $\rho_n^{-1}$ in the inset.
For $\rho_n$ extracted from $\tilde{\mu}_m$, we suspect $\rho_n^{-1} \sim n$; while for $\mu_m$, we observe $\rho_n^{-1} \sim n^2$ as expected.}
\end{figure}

In Appendix~\ref{app:boundHn}, we estimated the convergence radius $\rho_n \sim 1/n^2$, which is a lower bound.
This would give the thermalization time scale to be $\mathcal{O}(\exp(A/\sqrt{\epsilon}))$, where $\epsilon$ is the perturbation strength.
In this Appendix, we demonstrate a numerical experiment to support the conjecture that a tighter bound $\rho_n \sim 1/n$ is possible.

Recall that when bounding $v_m$, following Ref.~\onlinecite{bravyi_schriefferwolff_2011}, we used a very crude bound of $(k_p + \dots k_1 + 1) \dots (k_1 + 1) \leq p! (n+1)^p$, see Eq.~(\ref{eqn:Abound}).
We suspect that this approximation, which allowed an analytical calculation of the numbers $\mu_m$ which bound $v_m$, Eq.~(\ref{eqn:mum}), is however too crude and changes the leading behavior of the convergence radius $\rho_n$.
If we do not make this approximation, we can define another set of numbers $\tilde{\mu}_m$ which bound $v_m$:
\begin{eqnarray}
\tilde{\mu}_m &\equiv& \sum_{p=2}^m \sum_{[k_1, \dots, k_p] = m} \mathfrak{f}(k_1, \dots, k_p) \nonumber \\ 
&& \times (k_p + \dots + k_1 + 1) \dots (k_1 + 1) \, \tilde{\mu}_{k_1} \dots \tilde{\mu}_{k_p} \nonumber \\
&+& \sum_{p=1}^{m-1} \sum_{[k_1, \dots, k_p] = m - 1} \mathfrak{f}(k_1, \dots, k_p) \nonumber \\
&& \times (k_p + \dots + k_1 + 2) \dots (k_1 + 2) \, \tilde{\mu}_{k_1} \dots \tilde{\mu}_{k_p} ~, ~~~~~ \label{eqn:tildemum}
\end{eqnarray}
where we have assumed $\|H_0\|_1 = \|T\|_1 = \Gamma = 1$, without loss of generality.

Starting with $\tilde{\mu}_1 \equiv v_1 = 1$, we can iteratively calculate $\tilde{\mu}_m$ for a given $n$.
The results are shown in Fig.~\ref{fig:conv_radius}(a).
Recall that $V_m$ for $m \leq n + 1$ are already independent of $n$ (and can be viewed as representative of the infinite-order SW procedure), while $V_m$ for $m > n + 1$ describe formal expansion in powers of $\epsilon$ at fixed $n$ and form the ``remainder'' $H_{>n}$.
The same property is shared by $\tilde{\mu}_m$, i.e., $\tilde{\mu}_m$ for $m \leq n + 1$ are independent of $n$ and appear as the limiting curve in Fig.~\ref{fig:conv_radius}(a), while the data for $m > n$ determine convergence properties of the remainder $H_{>n}$.
For easy reference, we quote several numbers on the limiting curve, which are ``universal'' numbers under this bounding procedure: $\tilde{\mu}_2 = 6, \tilde{\mu}_3 = 82, \tilde{\mu}_4 = 1695, \tilde{\mu}_5 = 43995$, etc.
Focusing now on the remainder terms and assuming behavior $\tilde{\mu}_m = A_n (\rho_n)^{-m}$ for $m > n$, we can extract the convergence radius from the slope of $\ln(\tilde{\mu}_m) = \ln(A_n) - m \ln(\rho_n)$ vs $m$.
The inset shows the $n$ dependence of the inverse convergence radius $(\rho_n)^{-1}$, which in fact suggests $\rho_n^{-1} \sim n$.

As a comparison, in Fig.~\ref{fig:conv_radius}(b) we also show the same procedure applied to $\mu_m$, Eq.~(\ref{eqn:mum}), with the same normalization $\|H_0\|_1 = \|T\|_1 = \Gamma = 1$.
In this case, the inverse convergence radius $\rho_n^{-1}$ shows $n^2$ behavior, as expected from the analysis in App.~\ref{app:boundHn}.
Note that in this case we did not treat separately $m \leq n + 1 $ and $m > n + 1$, since we used the same $n$-dependent $c$ in the iteration equation for all $m$.
Of course, we know that $v_m$ no longer depends on $n$ for $m \leq n + 1$, and for each $m$ we could use $\mu_m$ from the smallest SW order $n$ satisfying this condition to bound such infinite-SW-order $v_m$;
these are indicated as ``$m = n + 1$'' curve in Fig.~\ref{fig:conv_radius}(b), and we expect such procedure to bound $v_m$ by $m^{2m}$.

To conclude, we thus suspect that the lower bound on the convergence radius can be possibly tighter than in App.~\ref{app:boundHn} and is tentatively $\rho_n \sim 1/n$, though we do not have a rigorous mathematical proof.
Related to this, the behavior of $\tilde{\mu}_m$ for $m \leq n + 1$, which bounds the infinite-SW-order $v_m$, appears to be $\ln(\tilde{\mu}_m) = m \ln(m)$ up to subdominant contributions, compared to $\ln(\mu_m) = 2 m \ln(m)$ [this could be crudely seen by noting that the vertical range in panel (a) in Fig.~\ref{fig:Vmnorm} is two times smaller than in panel (b)].

\section{Numerical results for $\|V_m\|$ in a generic model}
\label{app:Vmnorm}

\begin{figure}
\includegraphics[width=1.0\columnwidth]{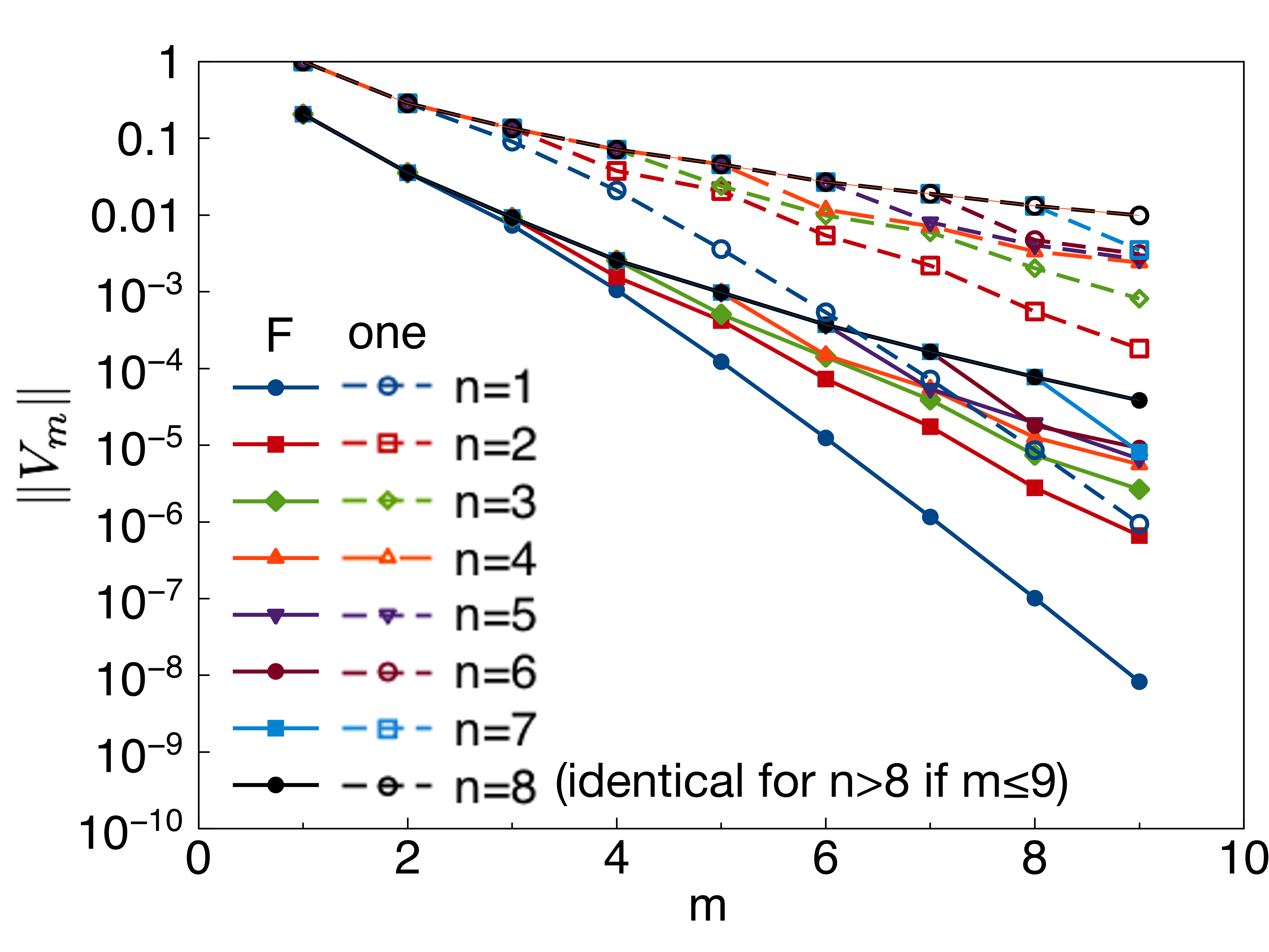}
\caption{\label{fig:Vmnorm}
(color online) Actual Frobenius norms and one-norms (denoted by ``F'' and ``one'' respectively) of operators $V_m$ that appear in the SW transformation, to be compared with bounds in Fig.~\ref{fig:conv_radius}.
The model is defined using $H_0$ in Eq.~(\ref{H0app}) with $\Gamma = 1$ and $T$ in Eqs.~(\ref{T+2})-(\ref{T0}) with $t_1 = -12/121$, $t_2 = 16/121$, $u_0 = 9/121$, and $w_0 = t_2$.
The parameters are chosen such that $\|H_0\|_1 = \|T\|_1 = 1$ and the ratios among $t_1$, $t_2$, $w_0$, $u_0$ corresponding to the case with $h = 1.5$ and $g = 2.0$ in the main text.
Note that the actual $\|V_m\|_\text{F}$ and $\|V_m\|_1$ are still decreasing for the accessible $m$, in stark contrast with the bounds that show very fast increase (at least $m^m$) starting already at $m = 1$.}
\end{figure}

In Sec.~\ref{subsec:SWproc}, we defined the local SW procedure to produce an effective Hamiltonian that commutes with $H_0$ up to order $n$.
The procedure gives ``potentials'' $V_m$ and the generator $i S_m$ is chosen to eliminate the off-diagonal part of $V_m$ for $m \leq n$.
In App.~\ref{app:boundHn}, we provided an analytical bound $\|V_m\|_\text{F} \leq \|V_m\|_1 \leq \Gamma (\rho_n)^{-m}$, where the inverse convergence radius grows as $1/\rho_n \sim n^2$.
These are bounds valid for all $m$ but are particularly used for $m > n$ bounding the terms in the remainder $H_{>n}$, while for $m \leq n$ where $V_m$ are already independent of $n$ we can bound $\|V_m\|_1 \leq \Gamma (\rho_m)^{-m}$.
From numerical experiments in App.~\ref{app:conv_radius} with more accurate bounds, we see that the bounds in App.~\ref{app:boundHn} are too crude and better bounds are possible, tentatively with $1/\rho_n \sim n$.

In this Appendix, we directly calculate $\|V_m\|_\text{F}$ and $\|V_m\|_1$, with no approximations, in a generic model to compare with these theoretical bounds.
All numerical results on the SW-generated quasi-conserved operators in the main text are also obtained with no approximations but contain all terms including all factors of $\epsilon^m$ summed up, while the purpose of this Appendix is to measure individual $V_m$ terms for direct comparisons with theoretical bounds.
Since $S_m$ is determined from $V_m$ by a relatively simple local rule and the structure of $I_m$ is similar to $V_m$, we expect the results for all these operators to be qualitatively similar and will focus on the potentials $V_m$.
Figure~\ref{fig:Vmnorm} shows the numerical values of $\|V_m\|_\text{F}$ and $\|V_m\|_1$, calculated for the SW-generated potentials for the model in App.~\ref{app:ladderformalism} taking $H_0$ in Eq.~(\ref{H0app}) with $\Gamma = 1$ and $T$ in Eqs.~(\ref{T+2})-(\ref{T0}) with $t_1 = -12/121$, $t_2 = 16/121$, $u_0 = 9/121$, and $w_0 = t_2$.
The parameters are chosen such that $\|H_0\|_1 = \|T\|_1 = 1$, while the ratios among $t_1$, $t_2$, $w_0$, $u_0$ are such that they correspond to the case with $J = 1$, $h = 1.5$, and $g = 2$ in the main text rotated to the new basis as described in App.~\ref{app:ladderformalism}; to directly connect with this data point in the main text, the appropriate $\epsilon$ is approximately $1.936$.

Recall that the $V_m$ generated by the SW procedure are independent of the perturbation parameter $\epsilon$ but contain all information needed for evaluating series for any $\epsilon$.
The above normalization of $H_0$ and $T$ is chosen such that we can directly compare with the numbers in App.~\ref{app:conv_radius}.
The best bounds in App.~\ref{app:conv_radius} are very quickly increasing already starting with $m = 1$, reaching values $e^{21} \sim 10^9$ already for $m = 8$, see top panel in Fig.~\ref{fig:Vmnorm} remembering that it plots logarithms of the bounds on $\|V_m\|_1$.
On the other hand, the actual values of $\|V_m\|_1$ are decreasing with $m$ for accessible $m$.
This suggests that even the best theoretical upper bound on $\|V_m\|_1$ is a vast overestimation.
In fact, taken at face value, the numerical results in Fig.~\ref{fig:Vmnorm} might even suggests the possibility of convergence of the SW procedure in some models.
A more conservative view is that the actual $\|V_m\|_1$ will eventually start increasing for large enough $m$, and the initial decrease is due to the chosen normalization $\|H_0\|_1 = \|T\|_1 = 1$ where the one-norm measure is somehow less fair between the 1-local and 2-local terms.
However, we emphasize that the bounds in App.~\ref{app:conv_radius} are obtained for exactly the same normalization and the comparison with the bounds in Fig.~\ref{fig:Vmnorm} is fair.
(We needed to use the one-norm in the theoretical bounds because we were not able to prove analogs of Props.~\ref{prop:commutator_norm} and \ref{prop:boundSm} for the Frobenius norm.)
The large difference between the actual norm and the theoretical bound starts already at $m = 2$, where we have verified by direct analytical calculation of the potential $V_2$ in Eq.~(\ref{eqn:V2app}) that $\|V_2\|_1 \approx 0.286$ while the bound $\tilde{\mu}_2 = 6$.

One likely source of the overestimation is that the theoretical bounds always replace the norm of a sum of a large number of terms by the sum of norms of the terms, while there can be many cancellations among the terms.
More specifically, we can trace the faster-than-exponential growth of the bounds $\tilde{\mu}_m$ to factors $(k_p + \dots k_1 + 1) \dots (k_1 + 1)$ in the second line of Eq.~(\ref{eqn:tildemum}) and $(k_p + \dots k_1 + 2) \dots (k_1 + 2)$ in the fourth line of Eq.~(\ref{eqn:tildemum}), which in turn originate from the factor $r + s - 1$ in the bound in Prop.~\ref{prop:commutator_norm} for a commutator of an operator in $\mathcal{T}_r$ and an operator in $\mathcal{T}_s$.
Examining Eq.~(\ref{eqn:adUW}) and how it is used in the proof of Prop.~\ref{prop:commutator_norm}, we see that there are $2 (r + s - 1) \cdot 3 \cdot 4^{r-1} \cdot 3 \cdot 4^{s-1}$ terms that are being collected, while the number of basis states for writing out $\ad_U(W) \in {\cal T}_{r+s-1}$ is $3 \cdot 4^{r + s - 2}$.
(Here for simplicity we ignore generation of multiple strings from products $Q^{\mathbf{a}}_{j; r} Q^{\mathbf{b}}_{k; s}$.)
Thus, an amplitude for each basis state will have roughly $6 (r + s - 1)$ contributions.
If these contributions all came with the same sign, we would indeed obtain the bound in Prop.~\ref{prop:commutator_norm}.
However, different contributions can come with different signs depending on details of various commutators.
If these signs were uncorrelated, it would be natural to replace $6 (r + s - 1)$ by $\sqrt{6 (r + s - 1)}$ when estimating a typical amplitude in the operator string basis, and such a replacement could potentially bring the bound on the growth of $\|V_m\|_1$ from $m^m$ to a much slower $m^{m/2}$.
Thus, such cancellations, while still not preventing eventual thermalization, could potentially lead to 
parametrically longer relaxation times as a function of $\epsilon$.\cite{ftnote}

Interestingly, there can be additional suppression of the growth of the bounds $\tilde{\mu}_m$ when we consider more carefully the bound in Prop.~\ref{prop:boundSm}.
Indeed, the denominator in Prop.~\ref{prop:boundSm} represents the smallest possible energy difference between the energy sectors of $H_0$.
However, at $m$-th order, $V_m$ consists of pieces that have $m$ of elementary (i.e., from the bare perturbation $T$) raising or lowering steps on the $H_0$ sector label.
We may then guess that a typical term in $V_m$ would be raising or lowering the $H_0$ sector label by roughly $\sqrt{m}$, so for estimating a typical contribution we could replace the denominator $2\Gamma$ in Prop.~\ref{prop:boundSm} with $2\Gamma \sqrt{m}$.
However, we caution that the discussed cancellations and suppressions compared to the earlier bounds implicitly assume lack of structure among the various complicated terms, hence random-walk-type estimates.
If there is a structure that would lead to some sign or magnitude bias among the terms, this could possibly arrest the discussed suppressions.
Our numerical experiment in Fig.~\ref{fig:Vmnorm} where we have not seen faster-than-exponential growth yet, together with the speculative arguments above, suggest that the convergence of the SW procedure is an open question worth further explorations.
Even if eventually the convergence radius vanishes, we clearly expect strong quantitative and perhaps qualitative modifications of how this happens, which would also have implications for estimates of the relaxation times.

%

\end{document}